\newcommand{\beq}{\begin{equation}}
\newcommand{\eeq}{\end{equation}}
\newcommand{\beqnl}{\begin{align}}
\newcommand{\eeqnl}{\end{align}}
\newcommand{\be}{\begin{equation}}
\newcommand{\ee}{\end{equation}}
\newcommand{\bes}{\begin{equation*}}
\newcommand{\ees}{\end{equation*}}
\newcommand\veclungo[1]{\overrightarrow{#1}}
\newcommand\cev[1]{\overleftarrow{#1}}
\newtheorem{Theorem}{Theorem}
\newtheorem{Definition}{Definition}
\newtheorem{Lemma}{Lemma}
\newtheorem{Example}{Example}
\begin{document}
\begin{titlepage}
\title{\textbf{Co-homology of Differential Forms and Feynman diagrams}}
\maketitle
\begin{center}
\author{Sergio L. Cacciatori $^{1,2}$ \and Maria Conti $^{1,2}$ \and Simone Trevisan $^{1,2}$}
\linebreak
\linebreak
\textsl{$^{1}$ \quad Dipartimento di Scienza ed Alta Tecnologia, Universit\`a degli Studi dell'Insubria;\\
$^{2}$ \quad INFN, Sezione di Milano, Via Celoria 16, 20133 Milano, Italy}
\abstract{In the present review we provide an extensive analysis of the intertwinement between Feynman integrals and cohomology theories in the light of the recent developments. Feynman integrals enter in several perturbative methods for solving
non linear PDE, starting from Quantum Field Theories and including General Relativity and Condensed Matter Physics. Precision calculations involve several loop integrals, onec strategy to address which is to bring them back in terms of linear 
combinations of a complete set of integrals (the Master Integrals). In this sense Feynman integrals can be thought as defining a sort of vector space to be decomposed in term of a basis. Such a task may be simpler if the vector space is endowed with a 
scalar product.
Recently, it has been discovered that, interpreting these spaces in terms of twisted cohomology, the role of a scalar product is played by intersection products. The present review is meant to provide the mathematical tools, usually familiar to mathematicians
but often not in the standard baggage of physicists, like singular, simplicial and intersection (co)homologies, hodge structures etc., apt to restate this strategy on precise mathematical grounds. It is thought to be both an introduction for beginners interested to 
the topic, as well as a general reference providing helpful tools for tackling the several still open problems. }
\end{center}
\end{titlepage}

\section{Introduction}

Feynman diagrams are introduced in the context of quantum interacting field theory, as a graphical representation of the solution of a system of first order differential equations, admitting a path-ordered exponential expression. Usually, the matrix of the system is composed of two terms: 
one identifying the solution in absence of interactions, {\it i.e.} the free solution; and a second one, carrying information on the interaction, treated as a perturbation to the free evolution, and characterized by the strength of the interaction, {\it i.e.} the coupling constant, considered as a small quantity. 
The perturbative expansion of the path-ordered exponential, obtained by a series expansion in the coupling constant, gives rise to the Dyson series, containing an infinite sequence of iterated integrals,
whose iteration number increases with the perturbative order: at any given order, hence for any given power of the coupling constant the integrands are formed by an ordered product of functions 
- to better say, distributions -, representing the free evolution (the propagators), and the insertion of interaction terms (the vertices). 

Dyson series can be used to describe the evolution of physical systems whose dynamics follows Volterra-type model, within quantum as well as classical physics. Therefore, the predictive power of a theoretical model, aiming at describing the dynamics of physical systems, on a wide spectrum of physical scales, being them either so microscopic as colliding elementary particles, or as macroscopic as coalescing astrophysical binary systems, may
depend on our ability of evaluating Feynman integrals, alias solving systems of differential equations. 

Hamiltonian and Lagrangian carry information about the free and the interactive dynamics, and the basic rules to build Feynman graphs can systematically be derived from them. In particular, the interaction between two elementary entities, experiencing the {\it presence} of each other through the mediation of a third entity, can be described like a scattering event. Therefore, quantities like the impact parameter, the cross section, 
the scattering angle, or the interaction potential turn out to be related to the scattering amplitude, which ultimately admits a representation in terms of Feynman graphs. 

"Perturbation theory means Feynman diagrams" \cite{Veltman:1994wz}, yet the diagrammatic approach is not limited to the perturbative regime:
"Perturbation theory is a very useful device to discover very useful equations and properties that may hold true even if the perturbation expansion fails" \cite{tHooft:1973wag}. 

More modern approaches based on analyticity and unitarity, so called on-shell and unitarity-based methods \cite{Bern:1994cg,Bern:1994zx,Britto:2004nc,Britto:2005ha}, make use of the factorization 
properties of scattering amplitudes (exposed by using complex variables, to build suitable combinations of energy and momenta of the interacting objects) in order to group more efficiently the contributing Feynman diagrams, and exploit recursive patterns, hard to identify within the pure diagrammatic approach. In this case, the symmetries, which do not necessarily hold for the individual diagrams, and which are inherited from the lower-order amplitudes, yield novel representations of the scattering amplitudes (see f.i. \cite{Cachazo:2013gna,Bern:2019prr}).

Therefore, scattering amplitudes, at any given order in perturbation theory, can be canonically built out of linear combination of Feynman graphs, and equivalently out of products (convolutions) of lower order amplitudes. 
Independently of the strategy adopted for their generation, the evaluation of scattering amplitudes beyond the tree-level approximation requires the evaluation of multivariate Feynman integrals.

Dimensional regularization played a crucial role in the formal mathematical developments of gauge theories and of Feynman integrals. 
Exploiting the analytic continuation in the space-time dimensions $d$ of the interacting fields, it is possible to modify the number of integration variables in order to {\it stabilize} otherwise ill-defined (mathematically non existing) 
integrals emerging in the evaluation of quantities which ultimately have to be compared with numbers coming from (physically existing) experiments.

Within the dimensional regularization scheme, Feynman integrals, are not independent functions.
They obey relations that can be established at the integrand level, namely among the integrands related to different graphs, systematized in the so called integrand decomposition method for scattering amplitudes~\cite{Ossola:2006us,Ellis:2007br,Ellis:2008ir,Mastrolia:2012bu,Zhang:2012ce,Mastrolia:2012an,Mastrolia:2011pr,Badger:2013gxa}, as well as relation that hold, instead, just upon integration. The latter are 
contiguity relations known as integration-by-parts (IBP) identities \cite{Chetyrkin:1981qh}, 
which play a crucial role in the evaluation of scattering amplitudes beyond the tree-level approximation. 
Process by process, IBP identities yield the identification of an elementary set of integrals, the so-called {\it master integrals} (MIs), which can be used as a basis for the decomposition of multi-loop amplitudes \cite{Laporta:1996mq}. 
MIs are special integrals, as elementary Feynman integrals that admit a graphical representation (in terms of products of scalar propagators and scalar interaction vertices).
At the same time, IBP relations can be used to derive differential equations \cite{Barucchi:1973zm,KOTIKOV1991158,Bern:1993kr,Remiddi:1997ny,Gehrmann:1999as,Henn:2013pwa,Argeri:2014qva,Adams:2017tga}, 
finite difference equations \cite{Laporta:2001dd,Laporta:2003jz}, 
and dimensional recurrence relations \cite{Tarasov:1996br,Lee:2009dh} 
obeyed by MIs. The solutions of those equations are valuable methods for the evaluation of MIs, for those cases where their direct integration might turn out to be prohibitive
(see f.i. \cite{Argeri:2007up,Henn:2014qga,Kalmykov:2020cqz}).

The study of Feynman integrals, the systems of differential equations they obey, the iterated integral representation of their solution
~\cite{Chen:1977oja,Goncharov:1998kja,Remiddi:1999ew}
has been stimulating a vivid interplay and renovated interests between field theoretical concepts and formal mathematical ideas in Combinatorics, Number Theory, Differential and Algebraic Geometry, and Topology 
(see, f.i. \cite{Broadhurst:1996kc,Broadhurst:2000em,Bloch:2005bh,Bogner:2007mn,Brown:2009ta,Marcolli:2009zy,Arkani-Hamed:2016byb,Arkani-Hamed:2017tmz,Mizera:2017cqs,Mizera:2017rqa,Broedel:2018qkq,Henn:2020lye,Sturmfels:2020mpv}).

The geometric origin of the analytic properties of Feynman integrals finds its roots in the application of topology to the S-matrix theory
\cite{hwa1966homology,Pham:1965zz,Lefschetz:1975ta}. 
In more recent studies, co-homology played an important role for identifying relations among Feynman integrals and to expose deeper properties of scattering amplitudes 
\cite{Bloch:2005bh,Brown:2009ta,Lee:2013hzt,Broadhurst:2016hbq,Broadhurst:2016myo,Broadhurst:2018tey,Mizera:2017cqs,Mizera:2017rqa,Abreu:2019wzk,Abreu:2019xep,Mastrolia:2018uzb,Mizera:2019gea,Frellesvig:2019kgj,Mizera:2019vvs,Frellesvig:2019uqt,Mizera:2020wdt,Frellesvig:2020qot,Weinzierl:2020xyy,Kaderli:2019dny,Kalyanapuram:2020vil,Weinzierl:2020nhw,fresn2020quadratic,fresn2020quadratic2,Chen:2020uyk,Britto:2021prf}

In this editorial, we elaborate on the recently understood vector space structure of Feynman integrals 
\cite{Mastrolia:2018uzb,Frellesvig:2019kgj,Mizera:2019gea,Mizera:2019vvs,Frellesvig:2019uqt,Mizera:2020wdt,Frellesvig:2020qot,Weinzierl:2020xyy} 
and the role played by the intersection theory for twisted de Rham (co)-homology to access it. 

As observed in ref. \cite{Mastrolia:2018uzb}, after looking at Feynman integrals as generating a vector space, one can se {\it intersection numbers} of differential forms \cite{cho1995} as a sort of scalar product over it. From this viewpoint, the intersection
products with a basis of MIs mimic the projection of a vector into a basis. For example, using intersection projections for 1-forms, applied to integral representations of the Lauricella $F_{D}$ functions allowed to easily re-derive continuity relations for 
such functions and the decomposition in terms of MIs for those Feynman integrals on maximal cuts that admit a representation as a one-fold integral \cite{Mastrolia:2018uzb,Frellesvig:2019kgj}.
For more general cases, when one has to deal with multifold integral representations \cite{Frellesvig:2019kgj,Frellesvig:2019uqt,Frellesvig:2020qot}, it have been introduced the {\it multivariate} intersection 
numbers  \cite{matsumoto1994,matsumoto1998,OST2003,doi:10.1142/S0129167X13500948,goto2015,goto2015b,Yoshiaki-GOTO2015203,Mizera:2017rqa,matsubaraheo2019algorithm}.
For the case of meromorphic $n$-forms, an iterative method for the determination of intersection numbers was proposed in \cite{Mizera:2019gea} and 
successively refined in \cite{Frellesvig:2019uqt,Frellesvig:2020qot,Weinzierl:2020xyy}. The only simple case is for logarithmic (dlog) differential forms, which bring simple poles only, whose intersection numbers can be computed by employing the global 
residue theorem \cite{Weinzierl:2020xyy}. 

Within this approach, the number of MIs, proven to be finite \cite{Smirnov:2010hn}, is the dimension of the vector space of Feynman integrals \cite{Frellesvig:2019uqt}, and corresponds to the dimension of the homology groups \cite{Lee:2013hzt}, or equivalently of the cohomology group~\cite{Mastrolia:2018uzb,Frellesvig:2019kgj,Frellesvig:2019uqt,Frellesvig:2020qot}, 
and can be related to topological quantities such as the number of critical points \cite{Lee:2013hzt}, Euler characteristics \cite{Aluffi2009FeynmanMA,Aluffi:2011,Bitoun:2017nre,Bitoun:2018afx}, as well as to the dimension of quotient rings of polynomials, for zero dimensional ideals, in the context of computational algebraic geometry~\cite{Frellesvig:2020qot}.

Another interesting consequence of intersection number is about their underlying geometrical meaning, which leads to determining linear relations, equivalent to IBP relations, and quadratic relations for Feynman integrals 
\cite{Mastrolia:2018uzb,Frellesvig:2019kgj,Frellesvig:2019uqt,Frellesvig:2020qot}, colled {\it twisted Riemann periods relations} (TRPR) \cite{cho1995} since represent a twisted version of the well-known bilinear Riemann relations. 
Some of such quadratic relations were already noticed by using number-theoretic methods to Feynman calculus, and given rise to conjectures \cite{Broadhurst:2016hbq, Broadhurst:2016myo,Broadhurst:2018tey,Zhou:2017vhw,Zhou:2017jnm}, 
whose proof has been given only recently \cite{fresn2020quadratic,fresn2020quadratic2}, while other bilinear relations, proposed in \cite{Lee:2018jsw}, have yet to be understood in the light of the TRPR. 

As stated in advance, in this work we first concentrate on the basic aspects leading to the definition of a vector space of Feynman integrals; the we move onto the mathematical description ( starting from an elementary point of view) of the instruments needed in order to tackle the geometrization program just described. Finally, we devote some time to address the problem of the actual computation of intersection numbers. More precisely, in Section \ref{secbaikov} we mostly describe the Baikov representation of Feynman integrals and its role in uncovering the underlying cohomological structure, while in Section \ref{secnumberMI} we consider the vector structure of Feynman integrals - including bilinear identities - and also precisely describe how the number of MIs can be computed. In Section \ref{seccohomology} we provide an elementary illustration of cohomologies, while in Section \ref{HardMaths} we highlight the link between cohomology theory and integration theory. In Section \ref{veryhardmaths} we give an extensive lookout at the advanced mathematical constructions behind. Finally, in Section \ref{secinternumb} we make some explicit examples of practical techniques adopted to compute intersection numbers. The three appendices include some technical details.

\section{Feynman integral representation}\label{secbaikov}

We aim at describing the properties of (scalar) Feynman integrals, representing the most general type of integrals appearing 
in the evaluation of Scattering Amplitudes, left over after carrying out the spinor and the Lorentz algebra (spinor-helicity decomposition, Dirac-Clifford gamma algebra, form factor decomposition), generically indicated as,

\begin{equation}
I_{\nu_1,\cdots,\nu_N}=\int\prod_{i=1}^L \frac{d^Dq_i}{\pi^{D/2}}\prod_{a=1}^N \frac{1}{D_a^{\nu_a}}\label{feynman}\, .
\end{equation}

In the classical literature, the evaluation of Feynman integrals is carried out by direct integration, in position and/or momentum-space representation, making use of Feynman, or equivalently Schwinger, parameters. More advanced methods make use of differential equations, as an alternative computational strategy, which turns out to be very useful whenever the direct integration becomes prohibitive, for instance, due to the number of the physical scales in the scattering reaction (number of particles and/or masses of particles). 
In this work, we would like to approach the multi-loop Feynman calculus in a different fashion from the direct integration,
making use of novel properties that emerge when Feynman integrals are cast in suitable parametric representation, such as 
the so called \textsl{Baikov representation} \cite{Baikov:1996iu} (see also \cite{Grozin:2011mt} for review).
First, we observe that the integration variables involved in the integral \eqref{feynman} are the usual $L$ loop momenta $q_i$, which are not Lorentz invariants. Baikov representation consists in a change of variables in which the new integration variables are actually Lorentz invariants: namely, the independent scalar products one can build using the $L$ loop momenta $q_i$ and $E$ independent external momenta $p_j$. Using these ideas, one can put the Feynman Integrals in the following form, called Baikov representation:
\begin{equation}
I_{\nu_1,\cdots,\nu_M}=K\int_\Gamma B^\gamma \prod_{a=1}^M
\frac{dz_a}{z_a^{\nu_a}}\,.\label{baikov}
\end{equation}
For a proof, see Appendix \ref{AppBaikov}.

\

Before getting more into depth of the meaning behind Eq.~\eqref{baikov}, an observation is necessary. By comparing the original integral with Eq.~(\ref{baikov}), one observes that the number of integration variables changes from $LD$ to $M$. When we perform the projection \eqref{decomp} of each 4-momentum onto the space generated by the vectors coming next, it is actually clear that this process cannot continue indefinitely, as all the vectors are certainly not independent if they lie in the physical 4D space. The decomposition we describe in Eq.~\eqref{decomp} is to be thought in an abstract (sufficiently large) dimension $D$. Since the final expression is an analytic function of $D$, we get the physically meaningful result via an analytic continuation down to $D=4$. This discussion can be also summarised by saying that we are implicitly using dimensional regularization to make sense of the expression (\ref{feynman}), which is obviously divergent in $D=4$.

\

The representation in Eq.~\eqref{baikov} highlights new properties of the original integral \eqref{feynman}, and allows to study its topological structure as Aomoto-Gel'fand integral \cite{Mastrolia:2018uzb}. In fact, extending the integral in  Eq.~\eqref{baikov} to the complex space, it takes the form
\begin{equation}
I=K\int_C u(\vec{z})\phi(\vec{z})\,,\label{I}
\end{equation}
where $K$ is constant prefactor, $u=B^\gamma$ is a multivalued function such that $u(\partial C)=0$ and $\phi$ is an $M$-form
\begin{equation}
\phi\equiv\hat{\phi}d^Mz=\frac{dz_1\wedge\cdots\wedge dz_M}{z_1^{\nu_1}\cdots z_M^{\nu_M}}\,.
\end{equation}
Because of the Stokes theorem, given a certain $(M-1)$-form $\xi$ the following identity holds:
\begin{equation}
\int_Cd(u\xi)=\int_{\partial C}u\xi=0\,,\label{stokes}
\end{equation}
as $u\xi$ is integrated along $\partial C$ where $u$ vanishes. Eq.~\eqref{stokes} can also be rewritten as
\begin{equation}
\int_Cd(u\xi)=\int_C(du\wedge \xi+ud\xi)=\int_C u(\underbrace{d\log u}_{\omega}\wedge+d)\xi\equiv\int_C u\nabla_\omega\xi=0\label{stokes2}.
\end{equation}
Eq.~\eqref{stokes2} states that, because of the introduction of the connection $\nabla_\omega=d+\omega\wedge$ where $\omega\equiv d\log u$, it holds
\begin{equation}
\int_C u\phi=\int_Cu(\phi+\nabla_\omega\xi)\,,\label{equiv}
\end{equation}
as the second term in the right side gives a null contribution. Eq.~\eqref{equiv} identifies an equivalence class, addressed as
\begin{equation}
{}_\omega{\bra{\phi}}:=\,\,\,\phi\sim\phi+\nabla_\omega\xi\,.
\end{equation}

This equivalence class defines \textsl{twisted cohomologies} (twisted because the derivative involved is not simply $d$ as in the de Rham cohomology but it is the covariant derivative $\nabla_\omega$). Representatives of a class are called 
\textsl{twisted cocycles}. 

In this fashion, the integral $I$,
\begin{equation}
I=\int_C u\phi 
\label{eq:def:integralpairinggeneric}
\end{equation}
arises as a {\it pairing} between the twisted cycle $|C]$ and the twisted cocycle $\bra{\phi}$.
For notation ease, the subscript $\omega$ is understood, and restored when needed.

\
Aomoto-Gel'fand integrals admit linear and quadratic relations which can be used to simplify the evaluation of scattering amplitudes. In particular, linear relations can be exploited to express any integral as linear combination of an independent set of functions, called {\it master integrals} (MIs). The decomposition of Feynman integrals in terms of MIs was proposed in \cite{Chetyrkin:1981qh} and later systematised in \cite{Laporta:1996mq}, and represents the most common computational technique for addressing multi-loop calculus nowadays. The novel insights we elaborate on in this work allow to explore the underpinning vector space structure obeyed by Aomoto-Gel'fand integrals, in order to investigate the properties of Feynman integrals making use of co-homological techniques.
Accordingly, the decomposition of any generic integral $I=\int_C u\phi \equiv \bra{\phi}C]$ in terms of a basis of MIs, 
say $J_i$, can be achieved in a twofold approach:
\begin{eqnarray}
I
 = \sum_{i = 1}^\nu c_i \int_C u \, e_i = \sum_{i = 1}^\nu c_i \, J_i
\end{eqnarray}
or 
\begin{eqnarray}
I
 = \sum_{i = 1}^\nu c_i \int_{C_i} u \, \phi = \sum_{i = 1}^\nu c_i \, J_i\,.
\end{eqnarray}
The former decomposition involves a basis formed by independent equivalence classes $\{e_i\}$ of the underlying twisted cohomology, while the latter, a basis formed by independent equivalence classes $\{C_i\}$ of the twisted homology. Remarkably the dimension of the twisted homology and co-homology spaces is the same.

%
%
%
%
%
%

\

Let us finally remark that although Baikov representation turned out to be useful to uncover the cohomological structure of Feynman integrals \cite{Mastrolia:2018uzb}, there is no commitment in using it necessarily. 
Other parametric representations, such as Feynman-Schwinger \cite{Bogner:2010kv}, 
$n$-dimensional polar coordinates \cite{Mastrolia:2016dhn},
and Lee-Pomeransky representation \cite{Lee:2013hzt}, 
to name a few,
can be equivalently used \cite{Mizera:2020wdt}. 
In fact, the integral in \eqref{feynman} can be cast in the form \cite{Lee:2013hzt}
\begin{equation}
I_{\nu_1,\cdots,\nu_N}=\frac{\Gamma(D/2)}{\Gamma((L+1)D/2-|n|)\prod_a \Gamma(\nu_a)}\int_0^\infty\cdots\int_0^\infty\frac{dz_a}{z_a^{1-\nu_a}}\mathcal{G}^{-D/2}\,,
\label{parametric}
\end{equation} 
with $|n|=\sum_a \nu_a$ and $\mathcal{G}=\mathcal{F}+\mathcal{U}$, where $\mathcal{F}$ and $\mathcal{U}$ are the Symanzik polynomials. The latter are defined as 
\begin{equation}
    \mathcal{F}=\det{A}C-(A^{\text{adj}})_{ij}B_iB_j\,, \quad \mathcal{U}=\det{A}\,,
\end{equation}
with $A$, $B$ and $C$ being the matrices that appear in the decomposition of the denominators, as 
\begin{equation}
    D_a=A_{a,ij}q_iq_j+2B_{a,ij}q_ip_j+C_a\,,
\end{equation}
where
\begin{equation}
    A_{ij}=\sum_a z_a A_{a,ij}\,,\quad B_i=\sum_a z_a B_{a,ij}p_j\,, \quad C=\sum_a z_a C_a\,.
\end{equation}

We observe that, although the integral in Eq.~(\ref{parametric}) has the structure of Aomoto-Gel'fand integrals Eq.~(\ref{eq:def:integralpairinggeneric}),
the polynomial $\mathcal{G}$ does not vanish on the boundary of the integration domain, therefore surface terms can emerge in in the (co)-homology decomposition. It turns out that these extra terms can be related to integrals belonging to simpler sectors, i.e. with fewer denominators than the ones in the original diagram.

\section{The twisted cohomology vector space}\label{secnumberMI}

\subsection{Vector-space structure}

To study the co-homology of dimensionally regulated Feynman integrals, we consider integrals of the form
\begin{equation}
I = \int_{\mathcal{C}_R}u(\mathbf{z}) \, \varphi_L (\mathbf{z}) = \langle \varphi_L| \mathcal{C}_R],
\label{eq:integral_definition}
\end{equation}
regarded as a pairing between $\langle \varphi_L|$ and the function $u(\mathbf{z})$, integrated over the contour $| \mathcal{C}_R]$.
In particular $u(\mathbf{z})$ is a multivalued function, $u(\mathbf{z})=\mathcal{B}(\mathbf{z})^{\gamma}$ (or $u(\mathbf{z})=\prod_{i}, \mathcal{B}_{i}(\mathbf{z})^{\gamma_i}$), with 
\begin{equation}
\mathcal{B}(\partial \mathcal{C}_R)=0 \ .
\label{eq:B_vanishing_on_parital_C}
\end{equation}
The pairing so defined is not strictly speaking a pairing of forms but of equivalence classes of $n$-forms, so that two differential forms in the same class differ by covariant derivative-terms whose contribution under integration over a contour vanishes.
Let us see how this works.

\subsection{Dual cohomology groups}

Let $\xi_L$ be an $(n{-}1)$-differential form and $\mathcal C_R$ an integration contour such that (\ref{eq:B_vanishing_on_parital_C}) holds true. Thus, we can use Stokes theorem to write
\begin{equation}
0=\int_{\partial \mathcal{C}_R}u \, \xi_L =\int_{\mathcal{C}_R}d(u \, \xi_L)= \int_{\mathcal{C}_R}u \left(\frac{du}{u} \wedge+ d \right) \xi_L= \int_{\mathcal{C}_R}u \, \nabla_{\omega} \, \xi_L
= \langle{ \nabla_\omega \xi_L}| {\cal C}_R] \ ,
\end{equation}
where
\begin{equation}
\nabla_{\omega}= d + \omega \wedge, \qquad \omega= d \log u.
\label{eq:nablasmallomegaplus}
\end{equation}
As a consequence we immediately get
\begin{equation}
\langle{\phi_L}|{\cal C}_R]
= \langle{\phi_L } |{\cal C}_R] + \langle{\nabla_{\omega} \xi_L } |{\cal C}_R]
= \langle{\phi_L +\nabla_{\omega} \xi_L } |{\cal C}_R] \ ,
\end{equation}
which meaans that the forms $\varphi_L$ and $\varphi_L+\nabla_{\omega} \xi_L$ give the same result upon integration, as stated above, and we can consider them as equivalent for the purpose of computing intersections. This is an equivalence relation
and we can say that the two forms are in the same equivalence class, writing
\begin{equation}
\phi_L \sim \phi_L+\nabla_{\omega} \xi_L \ .
\label{eq:equivalence_relation}
\end{equation}
Of course the equivalence classes of $n$-forms defined by the equivalence relation ($\ref{eq:equivalence_relation}$), that we will denote with $\langle \varphi_L|$, define a vector space, called the \emph{twisted cohomology group} $H_{\omega}^{n}$.

\

Analogously, every time two $n$-dimensional contours differ by a boundary, they have the same boundary so that if (\ref{eq:B_vanishing_on_parital_C}) holds for one of them it holds for the other one. Using again the Stokes theorem, it follows that when used as
integration contours for a closed $n$-form $u\phi_L$ they give the same result and, again, can be considered equivalent. We will indicate the equivalence classes of integration contours so obtained with $| \mathcal{C}_R]$. They define the 
\emph{twisted homology group} $H_{n}^{\omega}$, which is indeed an abelian group identifiable with a vector space after tensorization with $\mathbb R$.

\


Let us consider, now, what happens when using $u^{-1}$ in the integral definition, instead of $u$.
Let us define a \emph{dual} integral
\begin{equation}
\Tilde{I}=\int_{\mathcal{C}_L} u(\mathbf{z})^{-1} \, \phi_R(\mathbf{z}) 
= [ {\cal C}_L| \phi_R\rangle \ ,
\label{eq:dual_integral_definition}   \end{equation}
as a pairing between the dual twisted cycle $[{\cal C}_L|$ and the dual twisted cocycle $|\phi_R\rangle$.
It is clear by construction that what we did before can be repeated in the same way just replacing the covariant derivative with
\begin{equation}
\nabla_{-\omega}=d-\omega \wedge, \qquad \omega=d \log u.
\label{eq:nablasmallomegaminus}
\end{equation}
Then, we immediately get an equivalence relation for dual twisted cocycles, analogous to \eqref{eq:equivalence_relation},
\begin{equation}
\varphi_R \;\sim\; \varphi_R+ \nabla_{- \omega} \xi_R
\label{eq:dual_equivalence_relation}
\end{equation}
which defines equivalence classes denoted by $| \varphi_R \rangle$. These equivalence classes define the \emph{dual} vector space $(H^{n}_{\omega})^{\ast}=H^n_{- \omega}$. 

As mentioned earlier, an equivalence relation among dual integration contours may be also considered, yielding to 
the identification of the dual homology vector space $(H_{n}^{\omega})^{\ast}=H_{n}^{- \omega}$, referred to as the \emph{dual twisted homology group}, whose elements are denoted by $[ \mathcal{C}_{L}|$.\\ 

\subsection{Number of Master Integrals}

We have shown that we can recognize a vector space structure in Feynman integrals, so that a given integral may be expressed in some basis of the twisted cohomology: if one is able to compute the coefficients of the decomposition, which we discuss in Section \ref{secinternumb}, the computation will be reduced to the evaluation of some fixed Master Integrals. All this reasoning would be useless if the dimension of such vector space turned out to be infinite. Luckily, the number of the independent MIs, which we refer to as $\nu$, is known to be actually finite \cite{Smirnov:2010hn}. Moreover, it is known how to compute $\nu$ in practice, as we show in this section.

\

There are actually many interpretations of $\nu$: here we focus mainly on the interpretation given by Lee and Pomeransky \cite{Lee:2013hzt}. Firstly, $\nu$ is the number of the many independent equivalence classes in the associated cohomology group. Due to the Poincar\'e duality between cohomology and homology classes, it is equivalent to study the dimension of the homology space, which consists in counting how many non homotopically contractible closed paths exist in the space of integration contours. 

\

We consider a particularly simple case in order to have a clear understanding of the usual reasonings followed when counting independent cycles. Once having derived the associated Baikov representation \eqref{baikov}, it is appropriate to perform a maximal cut (for a more detailed discussion on cuts performed on Feynman integrals in the context of Baikov representation, we recommend to take a look at \cite{Lee:2012te} and \cite{Harley:2017qut}): out of the $M$ total denominators, we put to 0 the $N$ original ones corresponding to physical propagators. Of course, putting a propagator on mass shell is the same as performing a residue, which reduces the level of complexity of the integral. We consider the case where the integral along the maximal cut reduces to one with a single variable:
\begin{equation}
I=\int \frac{dz_1}{z_1^{\nu_1}} B(z_1)^\gamma.\label{inteasy}
\end{equation}
We stress that, by taking the maximal cut, all the original physical propagators were eliminated, so that in Eq.~\eqref{inteasy} the variable $z_1$ is related to one of the fake propagators introduced in Section~\ref{secbaikov}: hence we can consider $\nu_1<0$ so it does not introduce any additional singularities. When looking at how many types of non equivalent integration contours one can build, it is clear that the topology of the space must be taken into account. Suppose that $B(z_1)$ has $m$ distinct zeros: the power $\gamma$ introduces in the $z_1$ complex space $m$ cuts starting from each zero point and going to infinity. Supposing that the integrand is well-behaved at infinity (if this were not true, then the whole integral $I$ would not converge), we can connect at infinity the $m$ paths one can draw around the $m$ cuts. The resulting closed path is actually contractible in a single point, hence only $m-1$ paths are independent (Figure \ref{fig:closedpath}).\\
\noindent
Qualitatively, notice that if $m$ is the order of the polynomial $B(z_1)$, then $m-1$ is the order of the polynomial $\partial_{z_1} B$, hence it is related to the number of zeros of $\partial_{z_1} B$. Getting back to the notation
\begin{equation}
I=\int_C u\phi\,,\label{Iuphi}
\end{equation}
where $u=B^\gamma$, it is equivalent to the number of solutions of
\begin{equation}
\omega=d\log u=\gamma (\partial_{z_1}B/B) dz_1=0\,,\label{properzeros}
\end{equation}
called the number of \textsl{proper zeros}. Eq.~\eqref{properzeros} suggests a deep connection between the number $\nu$ of MIs and the number of critical points of $B$.

\begin{figure}[!htbp]
\begin{center}
\begin{tikzpicture}[>=latex,decoration={zigzag,amplitude=.5pt,segment length=2pt}]
\draw [ultra thick,->](-4,0) -- (4,0);
\draw [ultra thick,->](0,-4) -- (0,4);
\draw [ultra thick, dashed, green!40!black] (0,0) circle (3.5);
\filldraw [white] (-2.47487,2.47487) circle (0.15);
\filldraw [white] (1.4667,3.17786) circle (0.15);
\filldraw [white] (3.03885,1.73649) circle (0.15);
\filldraw [white] (1.26856,-3.26202) circle (0.15);
\filldraw [white] (-3.26202,-1.26856) circle (0.15);
\draw [thick, white!50!blue, decorate] (-1,1) -- (-2.47487,2.47487);
\draw [thick, white!50!blue, decorate] (0.6,1.3) -- (1.4667,3.17786);
\draw [thick, white!50!blue, decorate] (1.4,0.8) -- (3.03885,1.73649);
\draw [thick, white!50!blue, decorate] (0.7,-1.8) -- (1.26856,-3.26202);
\draw [thick, white!50!blue, decorate] (-1.8,-0.7) -- (-3.26202,-1.26856);
\draw [rotate=45,green!40!black,ultra thick] (-0.15,3.52) -- (-0.15,1.414) arc (180:360:0.15) -- (0.15,3.52);
\draw [rotate=-24.7751,green!40!black,ultra thick] (-0.15,3.52) -- (-0.15,1.43178) arc (180:360:0.15) -- (0.15,3.52);
\draw [rotate=-60.2551,green!40!black,ultra thick] (-0.15,3.52) -- (-0.15,1.61245) arc (180:360:0.15) -- (0.15,3.52);
\draw [rotate=21.2505-180,green!40!black,ultra thick] (-0.15,3.52) -- (-0.15,1.93132) arc (180:360:0.15) -- (0.15,3.52);
\draw [rotate=111.2505,green!40!black,ultra thick] (-0.15,3.52) -- (-0.15,1.93132) arc (180:360:0.15) -- (0.15,3.52);
\filldraw [red] (-1,1) circle (2pt);
\filldraw [red] (0.6,1.3) circle (2pt);
\filldraw [red] (1.4,0.8) circle (2pt);
\filldraw [red] (0.7,-1.8) circle (2pt);
\filldraw [red] (-1.8,-0.7) circle (2pt);
\node at (-0.5, 3.8) {\pmb {Im$(z)$}};
\node at (4, -0.3) {\pmb {Re$(z)$}};
\end{tikzpicture}
\caption{Complex plane with $m=5$ cuts (undulate blue curves). Each cut is encircled by a path going to infinity while never crossing any cut. Dashed green lines connect at infinity the full green lines and overall create a closed path which is clearly 
contractible in 0.}
\label{fig:closedpath}
\end{center}
\end{figure}
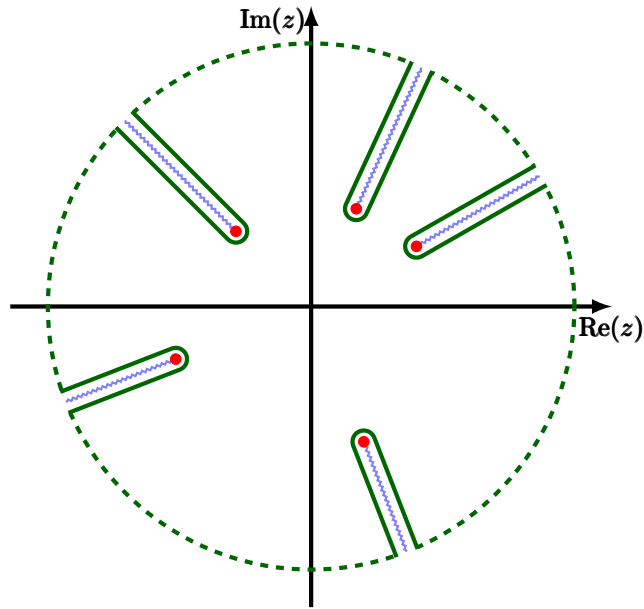

As shown more extensively in \cite{Lee:2013hzt}, this connection is actually much more general: given an integral of the form \eqref{Iuphi}, in which $\phi$ is a holomorphic $M$-form and $u$ is a multivalued function such that $u(\partial C)=0$ , then the number of Master Integrals is
\begin{equation}
\nu=\,\,\,\text{number of solutions of the system}\,\,\,
\begin{cases}
\omega_1=0\\
\vdots\\
\omega_n=0
\end{cases},
\end{equation}
where
\begin{equation}
\omega=d\log u(\vec{z})=\sum_{i=1}^n \partial_{z_i}\log u(\vec{z}) dz_i= \sum_{i=1}^n \omega_i dz_i.
\end{equation}
Summing up, the number $\nu$ of MIs, which is the dimension of both the cohomology and homology groups thanks to the Poincar\'e duality, is equivalent to the number of proper critical points of $B$, which solve $\omega=0$. We mention that $\nu$ is also related to another geometrical object: the Euler characteristic $\chi$ \cite{Lee:2013hzt}\cite{Bitoun:2018afx}. It is found that is linked to $\chi(P_\omega)$, where $P_\omega$ is a projective variety defined as the set of poles of $\omega$, through the relation \cite{Frellesvig:2019uqt}
\begin{equation}
\nu=\dim H^{n}_{\pm \omega}=(-1)^n\left(n+1-\chi(P_\omega)\right).
\end{equation}
While we do not delve into the details of this particular result, we highlight how, once again, $\nu$ relates the physical problem of solving a Feynman integral into a geometrical one.


\subsection{Linear and bilinear identities}

The bases of dual twisted cocycles $\{e_i \} \in H^n_\omega$ and $\{h_i\} \in H^n_{-\omega}$, as well as the bases of  
the dual twisted cycles $|{\cal C}_{R,i}] \in H_n^\omega$ and $[{\cal C}_{L,i}| \in H_n^{-\omega}$ can be used 
to express the identity operator in the respective vector spaces.
In particular, in the cohomology space, the identity resolution reads as,
\begin{eqnarray}
\sum_{i,j=1}^{\nu} | h_i \rangle \left( \mathbf{C}^{-1} \right)_{ij} \langle e_j | = \mathbb{I}_c 
\label{eq:identityexp}
\end{eqnarray}
where we defined the \emph{metric matrix}
\begin{equation}
\label{eq:metric_matrix}
\mathbf{C}_{ij}= \langle e_i | h_j \rangle \ ,
\end{equation}
whose elements are {\it intersection numbers} of the twisted basic forms.
We can do the same in the homology space, where the resolution of the identity takes the form
\begin{eqnarray}
\sum_{i,j=1}^{\nu} | \mathcal{C}_{R,i}]  \left( \mathbf{H}^{-1} \right)_{ij} [ \mathcal{C}_{L,j} | = \mathbb{I}_h \, ,
\end{eqnarray}
with the metric matrix now given by
\begin{equation}
\label{eq:metric_matrix:cycles}
\mathbf{H}_{ij} = [\mathcal{C}_{L,i} | \mathcal{C}_{R,j}] \ ,
\end{equation}
in terms of intersection numbers of the basic twisted cycles.

Linear and bilinear relations for Aomoto-Gel'fand-Feynman integrals, as well as the differential equations and the finite difference equation they obey are a consequence of the purely algebraic application of the identity operators defined above:
this is the simple observation made in \cite{Mastrolia:2018uzb}, yielding what we consider as the profound developments reached in the subsequent studies \cite{}.

In fact, the decomposition of any generic integral $I=\langle \varphi_L| \mathcal{C}_R]$ in terms of a set of $\nu$ MIs $J_i=\langle e_i | \mathcal{C}_R]$ reads as
\begin{equation}
I= \sum_{i=1}^{\nu} c_i \, J_i
\label{eq:decomposition_MIs}
\end{equation}
and can be understood as coming from the underpinning decomposition of differential forms, as
\begin{equation}
\langle \phi_L|= \sum_{i=1}^{\nu} c_i \, \langle e_i|\,,
\label{eq:decomposition_master_forms}
\end{equation}
(because the integration cycle is the same for all the integrals of eq.~\eqref{eq:decomposition_MIs}).
Likewise, the decomposition of a \emph{dual} integral $\Tilde{I}=[ \mathcal{C}_L| \varphi_R \rangle$ in terms of a set of $\nu$ \emph{dual} MIs $\Tilde{J}_i=[ \mathcal{C}_L| h_i \rangle$
\begin{equation}
\Tilde{I}= \sum_{i=1}^{\nu} \Tilde{c}_i \, \Tilde{J}_i
\end{equation}
becomes
\begin{equation}
| \phi_R \rangle = \sum_{i=1}^{\nu} \Tilde{c}_i \, | h_i \rangle.
\label{eq:decomposition_dual_master_forms}
\end{equation}
By applying the identity operator $\mathbb{I}_c $ to the forms $\langle \phi_L|$ and $| \phi_R \rangle$, 
\begin{eqnarray}
\langle \phi_L| &=& \langle \phi_L| \mathbb{I}_c 
= \sum_{i,j=1}^{\nu}  \langle \phi_L| h_j \rangle \left( \mathbf{C}^{-1} \right)_{ji} \langle e_i | \ , \\
| \phi_R \rangle &=& \mathbb{I}_c | \phi_R \rangle 
= 
\sum_{i,j=1}^{\nu} | h_i \rangle \left( \mathbf{C}^{-1} \right)_{ij} \langle e_j | \phi_R \rangle \ , 
\end{eqnarray}
the coefficients $c_i$, and $\Tilde{c}_i$ of the linear relations in Eqs.~(\ref{eq:decomposition_master_forms}),~(\ref{eq:decomposition_dual_master_forms}) read as
\begin{align}
c_i &= \sum_{j=1}^{\nu} \, \langle \phi_L | h_j \rangle \, \left( \mathbf{C}^{-1} \right)_{ji} \,, \label{eq:masterdeco} \\
\Tilde{c}_i & =\sum_{j=1}^{\nu} \, \left( \mathbf{C}^{-1} \right)_{ij} \, \langle e_j | \phi_R \rangle \, . \label{eq:masterdecodual}
\end{align}
The latter two formulas, dubbed \emph{master decomposition formulas} for twisted cycles \cite{Mastrolia:2018uzb,Frellesvig:2019kgj}, imply that the decomposition of any (dual) Aomoto-Gel'fand-Feynman 
integral can be expressed as linear combination of (dual) master integrals is an algebraic operation, similar to the decomposition/projection of any vector within a vector space, that can be executed by computing intersection numbers of twisted de Rham differential forms (cocycles).

Alternatively, by using the identity operator $\mathbb{I}_h$ in the homology space, one obtains the decomposition of (dual) twisted cycles $|{\cal C}_R|$ and $|{\cal C}_L|$ in terms of (dual) bases,
\begin{eqnarray}
            |{\cal C}_R]   =  \sum_i a_i \, | \mathcal{C}_{R,i}] \ , \qquad {\rm and} \qquad
\left[ {\cal C}_L| \right. =  \sum_i {\tilde a}_i \, [\mathcal{C}_{L,i}| \ ,
\end{eqnarray}
where
\begin{eqnarray}
a_i = \sum_{j=1}^{\nu} \left( \mathbf{H}^{-1} \right)_{ij} [ \mathcal{C}_{L,j} |{\cal C}_R]  
\ , \qquad {\rm and} \qquad
{\tilde a}_i = 
\sum_{i=1}^{\nu} [ {\cal C}_L | \mathcal{C}_{R,j}]  \left( \mathbf{H}^{-1} \right)_{ji} \ .
\label{eq:masterdeco:homology}
\end{eqnarray} 
They imply the decomposition of the (dual) integrals,
$I=\langle \varphi_L| \mathcal{C}_R]$ and $\Tilde{I}=[ \mathcal{C}_L| \varphi_R \rangle$, in terms of MIs, $J_i'$ 
and ${\tilde J}_i'$,
\begin{eqnarray}\label{Eq_decomposition}
I=\langle \varphi_L| \mathcal{C}_R] = \sum_{i=1}^\nu a_i \, J_i \ , 
\qquad {\rm and} \qquad
\Tilde{I}=[ \mathcal{C}_L| \varphi_R \rangle
= \sum_{i=1}^\nu {\tilde a}_i \, {\tilde J}_i \ ,
\end{eqnarray}
with
\begin{eqnarray}
         J_i' = \langle \varphi_L | \mathcal{C}_{R,i}] 
         \ , \qquad {\rm and} \qquad
{\tilde J}_i' = [ \mathcal{C}_{L,i} | \varphi_R \rangle \ ,
\end{eqnarray}
where the MIs are characterised by the independent elements of the homology bases.

In the above formulas, $\mathbf{C}$ and $\mathbf{H}$ are $(\nu \times \nu)$-matrices of intersection numbers, 
which, in general, differ from the identity matrix. 
For orthonormal intersections of  
basic cocycles, $\langle e_i|$ and dual-forms $|h_i\rangle$, 
and
basic cycles, $| \mathcal{C}_{R,i}]$ and their dual $[ \mathcal{C}_{L,i}|$, 
they turn into the unit matrix, hence simplifying the decomposition formulas in Eqs. (\ref{eq:masterdeco},\ref{eq:masterdecodual},\ref{eq:masterdeco:homology}).
As usual, one can use the orthonormalization Gram-Schmidt procedure in order to get an orthonormal basis with respect to the intersection product.
For 1-forms it is possible to construct orthonormal bases in a direct way starting from the expression of $\omega$ \cite{Frellesvig:2019kgj}.

We can now get the quadratic identities simply by inserting the identity operators $\mathbb{I}_c$ and $\mathbb{I}_h$ in the pairing between the twisted cocyles or cycles,
\begin{eqnarray}
\langle \phi_L | \phi_R \rangle &=& \sum_{i,j=1}^{\nu} \langle \phi_L | \mathcal{C}_{R,i}]  
\left( \mathbf{H}^{-1} \right)_{ij} [ \mathcal{C}_{L,j} | \phi_R \rangle \\
\left[ \mathcal{C}_{L} | \mathcal{C}_{R} \right] &=& 
\sum_{i,j=1}^{\nu} [ \mathcal{C}_{L} | h_i \rangle \left( \mathbf{C}^{-1} \right)_{ij} \langle e_j | \mathcal{C}_{R}] \, .
\end{eqnarray}
These are the {\it{Twisted Riemann's Period Relations}} (TRPR) \cite{cho1995}, see also Eq.~(\ref{TRPR}).

For applications of twisted intersection numbers to bilinear relations and to Gel'fand-Kapranov-Zelevinski systems see \cite{matsubaraheo2019euler,matsubaraheo2019algorithm,goto2020homology,Broadhurst:2016hbq, Broadhurst:2016myo,Broadhurst:2018tey,Zhou:2017vhw,Zhou:2017jnm,fresn2020quadratic,fresn2020quadratic2,Acres:2021sss}.

\section{Pictorial (co)homology}\label{seccohomology}
Before providing a geometric interpretation of Feynman-type integrals, we want to recall some fundamental facts of topology, homology and cohomology, only at an intuitive level, for those who need to refresh these concepts.
\subsection{The Euler characteristic}
One of the most known topological facts is probably the Euler theorem, \cite{SPHM_1982___8_A1_0}, relating the numbers of faces, edges, and vertices in a tessellation of a compact simply connected region of the plane, like an electricity grid or a Feynman diagram without external legs: 
the number $v$ of vertices minus the number $j$ of edges plus the number $f$ of faces is always equal to 1,
\begin{align}
 v-j+f=1,
\end{align}
independently on the details of the tessellation. This can be quite easily understood as follows. The rules of the game are: any face is simply connected and has a 1-dimensional boundary that is the union of edges and vertices (no lines or vertices are 
internal to a face); any edge is a simple line ending at two vertices (that may coincide). No vertices belong to an internal point of an edge. \\
With this rules, given a simply connected region, the simple tessellation we can do is with just 1 face (covering the whole region), 1 edge (the whole boundary cut in a point), 1 vertex (the cut point). Therefore, in this case, $v-j+f=1$. Now, suppose it has given
a tessellation. We can modify it by adding an edge. This can be done in four different ways:
\begin{itemize}
 \item the new edge starts and ends in already existing (and possibly coincident) vertices (Fig. \ref{fig:vertex-vertex}). In this case the new edge cuts a face in two and the new tessellation has $v'=v$, $j'=j+1$, $f'=f+1$ so that $v'-j'+f'=v-j+f$;
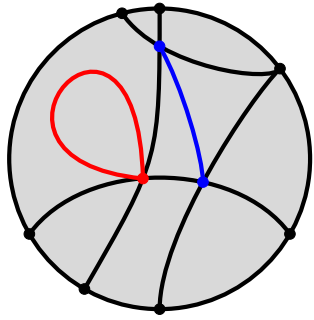
\begin{figure}[!h]
\begin{center}
\begin{tikzpicture}[>=latex,decoration={zigzag,amplitude=.5pt,segment length=2pt}]
\draw [ultra thick, fill=white!70!gray] (0,0) circle (2);
\draw [ultra thick] (0,2) .. controls (0,0) and (0,0) .. (-1,-1.732);
\draw [ultra thick] (-0.5,1.93649) .. controls (0,1.2) and (1.5,1) .. (1.6,1.2);
\draw [ultra thick] (1.6,1.2) .. controls (1.2,0.8) and (0,-1) .. (0,-2);
\draw [ultra thick] (-1.732,-1) .. controls (-1,0) and (1,0) .. (1.732,-1);
\draw [ultra thick,blue] (-0.001,1.497) .. controls (0.3,0.98) and (0.576,0) .. (0.576,-0.309);
\draw [ultra thick,red] (-0.222,-0.262) .. controls (-2.9,0) and (-0.3,2.8) .. (-0.222,-0.262);
\filldraw (0,2) circle (2pt);
\filldraw (-1,-1.732) circle (2pt);
\filldraw (-0.5,1.93649) circle (2pt);
\filldraw (1.6,1.2) circle (2pt);
\filldraw (0,-2) circle (2pt);
\filldraw (-1.732,-1) circle (2pt);
\filldraw (1.732,-1) circle (2pt);
\filldraw [blue](-0.001,1.497) circle (2pt);
\filldraw [red](-0.222,-0.262) circle (2pt);
\filldraw [blue](0.576,-0.309) circle (2pt);
\end{tikzpicture}
\caption{The red edge starts and ends at the same vertex, while the blue one belongs between two vertices.}
\label{fig:vertex-vertex}
\end{center}
\end{figure}
\item the new edge extends from an already existing vertex to a new vertex attached to an existing edge (Fig. \ref{fig:vertex-edge}). The new edge separates a face into two faces and the new vertex separates the old edge into two edges. Therefore, the new tessellation has
$v'=v+1$, $j'=j+2$, $f'=f+1$, so that $v'-j'+f'=v-j+f$;
\begin{figure}[!h]
\begin{center}
\begin{tikzpicture}[>=latex,decoration={zigzag,amplitude=.5pt,segment length=2pt}]
\draw [ultra thick, fill=white!70!gray] (0,0) circle (2);
\draw [ultra thick] (0,2) .. controls (0,0) and (0,0) .. (-1,-1.732);
\draw [ultra thick] (-0.5,1.93649) .. controls (0,1.2) and (1.5,1) .. (1.6,1.2);
\draw [ultra thick] (1.6,1.2) .. controls (1.2,0.8) and (0,-1) .. (0,-2);
\draw [ultra thick] (-1.732,-1) .. controls (-1,0) and (1,0) .. (1.732,-1);
\draw [ultra thick,red] (0.846,1.18) .. controls (0.7,0.98) and (0.576,0) .. (0.576,-0.309);
\draw [ultra thick,blue] (-0.031,0.559) .. controls (-1.9,-1) and (-2,2) .. (-0.031,0.559);
\filldraw (0,2) circle (2pt);
\filldraw (-1,-1.732) circle (2pt);
\filldraw (-0.5,1.93649) circle (2pt);
\filldraw (1.6,1.2) circle (2pt);
\filldraw (0,-2) circle (2pt);
\filldraw (-1.732,-1) circle (2pt);
\filldraw (1.732,-1) circle (2pt);
\filldraw (-0.001,1.497) circle (2pt);
\filldraw (-0.222,-0.262) circle (2pt);
\filldraw [red](0.576,-0.309) circle (2pt);
\filldraw [red](0.846,1.18) circle (2pt);
\filldraw [blue](-0.031,0.559) circle (2pt);
\end{tikzpicture}
\caption{The blue edge starts and ends at the same new vertex, while the red one starts from an old vertex and ends in a new one.}
\label{fig:vertex-edge}
\end{center}
\end{figure}
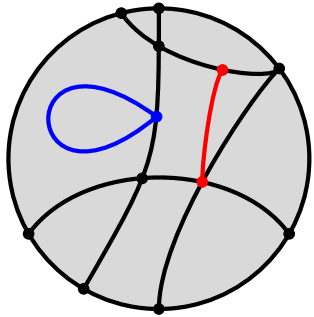
\item the new edge starts from and ends to a unique new vertex, inserted in an old edge (Fig. \ref{fig:vertex-edge}). The new edge cuts a face into two and the new vertex cuts the old edge into two. Therefore, the new tessellation has
$v'=v+1$, $j'=j+2$, $f'=f+1$, so that $v'-j'+f'=v-j+f$;
\item the new edge extends from a new vertex to a different new vertex attached to two different or the same existing edge (Fig. \ref{fig:edge-edge}). The new edge separates a face into two faces. The new vertices separate two edges into two parts each or a unique edge in three. 
Therefore, the new tessellation has $v'=v+2$, $j'=j+3$, $f'=f+1$, so that $v'-j'+f'=v-j+f$.
\begin{figure}[!h]
\begin{center}
\begin{tikzpicture}[>=latex,decoration={zigzag,amplitude=.5pt,segment length=2pt}]
\draw [ultra thick, fill=white!70!gray] (0,0) circle (2);
\draw [ultra thick] (0,2) .. controls (0,0) and (0,0) .. (-1,-1.732);
\draw [ultra thick] (-0.5,1.93649) .. controls (0,1.2) and (1.5,1) .. (1.6,1.2);
\draw [ultra thick] (1.6,1.2) .. controls (1.2,0.8) and (0,-1) .. (0,-2);
\draw [ultra thick] (-1.732,-1) .. controls (-1,0) and (1,0) .. (1.732,-1);
\draw [ultra thick,blue] (0.846,1.18) .. controls (0.7,0.98) and (0.176,0) .. ((0.178,-0.259);
\draw [ultra thick,red] (-0.1,0.1485) .. controls (-1.9,-1) and (-2,2) .. (-0.008,1.028);
\filldraw (0,2) circle (2pt);
\filldraw (-1,-1.732) circle (2pt);
\filldraw (-0.5,1.93649) circle (2pt);
\filldraw (1.6,1.2) circle (2pt);
\filldraw (0,-2) circle (2pt);
\filldraw (-1.732,-1) circle (2pt);
\filldraw (1.732,-1) circle (2pt);
\filldraw (-0.001,1.497) circle (2pt);
\filldraw (-0.222,-0.262) circle (2pt);
\filldraw (0.576,-0.309) circle (2pt);
\filldraw [blue](0.846,1.18) circle (2pt);
\filldraw [red](-0.1,0.1485) circle (2pt);
\filldraw [blue](0.178,-0.259) circle (2pt);
\filldraw [red](-0.008,1.028) circle (2pt);
\end{tikzpicture}
\caption{The red edge lies between two different new vertices on the same edge, while the blue one lies between two new vertices belonging on two different edges.}
\label{fig:edge-edge}
\end{center}
\end{figure}
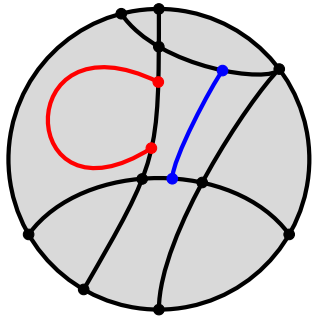
\end{itemize}
Therefore, we see that $\chi:=v-j+f$ is invariant, and since any tessellation can be obtained from the most elementary one by acting with these operations, we see that $\chi=1$. What happens if in place of a simply connected region we consider 
a region with one hole? We can again consider a tessellation for it and prove that $\chi=v-j+f$ is invariant. This follows from the fact that if we close the hole by adding the face corresponding to it, we get a simply connected region with
$f'=f+1$, $v'=v$, $j'=j$, so that 
\begin{align}
 1=v'-j'+f'=v-j+f+1, \quad \Rightarrow \quad \chi=v-j+f=0.
\end{align}
Changing the topology, the number $\chi$ is changed. From the same reasoning, we see that if we consider $k$ holes, then 
\begin{align}
 \chi=1-k.
\end{align}
Having understood the mechanism, we can go beyond and see that $\chi$ does not change if we deform the given region a bit, without changing its topology. For example, we can deform a simply connected region $S$ to cover part of a sphere. But if we close
this surface to cover the sphere, the topology changes and what we do to $(v,j,k)$ of the original piece of surface is to add a face so that for the sphere (Fig. \ref{fig:gluing-sphere})
\begin{align}
 \chi_{S^2}=\chi_S+1=2.
\end{align}
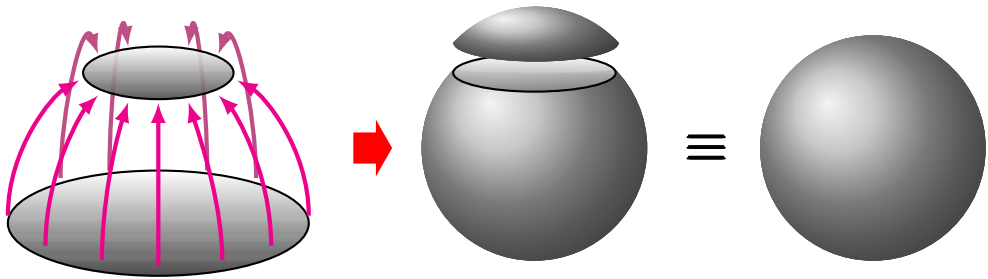
\begin{figure}[!htbp]
\begin{center}
\begin{tikzpicture}[>=latex,decoration={zigzag,amplitude=.5pt,segment length=2pt}]
\draw [ultra thick, color=magenta!75!black, ->] (-6.3,0.6) .. controls (-6.3,1.5) and (-6,2.8) .. (-5.8,2.25);
\draw [ultra thick, color=magenta!75!black, ->] (-3.7,0.6) .. controls (-3.7,1.5) and (-4,2.8) .. (-4.2,2.25);
\draw [ultra thick, color=magenta!75!black, ->] (-5.65,0.7) .. controls (-5.65,1.6) and (-5.5,3) .. (-5.4,2.35);
\draw [ultra thick, color=magenta!75!black, ->] (-4.35,0.7) .. controls (-4.35,1.6) and (-4.5,3) .. (-4.6,2.35);
\draw [thick, top color= white, bottom color=black!70!white] (-5,2) ellipse (1 and 0.35);
\draw [thick, top color= white, bottom color=black!70!white] (-5,0) ellipse (2 and 0.7);
\draw [ultra thick, magenta, ->] (-6.5,-0.3) .. controls (-6.5,0.7) and (-6,1.5) .. (-5.8,1.7);
\draw [ultra thick, magenta, ->] (-3.5,-0.3) .. controls (-3.5,0.7) and (-4,1.5) .. (-4.2,1.7);
\draw [ultra thick, magenta, ->] (-5,-0.57) -- (-5,1.6);
\draw [ultra thick, magenta, ->] (-5.75,-0.49) .. controls (-5.75,0.2) and (-5.6,1) .. (-5.4,1.61);
\draw [ultra thick, magenta, ->] (-4.15,-0.49) .. controls (-4.15,0.2) and (-4.4,1) .. (-4.6,1.61);
\draw [ultra thick, magenta, ->] (-7,0.1) .. controls (-7,1) and (-6.4,1.7) .. (-6.05,1.9);
\draw [ultra thick, magenta, ->] (-3,0.1) .. controls (-3,1) and (-3.6,1.7) .. (-3.95,1.9);
\shade[ball color=black!30!white] (0,1) circle (1.5);
\filldraw [white] (0,2.3) ellipse (2 and 0.3);
\draw [thick, top color=white,bottom color=gray] (0,2) ellipse (1.08 and 0.25); 
\shade [ball color=black!30!white] (-1.08,2.4) arc (90+47.2026:90-47.2026:1.5) arc (0:-180:1.1 and 0.25) -- cycle;
\shade[ball color=black!30!white] (4.5,1) circle (1.5);
\filldraw [red] (-2.4,1.2) -- (-2.1,1.2) -- (-2.1,1.35) -- (-1.9,1) -- (-2.1,0.65) -- (-2.1, 0.8) -- (-2.4,0.8) -- cycle;
\node at (2.3,1) {\huge$\pmb {\equiv}$};
\end{tikzpicture}
\caption{A sphere is obtained adding a face to a disc: the Euler characteristics increases by 1.}
\label{fig:gluing-sphere}
\end{center}
\end{figure}
Each time we make a hole on the sphere, we diminish $\chi_{S^2}$ by 1.  What does it change if we pass from a sphere to a torus? We can get a torus from a sphere in the following way. First, we can cut two spherical caps out of the sphere (say along the arctic
and antarctic polar circles). What remains is a deformed piece of a cylinder and since it is like a sphere with two holes it has $\chi=0$. The torus is then obtained by gluing two of such pieces along the circles
(Fig. \ref{fig:gluing-torus}). Since both have $\chi=0$, we see that for the torus
$T^2$ we have 
\begin{align}
 \chi_{T^2}=0.
\end{align}
\begin{figure}[!htbp]
\begin{center}
\begin{tikzpicture}[>=latex,decoration={zigzag,amplitude=.5pt,segment length=2pt}]
\draw [dashed] (-4.5,1) circle (1.49);
\draw [dashed] (-8.5,1) circle (1.49);
\shade [ball color=black!30!white] (0.679408*1.5-4.5,1-1.10064) arc (-47.2026:47.2026:1.5) arc (0:-180:1.01911 and 0.25) arc (180-47.2026:180+47.2026:1.5) arc (190:350:1.03483 and 0.25) -- cycle; 
\draw [thick, top color=white,bottom color=gray] (-4.5,2) ellipse (1.08 and 0.25); 
\draw [thick, red] (-4.5,2) ellipse (1.08 and 0.25); 
\draw [red, thick] (0.679408*1.5-4.5,1.01-1.10064) arc (350:190:1.03483 and 0.25);
\filldraw [red] (-2.4,1.2) -- (-2.1,1.2) -- (-2.1,1.35) -- (-1.9,1) -- (-2.1,0.65) -- (-2.1, 0.8) -- (-2.4,0.8) -- cycle;
\shade [ball color=black!30!white] (0.679408*1.5-8.5,1-1.10064) arc (-47.2026:47.2026:1.5) arc (0:-180:1.01911 and 0.25) arc (180-47.2026:180+47.2026:1.5) arc (190:350:1.03483 and 0.25) -- cycle; 
\draw [thick, top color=white,bottom color=gray] (-8.5,2) ellipse (1.08 and 0.25); 
\draw [red, thick] (0.679408*1.5-8.5,1.01-1.10064) arc (350:190:1.03483 and 0.25);
\draw [thick, red] (-8.5,2) ellipse (1.08 and 0.25); 
\shade [ball color=black!30!white] (0,1.05) ellipse (1.605 and .905);
\begin{scope}[scale=.8]
\filldraw[white,rounded corners=24pt] (-.9,1.3)--(0,1.9)--(.9,1.3);
\filldraw[white,rounded corners=24pt] (-.9,1.3045)--(0,0.82)--(.9,1.3045);
\path[rounded corners=24pt] (-.9,1.3)--(0,1.9)--(.9,1.3) (-.9,1.3)--(0,.92)--(.9,1.3);
\draw[rounded corners=28pt] (-1.1,1.42)--(0,.72)--(1.1,1.42);
\draw[rounded corners=24pt] (-.9,1.3)--(0,1.9)--(.9,1.3);
\end{scope}
\draw[opacity=0.5,thick,red,densely dashed] (0,0.152) arc (270:90:.2 and .365);
\draw [red, thick] (0,0.152) arc (-90:90:.2 and .365);
\draw [red,thick] (0,1.945) arc (90:270:.2 and .348);
\draw [opacity=0.5,thick,red,densely dashed] (0,1.945) arc (90:-90:.2 and .348);
\draw [red,ultra thick,latex-latex] (-8,1.82) .. controls (-7.5,2.32) and (-5.5,2.32) .. (-5,1.82);
\draw [red,ultra thick,latex-latex] (-8,-0.3) .. controls (-7.5,-0.8) and (-5.5,-0.8) .. (-5,-0.3);
\end{tikzpicture}
\caption{A torus is obtained by gluing two cylinders along the boundaries.}
\label{fig:gluing-torus}
\end{center}
\end{figure}
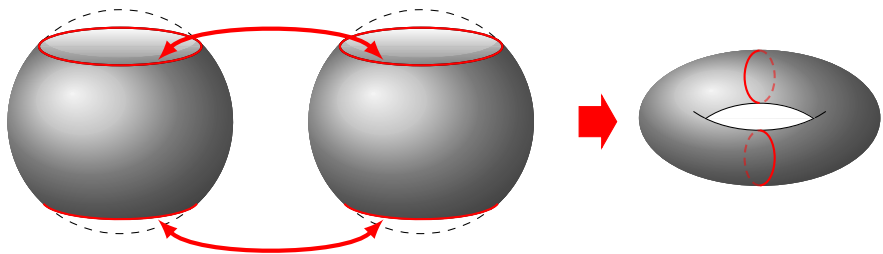
If we practice $k$ holes on the torus we get $\chi=-k$.
If we want to pass to a surface of genus $2$, we have to practice two holes in the torus surface and glue the extremities of a piece of cylinder to the holes. The holes diminish $\chi$ by two, while the cylinder is harmless, so $\chi=-2$ in this case. With the same
construction we see that for a surface $K_{g,k}$ of genus $g$ and $k$ holes on the surface we have
\begin{align}
 \chi_{K_{g,k}}=2-2g-k.
\end{align}
This topological number is called the {\it Euler characteristic of the surface.}
\subsection{Simplicial (co)homology}
The simplest tessellation we can think for a surface is a triangulation. The surface is then equivalent to the union of triangles, which are equivalent to two dimensional simplexes (the convex hull generated by three non aligned points in $\mathbb R^N$).
A triangle has the union of three segments (and three points) as boundary. Each segment is a one dimensional simplex and each point a zero dimensional simplex. Two points are the boundary of a 1-simplex. This way, we see the surface as a collection of 
simplexes glued together. If the triangulation is fine enough, we can guarantee that two simplexes of given dimension meet at most at a simplex of lower dimension (a face). Moreover, all sub simplexes of any simplex is also in the collection. Such a
collection is called a {\it simplicial complex}. Thus, we can see a surface as a simplicial complex. \\
Suppose we want to compute the Euler characteristic starting from it. We have to count the number of faces, edges and vertices. But now, for example, let us consider a face. It is a filled triangle whose boundary is an empty triangle. So it has one face, three edges
and three vertices. If we squash the triangle along an edge until it collapses over the other two edges, the face disappears and we are left with two edges and three vertices (Fig. \ref{fig:squashing-edge}). 
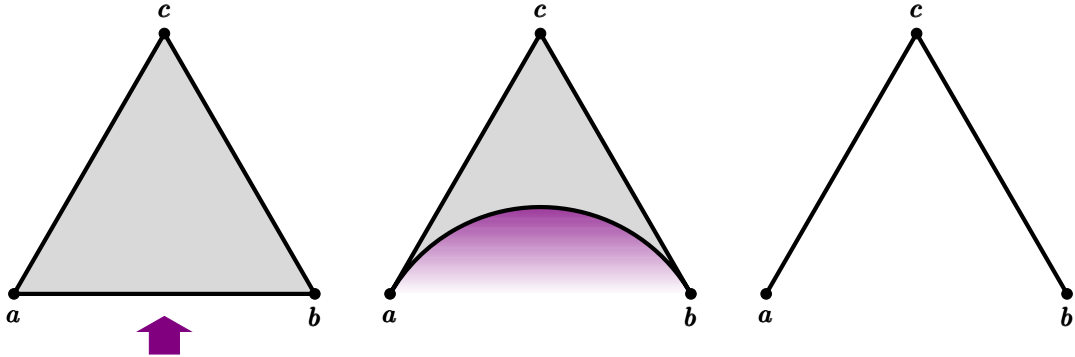
\begin{figure}[!htbp]
\begin{center}
\begin{tikzpicture}[>=latex,decoration={zigzag,amplitude=.5pt,segment length=2pt}]
\filldraw [violet,rotate=90] (-2.4,1.2) -- (-2.1,1.2) -- (-2.1,1.35) -- (-1.9,1) -- (-2.1,0.65) -- (-2.1, 0.8) -- (-2.4,0.8) -- cycle;
\draw [ultra thick,fill=white!70!gray] (-3,-1.6) -- (1,-1.6) -- (-1,2*1.73205-1.6) -- cycle;
\draw [white,top color=violet, bottom color=white] (6,-1.6) arc (30:150:4/1.73205) -- cycle;
\draw [ultra thick,fill=white!70!gray] (2,-1.6) -- (4,2*1.73205-1.6) -- (6,-1.6) arc (30:150:4/1.73205) -- cycle;
\draw [ultra thick] (7,-1.6) -- (9,2*1.73205-1.6) -- (11,-1.6);
\filldraw (-3,-1.6) circle (2pt) (1,-1.6) circle (2pt) (-1,2*1.73205-1.6) circle (2pt) (2,-1.6) circle (2pt) (6,-1.6) circle (2pt) (4,2*1.73205-1.6) circle (2pt) (7,-1.6) circle (2pt) (11,-1.6) circle (2pt) (9,2*1.73205-1.6) circle (2pt);
\node at (-3,-1.9) {$\pmb a$}; \node at (1,-1.9) {$\pmb b$}; \node at (-1,2*1.73205-1.3) {$\pmb c$};
\node at (2,-1.9) {$\pmb a$}; \node at (6,-1.9) {$\pmb b$}; \node at (4,2*1.73205-1.3) {$\pmb c$};
\node at (7,-1.9) {$\pmb a$}; \node at (11,-1.9) {$\pmb b$}; \node at (9,2*1.73205-1.3) {$\pmb c$};
\end{tikzpicture}
\caption{The edge $ab$ is squashed until the triangle reduces to two edges.}
\label{fig:squashing-edge}
\end{center}
\end{figure}
But $\chi$ is left invariant! The idea is then to say that the squashed edge is equivalent
to the other two. If we orient the triangle, call it $(abc)$, so that it has a well defined travelling direction, what we have done can be reformulated by saying that $(ab)$ can be squashed to $(ac)\cup (cb)$ after elimination of the face. In practice one realizes
this by saying that the boundary of the 2-simplex $(abc)$ is 
\begin{align}
 \partial (abc)=(bc)-(ac)+(ab),
\end{align}
where the sign respects the orientation, and stating that {\it if a loop is a boundary, it is trivial}: $\partial (abc)\sim 0$, which means $(ac)\sim (bc)+(ab)$. Also, if we take a union of such triangles, we see that we can eliminate the common edges (intersections
between simplexes) without changing $\chi$, so to obtaining a simply connected polygon $P$ that is a boundary of a region. Again, we will get that one of the edges of the polygon is equivalent to the sum (union) of the remaining ones. It is easy to check that 
this can be
written as
\begin{align}
 P=& \sum_j T_j, \\
 \partial P=&\sum_j \partial T_j \sim 0,
\end{align}
where $T_j$ are the simplexes composing $P$. However, it could happen that the polygon so obtained is not simply connected but it has a number of holes. The best we can say this way is that, after collapsing the triangles, the ``more external boundary'' of
the polygon is equivalent to the sum of boundaries of the internal holes, se Fig. \ref{fig:squashing-edges}. 
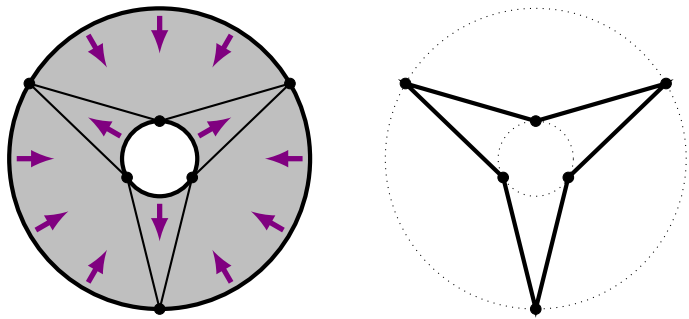
\begin{figure}[!htbp]
\begin{center}
\begin{tikzpicture}[>=latex,decoration={zigzag,amplitude=.5pt,segment length=2pt}]
\draw [ultra thick, fill=gray!50!white] (0,0) circle (2);
\draw [ultra thick, fill=white] (0,0) circle (0.5);
\draw [thick] (0,-2) -- (0.866025/2,-0.25) -- (0.866025*2,1) -- (0,0.5) -- (-0.866025*2,1) -- (-0.866025/2,-0.25) -- cycle;
\filldraw (0,0.5) circle (2pt) (-0.866025/2,-0.25) circle (2pt) (0.866025/2,-0.25) circle (2pt) (0,-2) circle (2pt) (-0.866025*2,1) circle (2pt) (0.866025*2,1) circle (2pt);
\draw [dotted] (5,0) circle (2);
\draw [dotted] (5,0) circle (0.5);
\draw [ultra thick] (5,-2) -- (5+0.866025/2,-0.25) -- (5+0.866025*2,1) -- (5,0.5) -- (5-0.866025*2,1) -- (5-0.866025/2,-0.25) -- cycle;
\filldraw (5,0.5) circle (2pt) (5-0.866025/2,-0.25) circle (2pt) (5+0.866025/2,-0.25) circle (2pt) (5,-2) circle (2pt) (5-0.866025*2,1) circle (2pt) (5+0.866025*2,1) circle (2pt);
\draw [violet,line width=2pt, ->] (0,1.9) -- (0,1.4); 
\draw [rotate=30,violet,line width=2pt, ->] (0,1.9) -- (0,1.4); 
\draw [rotate=-30,violet,line width=2pt, ->] (0,1.9) -- (0,1.4); 
\draw [rotate=90,violet,line width=2pt, ->] (0,1.9) -- (0,1.4); 
\draw [rotate=-90,violet,line width=2pt, ->] (0,1.9) -- (0,1.4); 
\draw [rotate=120,violet,line width=2pt, ->] (0,1.9) -- (0,1.4); 
\draw [rotate=-120,violet,line width=2pt, ->] (0,1.9) -- (0,1.4); 
\draw [rotate=150,violet,line width=2pt, ->] (0,1.9) -- (0,1.4); 
\draw [rotate=-150,violet,line width=2pt, ->] (0,1.9) -- (0,1.4); 
\draw [rotate=60,violet,line width=2pt, ->] (0,0.6) -- (0,1.1); 
\draw [rotate=-60,violet,line width=2pt, ->] (0,0.6) -- (0,1.1); 
\draw [rotate=180,violet,line width=2pt, ->] (0,0.6) -- (0,1.1); 
\end{tikzpicture}
\caption{All triangle are squashed so the ring is reduced to a closed path homotopic to a circle.}
\label{fig:squashing-edges}
\end{center}
\end{figure}
Strictly speaking, however, these are not really boundaries, since there is indeed a hole and not a face of which these are boundaries.\footnote{here one has to be careful: if we take 
a sphere with a hole, its boundary is not a boundary of the hole, but it is for its complement. In the above example it is not so.} This construction shows that in place of counting all edges, one is interested in counting how much unions of edges forming closed paths
are independent in the above sense (so unions of closed paths are not boundaries). \\
The same construction can be done for counting vertices. If two vertices $(a)$ and $(b)$ are the two tips of an edge, then, in place of counting 2 vertices and 1 edge we can just count 1 vertex and 0 edges, without changing $\chi$. In this case
we can restate this by saying that $(a)\sim (b)$
\begin{align}
\partial (ab)=(b)-(a)
\end{align}
and saying that, again, boundaries are trivial: $\partial (ab)\sim 0$, $(a)\sim (b)$. Notice that if a union $L=\sum_j e_j$ of edges forms a closed path than $\partial  L=0$. In particular, then, $\partial (\partial P)=0$ for any polygon $P$. Finally, we notice that
if a connected surface has a boundary, then by collapsing triangles starting from the boundary, one can always reduce to zero the number of faces, while if it has no boundary (like a sphere or a torus) than we can reduce the number of faces to one. This is exactly 
the number of closed surfaces that are not boundaries (obviously for dimensional reasons, things could change if we worked, for example, in three dimensions).\\
We can summarize such discussion as follows. We call a $j-chain$ the finite union of $j$-dimensional simplexes, with relative coefficients:
\begin{align}
 c=\sum_{a=1}^N b_a s_a, \qquad {\rm dim}(s_a)=j.
\end{align}
The sign of $b_a$ establishes the orientation and its modulus is the ``number of repetitions''. So, the set of chains is a linear space over $\mathbb Z$. We say that $c$ is closed if $\partial c=0$ and that it is exact if $c=\partial b$ for $b$ a $(j+1)$-chain.
We are then interested in counting the closed chains that are independent with respect to the relation that two close chains are equivalent if they differ by an exact chain. This space is called the $j$-th simplicial homology group
\begin{align}
 H_j(S,\mathbb Z),
\end{align}
of the surface $S$. Then, we can define the {\it Betty numbers} $b_j={\rm dim}H_j$ so that the above reasoning leads us to the result
\begin{align}
 \chi_S=\sum_{j=0}^2 (-1)^j b_j.
\end{align}
This result does not change if we change the coefficients allowing $b_a$ to take value in $\mathbb K=\mathbb Q, \mathbb R, \mathbb C$. In this case the Homology groups become vector spaces. Also, we can consider the cohomology groups by replacing
simplexes $s_k$ with continuous maps ($j$-cochain)
\begin{align}
 \sigma_k:s_k\rightarrow \mathbb K,
\end{align}
and the boundary operators $\partial$ with the coboundary operators $\delta$, sending $j$-cochains in $(j+1)$-cochains, defined by
\begin{align}
 \delta(\sigma_j) (c_{j+1})\equiv \sigma_j(\partial c_{j+1}).
\end{align}
This defines the cohomology groups $H^j(S,\mathbb K)$. It follows that $H^j(S,\mathbb K)$ is the linear dual of $H_j(S,\mathbb K)$, so that are isomorphic as vector spaces.

\subsection{Morse theory}
Another way of understanding the topology of a differentiable manifold $M$ is to use properties of functions $f:M\rightarrow \mathbb R$. Assume these functions are smooth, with non degenerate isolated critical points. A critical point is a point $p$ where
$df(p)=0$. It is non degenerate if its Hessian is different from zero. We will not delve here into the details (see \cite{milnor2016morse} for these), but we want just to show how such function may capture the topological properties by looking at the explicit example of a torus. Referring to Fig. \ref{fig:morse}
let consider the map that at $p$ associates the quote $z=f(p)$. The critical points are the points where the horizontal plane is tangent to the surface. We see that there are 4 critical points: a maximum, a minimum and two saddle points.
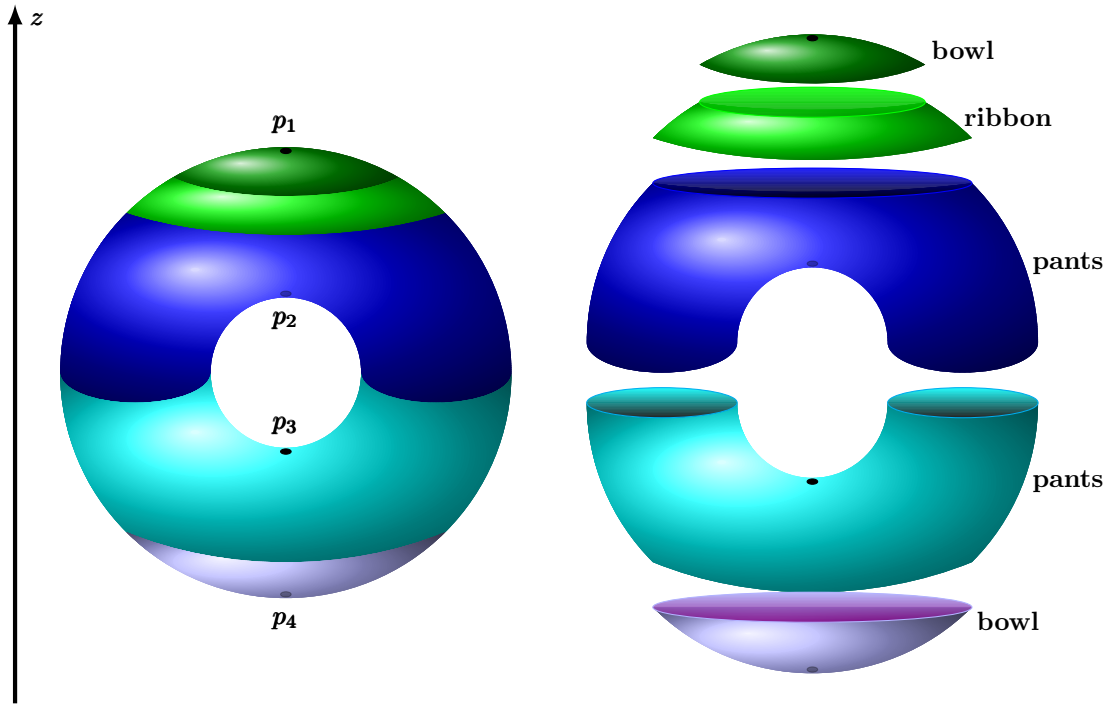
\begin{figure}[!htbp]
\begin{center}
\begin{tikzpicture}[>=latex,decoration={zigzag,amplitude=.5pt,segment length=2pt}]
\draw [ultra thick,->] (-3.6,-4.4) -- (-3.6,4.9);
\shade [ball color=green!60!black] (0,3) arc (90:120:3) arc (-150:-30:1.73205 and 0.5) arc (60:90:3) -- cycle;
\shade [ball color=green] (2.12132,2.12132) arc (45:60:3) arc (-30:-150:1.73205 and 0.5) arc (120:135:3) arc (-135:-45:3 and 1) -- cycle;
\shade [ball color=blue] (2.12132,2.12132) arc (-45:-135:3 and 1) arc (135:180:3) arc (180:360:1 and 0.4) arc (180:0:1) arc (180:360:1 and 0.4) arc (0:45:3) -- cycle;
\shade [ball color=cyan] (-3,0) arc (180:360:1 and 0.4) arc (180:360:1) arc (180:360:1 and 0.4) arc (0:-45:3) arc (-60:-120:2*2.12132 and 3) arc (225:180:3) -- cycle;
\shade [ball color=blue!30!white] (2.12132,-2.12132) arc (-60:-120:2*2.12132 and 3) arc (225:315:3) -- cycle;
\shade [ball color=blue] (9.12132,2.52132) arc (-45:-135:3 and 0.4) arc (135:180:3) arc (180:360:1 and 0.4) arc (180:0:1) arc (180:360:1 and 0.4) arc (0:45:3) -- cycle;
\shade [ball color=cyan] (4,-0.4) arc (180:360:1 and 0.2) arc (180:360:1) arc (180:360:1 and 0.2) arc (0:-45:3) arc (-60:-120:2*2.12132 and 3) arc (225:180:3) -- cycle;
\shade [ball color=blue!30!white] (9.12132,-3.12132) -- (7-2.12132,-3.12132)  arc (225:315:3) -- cycle;
\shade [ball color=green] (9.12132,3.12132) arc (45:60:3) arc (-30:-150:1.73205 and 0.2) arc (120:135:3) arc (-135:-45:3 and 1) -- cycle;
\shade [ball color=green!60!black] (7,4.5) arc (90:120:3) arc (-150:-30:1.73205 and 0.5) arc (60:90:3) -- cycle;
\draw [cyan, top color=cyan, bottom color=black](5,-0.4) ellipse (1 and 0.2);
\draw [cyan, top color=cyan, bottom color=black](9,-0.4) ellipse (1 and 0.2);
\draw [blue!30!white, top color=blue!30!white, bottom color=violet](7,-3.12132) ellipse (2.12132 and 0.2);
\draw [blue!, top color=blue, bottom color=black](7,2.52132) ellipse (2.12132 and 0.2);
\draw [green, top color=green, bottom color=green!60!black] (7,1.5*1.73205+1) ellipse (1.5 and 0.2);
\filldraw (0,2.95) ellipse (2pt and 1pt);
\filldraw [opacity=0.3] (0,1.05) ellipse (2pt and 1pt);
\filldraw (0,-1.05) ellipse (2pt and 1pt);
\filldraw [opacity=0.3] (0,-2.95) ellipse (2pt and 1pt);
\node at (-3.3,4.7) {$\pmb z$};
\node at (0,3.3) {$\pmb {p_1}$};
\node at (0,0.7) {$\pmb {p_2}$};
\node at (0,-0.7) {$\pmb {p_3}$};
\node at (0,-3.3) {$\pmb {p_4}$};
\filldraw (7,4.45) ellipse (2pt and 1pt);
\filldraw [opacity=0.3] (7,1.45) ellipse (2pt and 1pt);
\filldraw (7,-1.45) ellipse (2pt and 1pt);
\filldraw [opacity=0.3] (7,-3.95) ellipse (2pt and 1pt);
\node at (9,4.3) {\bf{bowl}};
\node at (9.6,3.4) {\bf{ribbon}};
\node at (10.4,1.45) {\bf{pants}};
\node at (10.4,-1.45) {\bf{pants}};
\node at (9.6,-3.3) {\bf{bowl}};
\end{tikzpicture}
\caption{A torus is cut in pieces with different topologies. Notice that the ribbon can be pasted to the bowl or to the pants without change the topology neither of the bowl or the pants: cutting away pieces not containing critical points does not change the topology.}
\label{fig:morse}
\end{center}
\end{figure}
Let us see how this function can give us a new way of reconstruct the surfaces starting from pieces. Starting from above, assume we use the horizontal plane to cut the surface. First, we will meet the top $p_1$ of the surface, corresponding to the maximum.
If we cut a little bit below, we get a small bowl faced down. Then, let us move below. If we cut before meeting the second critical point, we cut out a sort of ribbon, whose boundaries are homologically equivalent. This will happen each time we take two cuts
not containing any critical point in between. Then we move even below until meeting the saddle point. If we cut a little bit below we get a piece that looks like a pair of pants. If we go even more below, until passing the second saddle point $p_3$, we get a second
pair of pants (with the legs upside).\footnote{If we cut before passing the saddle point we get two cylinders, which correspond to trivial pieces} Going below the point $p_4$ we are left with a second bowl. Forgetting the trivial pieces, we see that we can 
reconstruct the surface gluing together the shapes obtained after cutting horizontally around the critical points. A maximum gives a bowl down, any saddle point gives a pair of pants and a minimum gives a bowl up. Each of these shapes is understood by looking
at the signature of the Hessian at each critical point. Starting from above, we have just to look at the principal ways to go down (Fig \ref{fig:morse-1}): 
\begin{itemize}
 \item at $p_1$ the Hessian is negative, so it has two principal directions going down (the eigenvectors). We say that $p_1$ has Morse index $2$;
 \item at $p_2$ and $p_3$ the Hessian is indefinite, it has only one negative eigenvalue at which it corresponds a descending direction. We say that $p_2$ and $p_3$ have Morse index $1$;
 \item at $p_4$ the Hessian is positive. There are not descending directions and we say that $p_4$ has Morse index $0$.
\end{itemize}
If $m(n)$ is the number of critical points with Morse index $n$, then we can define the Euler number associated to our surface $S$ ($M$ stays for Morse):
\begin{align}
 \chi_M (S)=\sum_{n=0}^2 (-1)^n m(n).
\end{align}
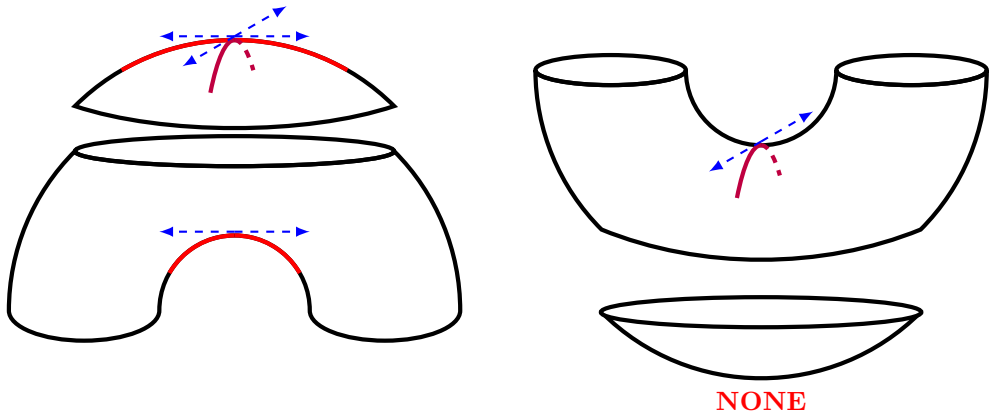
\begin{figure}[!htbp]
\begin{center}
\begin{tikzpicture}[>=latex,decoration={zigzag,amplitude=.5pt,segment length=2pt},scale=1]
\draw [ultra thick](2.12132,2.52132) arc (0:-180:2.12132 and 0.2 ) arc (135:180:3) arc (180:360:1 and 0.4) arc (180:0:1) arc (180:360:1 and 0.4) arc (0:45:3) -- cycle;
\draw [ultra thick](4,3.6) arc (180:360:1 and 0.2) arc (180:360:1) arc (180:360:1 and 0.2) arc (0:-45:3) arc (-60:-120:2*2.12132 and 3) arc (225:180:3) -- cycle;
\draw [ultra thick](7-2.12132,0.5-0.12132)  arc (225:315:3);
\draw [ultra thick] (2.12132,3.12132) arc (45:135:3) arc (-135:-45:3 and 1) -- cycle;
\draw [ultra thick](5,3.6) ellipse (1 and 0.2);
\draw [ultra thick](9,3.6) ellipse (1 and 0.2);
\draw [ultra thick,red] (0,1.4) arc (90:30:1);
\draw [ultra thick,red] (0,1.4) arc (90:150:1);
\draw [thick,blue,dashed,->] (0,1.45) -- (1,1.45);
\draw [thick,blue,dashed,->] (0,1.45) -- (-1,1.45);
\draw [ultra thick,red] (0,4) arc (90:60:3);
\draw [ultra thick,red] (0,4) arc (90:120:3);
\draw [ultra thick,purple] (0,4) arc (90:130:0.5 and 3);
\draw [ultra thick,purple, dashed] (0,4) arc (90:60:0.5 and 3);
\draw [thick,blue,dashed,->] (0,4.05) -- (1,4.05);
\draw [thick,blue,dashed,->] (0,4.05) -- (-1,4.05);
\draw [thick,blue,dashed,->] (0,4.05) -- (0.8*0.866,4.05+0.4);
\draw [thick,blue,dashed,->] (0,4.05) -- (-0.8*0.866,4.05-0.4);
\draw [ultra thick,purple] (7,2.6) arc (90:130:0.5 and 3);
\draw [ultra thick,purple, dashed] (7,2.6) arc (90:60:0.5 and 3);
\draw [thick,blue,dashed,->] (7,2.65) -- (7+0.8*0.866,2.65+0.4);
\draw [thick,blue,dashed,->] (7,2.65) -- (7-0.8*0.866,2.65-0.4);
\draw [ultra thick,fill=white](7,3.5-3.12132) ellipse (2.13 and 0.2);
\draw [ultra thick](0,2.52132) ellipse (2.12132 and 0.2);
\node [red] at (7,-0.8) {\bf {NONE}};
\end{tikzpicture}
\caption{Principal descending directions from the critical points at any piece.}
\label{fig:morse-1}
\end{center}
\end{figure}
Notice that in our example we get $\chi_M(S)=0$. If we deform the surface without changing the topology, like in figure \ref{fig:deform-torus}, the number of critical points can change, but $\chi_M(S)$ remains unchanged. 
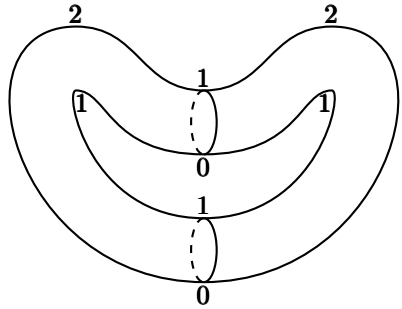
\begin{figure}[!htbp]
\begin{center}
\begin{tikzpicture}[>=latex,decoration={zigzag,amplitude=.5pt,segment length=2pt},scale=0.85]
\draw [thick] (0,0) .. controls (1,0) and (1,1) .. (2,1) .. controls (4,1) and (3,-3) .. (0,-3) .. controls (-3,-3) and (-4,1) .. (-2,1) .. controls (-1,1) and (-1,0) .. (0,0) -- cycle;
\draw [thick] (0,-1) .. controls (1.5,-1) and (1.5,0) .. (2,0)  .. controls (2.2,0) and (1.8,-2) .. (0,-2) .. controls (-1.8,-2) and (-2.2,0) .. (-2,0) .. controls (-1.5,0) and (-1.5,-1) .. (0,-1) -- cycle;
\node at (0,-3.2) {$\pmb 0$};
\node at (2,1.2) {$\pmb 2$};
\node at (-2,1.2) {$\pmb 2$};
\node at (0,0.2) {$\pmb 1$};
\node at (0,-1.2) {$\pmb 0$};
\node at (0,-1.8) {$\pmb 1$};
\node at (-1.9,-0.2) {$\pmb 1$};
\node at (1.9,-0.2) {$\pmb 1$};
\draw [thick] (0,0) arc (90:-90:0.2 and 0.5);
\draw [thick,dashed] (0,0) arc (90:270:0.2 and 0.5);
\draw [thick] (0,-2) arc (90:-90:0.2 and 0.5);
\draw [thick,dashed] (0,-2) arc (90:270:0.2 and 0.5);
\end{tikzpicture}
\caption{Deforming the torus does not changes the Morse index.}
\label{fig:deform-torus}
\end{center}
\end{figure}
It is a topological invariant and, in this case,
it is zero, equal to the Euler characteristic of the torus. It is possible to prove that $\chi_M(S)$ is always equal to the Euler characteristic, in any dimension, for any closed compact manifold. In the next picture we can see that, indeed, for a sphere $S^2$ we get 
$\chi_M(S^2)=2$ and for a surface $\Sigma_g$ of genus $g$ one has $\chi_M(\Sigma_g)=2-2g$. 

\subsection{Cellular (co)homology}
Another way to look at homology (and then cohomology by duality, as usual) is to look for a cellular decomposition of our surface, \cite{dubrovin1984modern}. 
This means that we have to obtain the surface (up to deformations preserving the topology) by gluing discs of different dimensions.
A disc is meant to be a full ball in given dimension. So a 0-dimensional disc is a point and a 1-dimensional disc is a segment. The rule is that one starts from the lower dimensional discs, then glues the boundaries of the successive dimensional discs to the 
lower dimensional structure obtained previously. For example, one can get a sphere starting from a 0-cell and a 2-cell, gluing the boundary of the two cell to the point (Fig. \ref{fig:cellular-sphere1}). 
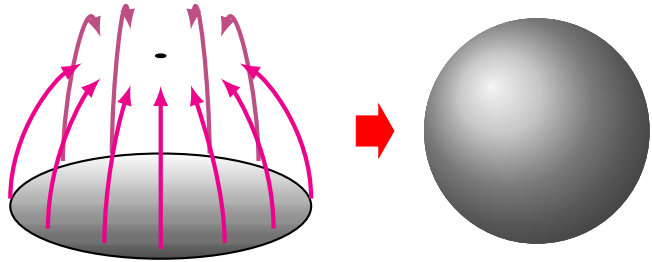
\begin{figure}[!htbp]
\begin{center}
\begin{tikzpicture}[>=latex,decoration={zigzag,amplitude=.5pt,segment length=2pt}]
\draw [ultra thick, color=magenta!75!black, ->] (-6.3,0.6) .. controls (-6.3,1.5) and (-6,2.8) .. (-5.8,2.25);
\draw [ultra thick, color=magenta!75!black, ->] (-3.7,0.6) .. controls (-3.7,1.5) and (-4,2.8) .. (-4.2,2.25);
\draw [ultra thick, color=magenta!75!black, ->] (-5.65,0.7) .. controls (-5.65,1.6) and (-5.5,3) .. (-5.4,2.35);
\draw [ultra thick, color=magenta!75!black, ->] (-4.35,0.7) .. controls (-4.35,1.6) and (-4.5,3) .. (-4.6,2.35);
\filldraw (-5,2) ellipse (2pt and 0.7pt);
\draw [thick, top color= white, bottom color=black!70!white] (-5,0) ellipse (2 and 0.7);
\draw [ultra thick, magenta, ->] (-6.5,-0.3) .. controls (-6.5,0.7) and (-6,1.5) .. (-5.8,1.7);
\draw [ultra thick, magenta, ->] (-3.5,-0.3) .. controls (-3.5,0.7) and (-4,1.5) .. (-4.2,1.7);
\draw [ultra thick, magenta, ->] (-5,-0.57) -- (-5,1.6);
\draw [ultra thick, magenta, ->] (-5.75,-0.49) .. controls (-5.75,0.2) and (-5.6,1) .. (-5.4,1.61);
\draw [ultra thick, magenta, ->] (-4.15,-0.49) .. controls (-4.15,0.2) and (-4.4,1) .. (-4.6,1.61);
\draw [ultra thick, magenta, ->] (-7,0.1) .. controls (-7,1) and (-6.4,1.7) .. (-6.05,1.9);
\draw [ultra thick, magenta, ->] (-3,0.1) .. controls (-3,1) and (-3.6,1.7) .. (-3.95,1.9);
\shade[ball color=black!30!white] (0,1) circle (1.5);
\filldraw [red] (-2.4,1.2) -- (-2.1,1.2) -- (-2.1,1.35) -- (-1.9,1) -- (-2.1,0.65) -- (-2.1, 0.8) -- (-2.4,0.8) -- cycle;
\end{tikzpicture}
\caption{A cellular decomposition of the sphere: a disc is glued to a point.}
\label{fig:cellular-sphere1}
\end{center}
\end{figure}
Or starting from a point, a segment and 2 discs, first gluing the boundaries of the segments 
to the point (getting $S^1$) and gluing the boundaries of the two 2-cells to $S^1$ (Fig \ref{fig:cellular-sphere2}). 
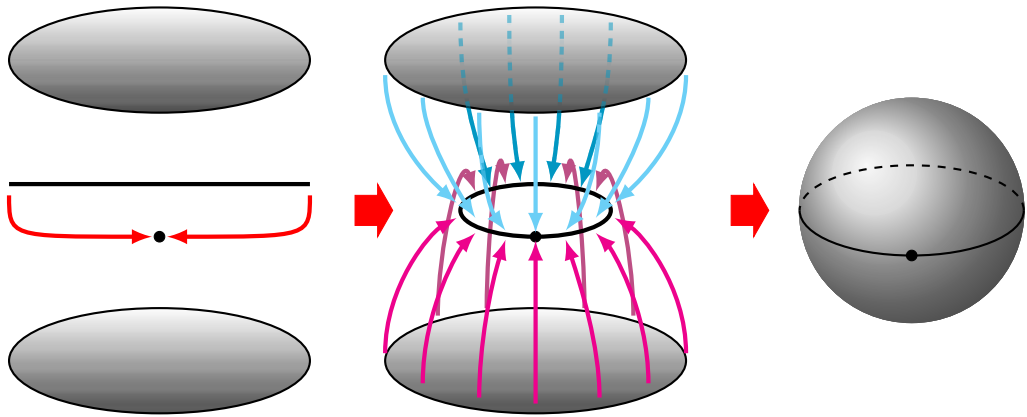
\begin{figure}[!htbp]
\begin{center}
\begin{tikzpicture}[>=latex,decoration={zigzag,amplitude=.5pt,segment length=2pt}]
\filldraw (-5,1.65) circle (2pt);
\draw [ultra thick, color=cyan!75!black, ->] (-4.65,4.5) .. controls (-4.65,3.5) and (-4.7,3) .. (-4.8,2.38);
\draw [ultra thick, color=cyan!75!black, ->] (-5.35,4.5) .. controls (-5.35,3.5) and (-5.3,3) .. (-5.2,2.38);
\draw [ultra thick, color=cyan!75!black, ->] (-4,4.5) .. controls (-4,3.5) and (-4.2,3) .. (-4.4,2.3);
\draw [ultra thick, color=cyan!75!black, ->] (-6,4.5) .. controls (-6,3.5) and (-5.8,3) .. (-5.6,2.3);
\draw [ultra thick, color=magenta!75!black, ->] (-6.3,0.6) .. controls (-6.3,1.5) and (-6,2.8) .. (-5.8,2.25);
\draw [ultra thick, color=magenta!75!black, ->] (-3.7,0.6) .. controls (-3.7,1.5) and (-4,2.8) .. (-4.2,2.25);
\draw [ultra thick, color=magenta!75!black, ->] (-5.65,0.7) .. controls (-5.65,1.6) and (-5.5,3) .. (-5.4,2.35);
\draw [ultra thick, color=magenta!75!black, ->] (-4.35,0.7) .. controls (-4.35,1.6) and (-4.5,3) .. (-4.6,2.35);
\draw [ultra thick] (-5,2) ellipse (1 and 0.35);
\draw [thick, top color= white, bottom color=black!70!white] (-5,0) ellipse (2 and 0.7);
\draw [thick, top color= white, bottom color=black!70!white] (-5,4) ellipse (2 and 0.7);
\draw [ultra thick, magenta, ->] (-6.5,-0.3) .. controls (-6.5,0.7) and (-6,1.5) .. (-5.8,1.7);
\draw [ultra thick, magenta, ->] (-3.5,-0.3) .. controls (-3.5,0.7) and (-4,1.5) .. (-4.2,1.7);
\draw [ultra thick, magenta, ->] (-5,-0.57) -- (-5,1.6);
\draw [ultra thick, magenta, ->] (-5.75,-0.49) .. controls (-5.75,0.2) and (-5.6,1) .. (-5.4,1.61);
\draw [ultra thick, magenta, ->] (-4.15,-0.49) .. controls (-4.15,0.2) and (-4.4,1) .. (-4.6,1.61);
\draw [ultra thick, magenta, ->] (-7,0.1) .. controls (-7,1) and (-6.4,1.7) .. (-6.05,1.9);
\draw [ultra thick, magenta, ->] (-3,0.1) .. controls (-3,1) and (-3.6,1.7) .. (-3.95,1.9);
\draw [ultra thick, cyan!50!white, ->] (-6.5,3.5) .. controls (-6.5,3) and (-6,2.2) .. (-5.8,1.9);
\draw [ultra thick, cyan!50!white, ->] (-3.5,3.5) .. controls (-3.5,3) and (-4,2.2) .. (-4.2,1.9);
\draw [ultra thick, cyan!50!white, ->] (-5,3.25) -- (-5,1.7);
\draw [ultra thick, cyan!50!white, ->] (-5.75,3.3) .. controls (-5.75,2.5) and (-5.6,2) .. (-5.4,1.73);
\draw [ultra thick, cyan!50!white, ->] (-4.15,3.3) .. controls (-4.15,2.5) and (-4.4,2) .. (-4.6,1.73);
\draw [ultra thick, cyan!50!white, ->] (-7,3.8) .. controls (-7,3) and (-6.4,2.4) .. (-6.05,2.1);
\draw [ultra thick, cyan!50!white, ->] (-3,3.8) .. controls (-3,3) and (-3.6,2.4) .. (-3.95,2.1);
\draw [opacity=0.5,ultra thick, color=cyan!75!black,dashed, ->] (-4.65,4.6) .. controls (-4.65,3.5) and (-4.7,3) .. (-4.8,2.38);
\draw [opacity=0.5,ultra thick, color=cyan!75!black,dashed, ->] (-5.35,4.6) .. controls (-5.35,3.5) and (-5.3,3) .. (-5.2,2.38);
\draw [opacity=0.5,ultra thick, color=cyan!75!black,dashed, ->] (-4,4.5) .. controls (-4,3.5) and (-4.2,3) .. (-4.4,2.3);
\draw [opacity=0.5,ultra thick, color=cyan!75!black,dashed, ->] (-6,4.5) .. controls (-6,3.5) and (-5.8,3) .. (-5.6,2.3);
\shade[ball color=black!20!white] (0,2) circle (1.5);
\draw [thick] (1.49,2) arc (0:-180:1.49 and 0.6);
\draw [thick,dashed] (1.49,2) arc (0:180:1.49 and 0.6);
\filldraw (0,1.4) circle (2pt);
\filldraw [red] (-2.4,2.2) -- (-2.1,2.2) -- (-2.1,2.35) -- (-1.9,2) -- (-2.1,1.65) -- (-2.1, 1.8) -- (-2.4,1.8) -- cycle;
\draw [thick, top color= white, bottom color=black!70!white] (-10,0) ellipse (2 and 0.7);
\draw [thick, top color= white, bottom color=black!70!white] (-10,4) ellipse (2 and 0.7);
\filldraw (-10,1.65) circle (2pt);
\draw [ultra thick] (-12,2.35) -- (-8,2.35);
\draw [red,ultra thick,->] (-12,2.2) .. controls (-12,1.65) and (-12,1.65) .. (-10.1,1.65); 
\draw [red,ultra thick,->] (-8,2.2) .. controls (-8,1.65) and (-8,1.65) .. (-9.9,1.65); 
\filldraw [red] (-7.4,2.2) -- (-7.1,2.2) -- (-7.1,2.35) -- (-6.9,2) -- (-7.1,1.65) -- (-7.1, 1.8) -- (-7.4,1.8) -- cycle;
\end{tikzpicture}
\caption{A different cellular decomposition of the sphere.}
\label{fig:cellular-sphere2}
\end{center}
\end{figure}
These are different constructions of $S^2$ gluing discs, and we can think many others. The point is that if we indicate with $d(n)$ the numbers of cells of dimension $n$
then we get for the sphere
\begin{align}
 \chi_{CW}(S^2):=\sum_{n=0}^2 (-1)^n d(n)=2
\end{align}
independently on the construction. Notice that, again, it coincides with the Euler characteristic, and, again, it is not by chance but it is a general result: the cellular decomposition gives us another way to compute the Euler characteristic. In Figure \ref{fig:cell-g} it
is shown how to get a genus $g$ surface $\Sigma_g$ by gluing a point, $2g$ segments and a 2-disc. 

\begin{figure}[!htbp]
\begin{center}
\begin{tikzpicture}[>=latex,decoration={zigzag,amplitude=.5pt,segment length=2pt},scale=0.85]
\filldraw [red](-5,7.5) circle (2pt); \node [red] at (-4.7,7.5) {$\pmb p$};
\draw [ultra thick, green!50!blue] (-6.5,7) -- (-3.5,7); \node [green!50!blue] at (-3.2,7) {$\pmb {a_1}$}; \node [green!50!blue] at (-1.7,6.5) {$\pmb {a_1}$};
\draw [ultra thick, green!50!blue] (-6.5,6.5) -- (-3.5,6.5); \node [green!50!blue] at (-3.2,6.5) {$\pmb {a_2}$}; \node [green!50!blue] at (0.3,7) {$\pmb {a_2}$};
\draw [ultra thick, green!50!blue] (-6.5,6) -- (-3.5,6); \node [green!50!blue] at (-3.2,6) {$\pmb {a_3}$}; \node [green!50!blue] at (1.3,5.6) {$\pmb {a_g}$};
\draw [thick,dotted, green!50!blue] (-6.5,5.5) -- (-3.5,5.5);
\draw [ultra thick, green!50!blue] (-6.5,5) -- (-3.5,5); \node [green!50!blue] at (-3.2,5) {$\pmb {a_g}$};
\draw [thick,top color=white, bottom color=gray] (-5,2.5) circle (2);
\draw [ultra thick,cyan] (-6.5,0) -- (-3.5,0); \node [cyan] at (-3.2,0) {$\pmb {b_1}$};
\draw [ultra thick,cyan] (-6.5,-0.5) -- (-3.5,-0.5); \node [cyan] at (-3.2,-0.5) {$\pmb {b_2}$};
\draw [ultra thick,cyan] (-6.5,-1) -- (-3.5,-1); \node [cyan] at (-3.2,-1) {$\pmb {b_3}$};
\draw [thick,dotted,cyan] (-6.5,-1.5) -- (-3.5,-1.5);
\draw [ultra thick,cyan] (-6.5,-2) -- (-3.5,-2); \node [cyan] at (-3.2,-2) {$\pmb {b_g}$};
\draw [white,top color=white, bottom color=gray] (0,0) circle (2);
\draw [ultra thick, green!50!blue](0,4.2) .. controls (-4,4.2) and (-1,8.5) .. (0,4.2);
\draw [ultra thick, green!50!blue](0,4.2) .. controls (-1,7.5) and (1,7.5) .. (0,4.2);
\draw [ultra thick, green!50!blue](0,4.2) .. controls (1,6.5) and (3,4.4) .. (0,4.2);
\draw [ultra thick, cyan](0,4.2) .. controls (-4,4.2) and (-1,2) .. (0,4.2);
\draw [ultra thick, cyan](0,4.2) .. controls (-1,1.9) and (1,1.9) .. (0,4.2);
\draw [ultra thick, cyan](0,4.2) .. controls (1,1.9) and (3,4) .. (0,4.2);
\node [cyan] at (-1.7,2.95) {$\pmb {b_1}$};
\node [cyan] at (-0.6,2.86) {$\pmb {b_2}$};
\node [cyan] at (1.58,2.82) {$\pmb {b_g}$};
\filldraw [red](0,4.2) circle (2pt); \node [red] at (-0.3,4.4) {$\pmb p$};
\draw [ultra thick, cyan] (0,2) arc (90:60:2);
\draw [rotate=60,ultra thick, cyan] (0,2) arc (90:60:2);
\draw [rotate=-60,ultra thick, cyan] (0,2) arc (90:60:2);
\draw [rotate=-120,ultra thick, cyan,dashed] (0,2) arc (90:60:2);
\draw [rotate=30, ultra thick, green!50!blue] (0,2) arc (90:60:2);
\draw [rotate=-30, ultra thick, green!50!blue] (0,2) arc (90:60:2);
\draw [rotate=-90,ultra thick, green!50!blue] (0,2) arc (90:60:2);
\draw [rotate=90,ultra thick, green!50!blue, dashed] (0,2) arc (90:60:2);
\filldraw [red] (0,2) circle (2pt);
\filldraw [rotate=30,red] (0,2) circle (2pt);
\filldraw [rotate=60,red] (0,2) circle (2pt);
\filldraw [rotate=90,red] (0,2) circle (2pt);
\filldraw [rotate=-30,red] (0,2) circle (2pt);
\filldraw [rotate=-60,red] (0,2) circle (2pt);
\filldraw [rotate=-90,red] (0,2) circle (2pt);
\filldraw [rotate=-120,red] (0,2) circle (2pt);
\filldraw [rotate=-150,red] (0,2) circle (2pt);
\draw [green!50!blue, rotate=15,ultra thick,->] (0,2) -- (0.2,2); 
\draw [cyan, rotate=-15,ultra thick,->] (0,2) -- (0.2,2); 
\draw [green!50!blue, rotate=-105,ultra thick,->] (0,2) -- (0.2,2); 
\draw [cyan, rotate=-135,ultra thick,->] (0,2) -- (0.2,2); 
\draw [cyan, rotate=45,ultra thick,->] (0,2) -- (-0.2,2); 
\draw [green!50!blue, rotate=75,ultra thick,->] (0,2) -- (-0.2,2); 
\draw [green!50!blue, rotate=-45,ultra thick,->] (0,2) -- (-0.2,2); 
\draw [cyan, rotate=-75,ultra thick,->] (0,2) -- (-0.2,2); 
\draw [ultra thick,green!50!blue,->] (1.1,5.25) -- (1.3,5.25); 
\draw [ultra thick,green!50!blue,->] (-1.4,6.1) -- (-1.2,6.1); 
\draw [ultra thick,green!50!blue,->] (0.3,5.9) -- (0.3,5.7); 
\draw [ultra thick,cyan,->] (1.2,3.14) -- (1,3.14); 
\draw [ultra thick,cyan,->] (-1.2,3.22) -- (-1.4,3.22); 
\draw [ultra thick,cyan,->] (0.3,3) -- (0.3,2.8); 
\draw [dotted,thick,green!50!blue] (0.5,6) .. controls (0.7,5.8) and (0.9,5.4) .. (0.9,5.3);
\draw [dotted,thick,cyan] (0.4,2.7) .. controls (0.7,2.8) and (0.9,3.2) .. (0.9,3.3);
\draw [white, top color=gray, bottom color=white](5,2) .. controls (2,-1) and (2,6.5) .. (5,3.5) .. controls (6,2.5) and (7,2.5) .. (6,3.5) .. controls (3,6.5) and (10.5,6.5) .. (7.5,3.5) .. controls (6.5,2.5) and (7,2.5) .. (8,3.5) .. controls (11,6.5) and (11,-1) .. 
(8,2) .. controls (7.5,2.5) and (8,1) .. (6.5,1) .. controls (5,1) and (5.5,2.5) .. (5,2) -- cycle;
\draw [thick](5,2) .. controls (2,-1) and (2,6.5) .. (5,3.5) .. controls (6,2.5) and (7,2.5) .. (6,3.5) .. controls (3,6.5) and (10.5,6.5) .. (7.5,3.5) .. controls (6.5,2.5) and (7,2.5) .. (8,3.5) .. controls (11,6.5) and (11,-1) .. (8,2);
\draw [dotted,thick] (5,2) .. controls (5.5,2.5) and (5,1) .. (6.5,1) .. controls (8,1) and (7.5,2.5) .. (8,2); 
\filldraw [white](6.197,4.63) .. controls (6.5,4.8) and (7,4.8) .. (7.303,4.63) .. controls (6.9,4.37) and (6.6,4.37) .. (6.197,4.63) -- cycle;
\draw (6.197,4.63) .. controls (6.5,4.8) and (7,4.8) .. (7.303,4.63);
\draw (6,4.8) .. controls (6.5,4.3) and (7,4.3) .. (7.5,4.8);
\filldraw [white](3.647,3) .. controls (3.9,3.4) and (4.4,2.6) .. (4.1,2.471) .. controls (3.7,2.26) and (3.4,2.5) .. (3.647,3) -- cycle;
\draw (3.647,3) .. controls (3.9,3.4) and (4.4,2.6) .. (4.1,2.471);
\draw (3.9,3.3) .. controls (3.3,2.8) and (3.5,2) .. (4.3,2.6);
\filldraw [white](13-3.647,3) .. controls (13-3.9,3.4) and (13-4.4,2.6) .. (13-4.1,2.471) .. controls (13-3.7,2.26) and (13-3.4,2.5) .. (13-3.647,3) -- cycle;
\draw (13-3.647,3) .. controls (13-3.9,3.4) and (13-4.4,2.6) .. (13-4.1,2.471);
\draw (13-3.9,3.3) .. controls (13-3.3,2.8) and (13-3.5,2) .. (13-4.3,2.6);
\draw [thick,cyan](6.7,2.2) .. controls (6.7,3.8) and (6.2, 4) .. (6.49,4.476);
\draw [thick,cyan](6.7,2.2) .. controls (6.6,3) and (5.9,4.1) .. (5.65,3.873);
\draw [thick,cyan,opacity=0.3](5.645,3.9) .. controls (6.35,3.6) and (6.55,3.9) .. (6.49,4.476);
\draw [thick,cyan](6.7,2.2) .. controls (13-5.8,2.6) and (13-4.5,3.1) .. (13-4.87,3.645);
\draw [thick,cyan](6.7,2.2) .. controls (13-5.8,2) and (13-4.5,3.1) .. (13-4.14,2.82);
\draw [thick,cyan,opacity=0.3](13-4.16,2.83) .. controls (13-4.6,2.7) and (13-5.2,3.18) .. (13-4.86,3.64);
\draw [thick,cyan](6.7,2.2) .. controls (5.8,2.6) and (4.5,3.1) .. (4.87,3.645);
\draw [thick,cyan](6.7,2.2) .. controls (5.8,2) and (4.5,3.1) .. (4.14,2.82);
\draw [thick,cyan,opacity=0.3](4.16,2.83) .. controls (4.6,2.7) and (5.2,3.18) .. (4.86,3.64);
\draw [thick, green!50!blue](6.7,2.2) .. controls (6.7,4.2) and (5.2,4.5) .. (5.8,5.1) .. controls (6.2,5.4) and (7.2,5.4) .. (7.6,5.1) .. controls (8.2,4.5) and (6.7,4.2) .. (6.7,2.2);
\draw [thick, green!50!blue](6.7,2.2) .. controls (5.8,1.6) and (5.3,2.7) .. (4.7,2.3) .. controls (2.6,0.8) and (3.1,5) .. (4.7,3.1) .. controls (5.1,2.725) and (5.7,2.35) .. (6.7,2.2);
\draw [thick, green!50!blue](6.7,2.2) .. controls (13-5.8,1.6) and (13-5.3,2.7) .. (13-4.7,2.3) .. controls (13-2.6,0.8) and (13-3.1,5) .. (13-4.7,3.1) .. controls (13-5.1,2.725) and (13-5.7,2.35) .. (6.7,2.2);
\filldraw [red] (6.7,2.2) circle (2pt);
\node [green!50!blue] at (-0.4,1.6) {$\pmb {a_1}$}; \node [cyan] at (0.5,1.6) {$\pmb {b_1}$};
\node [green!50!blue] at (1.2,1.25) {$\pmb {a_1}$}; \node [cyan] at (-1.05,1.2) {$\pmb {b_g}$};
\node [green!50!blue] at (1.6,-0.4) {$\pmb {a_2}$}; \node [cyan] at (1.65,0.5) {$\pmb {b_1}$};
\node [green!50!blue] at (-1.6,0.4) {$\pmb {a_g}$}; \node [cyan] at (1.2,-1.1) {$\pmb {b_2}$};
\node [green!50!blue] at (6.7,5.5) {$\pmb {a_1}$};
\node [green!50!blue] at (9.95,2.7) {$\pmb {a_2}$};
\node [green!50!blue] at (3.05,2.7) {$\pmb {a_g}$};
\node [cyan] at (6.8,4) {$\pmb {b_1}$};
\node [cyan] at (8.41,3.6) {$\pmb {b_2}$};
\node [cyan] at (13-8.41,3.6) {$\pmb {b_g}$};
\node [red] at (-2.3,0) {$\pmb p$};
\node [red] at (2.3,0) {$\pmb p$};
\node [red] at (-2,1) {$\pmb p$};
\node [red] at (2,1) {$\pmb p$};
\node [red] at (2,-1) {$\pmb p$};
\node [red] at (-1.2,1.9) {$\pmb p$};
\node [red] at (1.3,1.9) {$\pmb p$};
\node [red] at (1.3,-1.9) {$\pmb p$};
\node [red] at (0,2.25) {$\pmb p$};
\node [red] at (6.7,1.85) {$\pmb p$};
\end{tikzpicture}
\caption{A cellular decomposition of a genus $g$ surface. All the boundaries of the $2g$ one dimensional cells are glued to a point, giving a one dimensional skeleton. Then the boundary of the two dimensional disc is glued to the skeleton, gluing the red points to p and the remaining parts to the corresponding lines (two segments for each curve) respecting the drawn orientations.}
\label{fig:cell-g}
\end{center}
\end{figure}
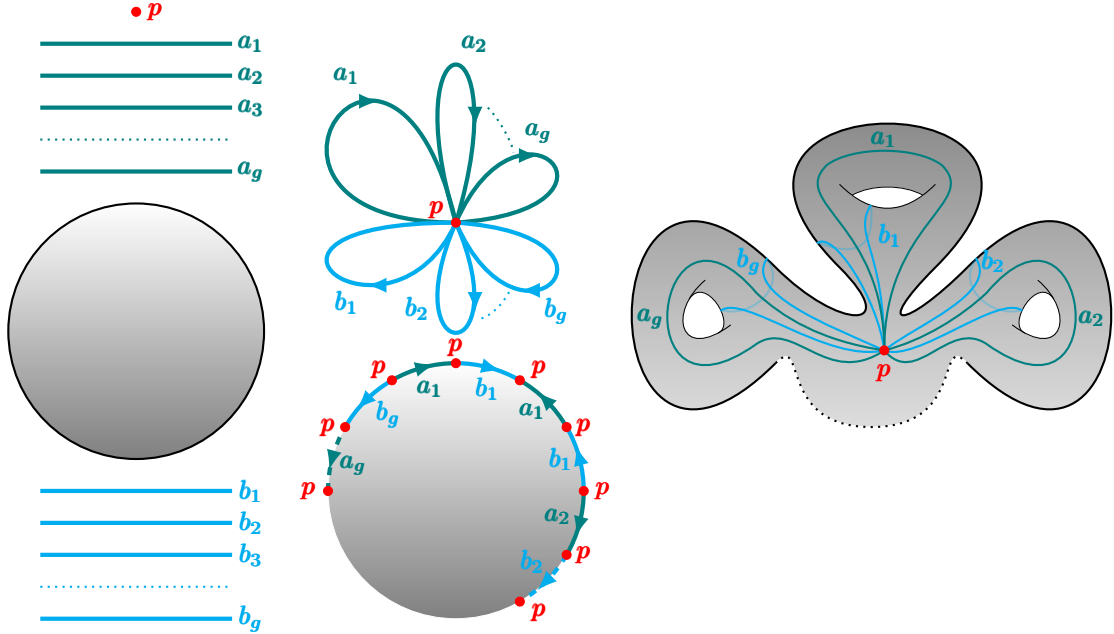

Again, we see that
\begin{align}
 \chi_{CW}(\Sigma_g)=\chi_E(\Sigma_g)=\chi_M(\Sigma_g).
\end{align}

\subsection{De Rham cohomology}
Finally, we want just to recall what is probably the most commonly known cohomology, which is the de Rham cohomology, \cite{madsen1997from}. On a smooth surface we can have 0-forms (functions), 1-forms and 2-forms. The external differential maps $k$-forms into $(k+1)$-forms.
A $k$-form $\phi$ is said to be closed if $d\phi=0$, while it is called exact if $\phi=d\psi$ for $\psi$ a $(k-1)$-form. An exact form is also closed since $d^2=0$. This is an obvious consequence of the Schwartz's lemma for double derivatives. So, exact forms
generate a subspace of closed forms. The de Rham cohomology group of degree $k$ is the space of closed $k$-forms identified up to exact forms. If $\Omega^k(S)$ are the $k$-forms:
\begin{align}
 H^k_{dR}(S)=\{ \omega \in \Omega^k(S)| d\omega=0 \}/\sim,
\end{align}
where $\omega_1\sim\omega_2$ if and only if $\omega_1-\omega_2=d\lambda$ for some $\lambda  \in \Omega^{k-1}(S)$. It turns out that, under reasonable hypotheses, $H^k_{dR}(S)$ has finite dimension as a real vector space. Its dimension 
$b_k={\dim} H^k_{dR}(S)$ is called the $k$-th Betty number of $S$ and is a topological invariant. For $k=0$ the closed forms are the locally constant functions. They are closed but not exact, so $H^0_{dR}(S)$ is the space of locally constant functions. 
If $S$ is connected, any constant function is just identified by its value, so $H^0_{dR}(S)=\mathbb R$ and $b_0=1$. On a bidimensional surface, any 2-form is closed. Suppose that $S$ is connected and closed (i.e. has no boundary) and that 
$\omega_1\sim\omega_2$ are two equivalent 2-forms. Then, $\omega_2-\omega_1=d\psi$ for $\psi$ a 1-form, and using Stokes theorem:
\begin{align}
 \int_S \omega_2-\int_S\omega_1=\int_Sd\psi=\int_{\partial S}\psi=0,
\end{align}
so two equivalent forms have the same integral. The vice versa is also true and we get that for a closed connected surface also $b_2=1$. It is difficult but possible to prove that if the surface has genus $g$, then $b_1=2g$. Since $b_j$ are invariants, it
follows that also 
\begin{align}
\chi_{dR}(S):=\sum_{n=0}^2 (-1)^n b_n 
\end{align}
is an invariant. We see that if $S=\Sigma_g$, we get 
\begin{align}
\chi_{dR}(\Sigma_g)=2-2g= \chi_{CW}(\Sigma_g)=\chi_E(\Sigma_g)=\chi_M(\Sigma_g).
\end{align}
Once more, this is a general result. We can now stop here, we will discuss a little bit further about the relation of the de Rham cohomology with other cohomologies in a more general setting (where we will also see a further cohomology, the singular cohomology).

\section{Integrals and Cohomologies}\label{HardMaths}
While calculus with functions of one real variable gives us the impression of a strict relation between differential and integral calculus, it is pretty evident that when passing to more variables, or just to one complex variable, things change drastically, and the
theory of integration looks deeply different. Indeed, the sensitivity of integration to global questions becomes quickly manifest. Integration looks more related to (co)homology and Hodge theory. In this section we want to summarise some facts we probably 
will have to consider in a full program of investigation about the geometry underlying Feynman integrals. Of course, we cannot be exhaustive, but our aim is to be at least suggestive. \\
A nice way to see it is to pass through complete elliptic integrals. They are one dimensional integrals whose structure resembles that of Feynman integrals, hiding the main characterizations we need, and are also the origin of larger dimensional integration 
theory, starting from Poincar\'e and developed by Lefschetz and Hodge \cite{alma991027076759703276,hodge1989the,Hodge1955IntegralsOT,MR0268189,whitney1957geometric}. 
\subsection{Elliptic integrals}
Consider the elliptic integrals of first and second kind, \cite{Hancock:1958:EI}:
\begin{align}
 K(k)&=\int_0^1 \frac {dx}{\sqrt{(1-x^2)(1-k^2x^2)}}=\frac 12 \int_{-1}^1 \frac {dx}{\sqrt{(1-x^2)(1-k^2x^2)}}, \\
 E(k)&=\int_0^1 \frac {\sqrt{1-k^2 x^2}}{\sqrt{1-x^2}}\ dx =\frac 12 \int_{-1}^1 \frac {\sqrt{1-k^2 x^2}}{\sqrt{1-x^2}}\ dx.
\end{align}
It is known that these integrals cannot be expressed in terms of elementary functions of $k$. In both cases the last integral extends between two branch points of the integrands. There are four branch points, at $z_\pm=\pm 1$ and $\tilde z_\pm=\pm k^{-1}$,
$k\neq \pm1$. As the integrand is a multivalued function, we can extend it to a single valued function on a Riemann surface. To this end, we can first extend $x$ to $z$ on the complex plane, with two cuts, one from $-1$ to $1$, the other from $-k^{-1}$
to $k^{-1}$. Notice that in both cases the integrand is regular at infinity, so we can think at it as defined on the Riemann sphere $\mathbb P^1$ with the two given cuts. If we try to cross one of the cuts, the function flips the sign and the only way to keep
the function single valued is to assume that we end up to a second copy of the cut Riemann sphere, where at the doubled point it takes the same value as on the original sphere, but with the opposite sign. With this standard construction we end up with
a well defined single valued function on a two dimensional surface obtained gluing the edges of each cut of a sphere to the edges of the corresponding cut on the second sphere. This is topologically a torus, that is a Riemann surface of genus two or, from 
the complex point of view, an elliptic curve, Fig. \ref{fig:elliptic curve}. 
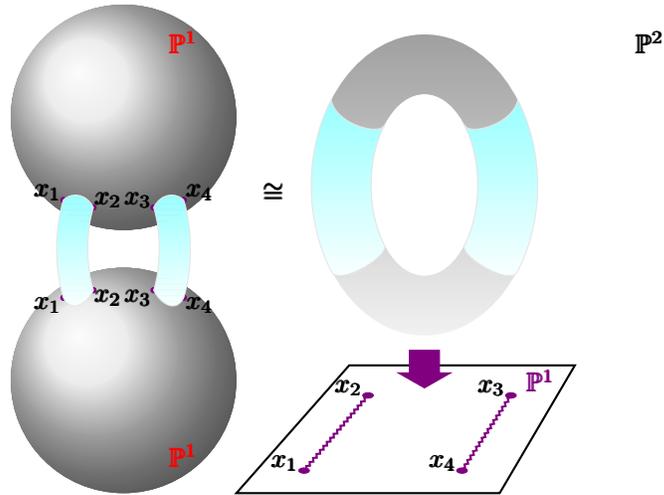
\begin{figure}[!htbp]
\begin{center}
\begin{tikzpicture}[>=latex,decoration={zigzag,amplitude=.5pt,segment length=2pt}]
\shade[ball color=gray!20!white] (-3,2) circle (1.5);
\shade[ball color=gray!20!white] (-3,5.5) circle (1.5);
\filldraw [violet] (-3.8,4.4) circle (1pt) (-3.4,4.3) circle (1pt) (-3.8,3.1) circle (1pt) (-3.4,3.2) circle (1pt);
\filldraw [violet] (-6+3.8,4.4) circle (1pt) (-6+3.4,4.3) circle (1pt) (-6+3.8,3.1) circle (1pt) (-6+3.4,3.2) circle (1pt);
\draw [gray!20!white, top color=cyan!40!white](-6+3.8,4.4) .. controls (-6+3.7,4.5) and (-6+3.5,4.5) .. (-6+3.4,4.3) .. controls (-6+3.5,4) and (-6+3.5,3.5) .. (-6+3.4,3.2) .. controls (-6+3.5,3) and (-6+3.7,2.9) .. (-6+3.8,3.1) .. controls (-6+3.92,3.5) and 
(-6+3.92,4) .. (-6+3.8,4.4); 
\draw [gray!20!white, top color=cyan!40!white](-3.8,4.4) .. controls (-3.7,4.5) and (-3.5,4.5) .. (-3.4,4.3) .. controls (-3.5,4) and (-3.5,3.5) .. (-3.4,3.2) .. controls (-3.5,3) and (-3.7,2.9) .. (-3.8,3.1) .. controls (-3.92,3.5) and (-3.92,4) .. (-3.8,4.4); 
\node at (-4,4.5) {$\pmb {x_1}$};
\node at (-3.2,4.4) {$\pmb {x_2}$};
\node at (-1.98,4.5) {$\pmb {x_4}$};
\node at (-2.8,4.4) {$\pmb {x_3}$};
\node at (-4,3) {$\pmb {x_1}$};
\node at (-3.2,3.1) {$\pmb {x_2}$};
\node at (-1.98,3) {$\pmb {x_4}$};
\node at (-2.8,3.1) {$\pmb {x_3}$};
\draw [white,top color=gray!75!white, bottom color=gray!10!white] (1,4.6) ellipse (1.51 and 2.01);
\filldraw [white] (1,4.6) ellipse (0.7 and 1.2);
\draw [gray!20!white, top color=cyan!40!white](0.47,5.4) arc (138:222.6:0.7 and 1.2) .. controls (0.47,3.6) and (-0.17,3.25) .. (-0.240,3.456) arc (214.6:145.4:1.5 and 2) .. controls (-0.17,5.4) and (0.42,5.25) .. (0.47,5.4) -- cycle;
\draw [gray!20!white, top color=cyan!40!white](2-0.47,5.4) arc (180-138:180-222.6:0.7 and 1.2) .. controls (2-0.47,3.6) and (2+0.17,3.25) .. (2+0.240,3.456) arc (180-214.6:180-145.4:1.5 and 2) .. controls (2+0.17,5.4) and (2-0.42,5.25) .. (2-0.47,5.4) -- cycle;
\draw [thick] (0,2.2) -- (3,2.2) -- (2,0.5) -- (-1.5,0.5) -- cycle;  
\filldraw [violet] (0.25,1.8) ellipse (2pt and 1pt) (2.15,1.8) ellipse (2pt and 1pt) (1.5,0.8) ellipse (2pt and 1pt) (-0.6,0.8) ellipse (2pt and 1pt);
\draw [thick,violet,decorate] (0.25,1.8) -- (-0.6,0.8); \node at (0,1.9) {$\pmb {x_2}$}; \node at (1.9,1.9) {$\pmb {x_3}$}; \node [violet] at (2.55,2) {$\pmb {\mathbb P^1}$};
\draw [thick,violet,decorate] (2.15,1.8) -- (1.5,0.8); \node at (-0.85,0.9) {$\pmb {x_1}$}; \node at (1.25,0.9) {$\pmb {x_4}$};
\filldraw [violet,rotate=-90] (-2.4,1.2) -- (-2.1,1.2) -- (-2.1,1.35) -- (-1.9,1) -- (-2.1,0.65) -- (-2.1, 0.8) -- (-2.4,0.8) -- cycle;
\node [red] at (-2.2,6.5) {$\pmb {\mathbb P^1}$}; \node at (4,6.5) {$\pmb {\mathbb P^2}$};
\node [red] at (-2.2,1) {$\pmb {\mathbb P^1}$};
\node at (-1,4.5) {$\pmb {\cong}$};
\end{tikzpicture}
\caption{Two double cut $\mathbb P^1$ planes glued along the cuts are topologically equivalent to a torus.}
\label{fig:elliptic curve}
\end{center}
\end{figure}
We see that the integrals can then be thought as along the edge of the cut where the root is positive. If we go back integrating along the other edge of the cut, we get the same result. Since the whole cut traveled this way is a closed path $\gamma$ along the 
elliptic curve, we can write 
\begin{align}
K(k)&=\frac 14 \oint_\gamma \frac {dx}{\sqrt{(1-x^2)(1-k^2x^2)}}, \\
 E(k)&=\frac 14 \oint_\gamma \frac {\sqrt{1-k^2 x^2}}{\sqrt{1-x^2}}\ dx. 
\end{align}
Here, $\gamma$ is one of the generators of the first homology group of the elliptic curve. In this form the elliptic integrals looks like a 1-form integrated over a closed cycle.
Since the first homology group of an elliptic curve is two dimensional, there is a second independent curve, which we can represent, for example, as the path going from $z_+$ to $\tilde z_+$ on one of the spheres and coming back from the other sphere. 
This second path,
say $\beta$, intersects $\gamma$ transversally in a point. After having chosen an orientation we can define the intersection product $\beta\cdot \gamma=1$, meaning that the velocities $\dot\beta$ and $\dot \gamma$ at the intersection point form a basis for the tangent space correctly
oriented. Of course this means that $\gamma \cdot \beta=-1$. However, in place of considering the same integral changing $\gamma$ with $\beta$, let us first consider the other canonical integrals $K(k')$ and $E(k')$, with $k'=\sqrt {1-k^2}$. These integrals
are related to the previous ones by the Legendre's quadratic relation
\begin{align}
 K(k)E(k')+E(k)K(k')-K(k)K(k')=\frac \pi2.
\end{align}
To understand the meaning of this quadratic relation it is convenient to pass to a more canonical description of the integrals of first and second kind. Consider an integral of the form
\begin{align}
 I_1=\int_{x_1}^{x_2} \frac {dx}{\sqrt {P(x)}},\label{P1form}
\end{align}
where $P$ is a fourth order polynomial with simple roots among which there are $x_1$ and $x_2$ (not necessarily real). Since the integrand is not single valued, this integral depends on the specification of the path connecting the two roots.
However, it is clear that we can work exactly as before, so that we can write\footnote{$\gamma$ runs along $[-1,1]$ twice, so a further factor $\frac 12$ appears}
\begin{align}
I_1=\frac 12 \int_{\gamma} \frac {dx}{y}, 
\end{align}
where $\gamma$ is a closed path on the genus one complex line defined by the equation 
\begin{align}
 y^2=P(x) 
\end{align}
in $\mathbb P^2$.\footnote{$(x,y)$ are non homogeneous coordinates on $\mathbb P^2$ that can be related to homogeneous ones $(z_0:z_1:z_2)$ by $x=z_1/z_0$, $y=z_2/z_0$ on the patch $z_0\neq 0$.} Notice that in the form (\ref{P1form}) we
are working just on a $\mathbb P^1$ projected component $\mathbb P^2\mapsto \mathbb P^1$ (this is the reason why the branch points appear). On $\mathbb P^1$ there is the action of the M\"obius fractional group $PGL(2,\mathbb C)$, which allows
to move three points in any desired position. It works as follows. Let $z$ be the inhomogeneous coordinate on $\mathbb P^1$, and 
\begin{align}
 A=
\begin{pmatrix}
 a & b\\ c & d
\end{pmatrix}
\in GL(2,\mathbb C).
\end{align}
Its action on $z$ is thus defined by
\begin{align}
 Az=\frac {az+b}{cz+d}.
\end{align}
For $\lambda\in \mathbb C_*$, we se that $\lambda A$ gives the same action as $A$, so it reduces to an action of the projective group $PGL(2,\mathbb C)$.
Moving one of the roots at infinity, this way it is always possible to bring the integral in the form $I_1=\chi J_1$, where 
\begin{align}
J_1= \int_{\gamma} \frac {dx}{y}, 
\end{align}
where now the elliptic curve is defined by the Weierstrass normal  form\footnote{For example, assume $P(x)=a(x-x_1)(x-x_2)(x-x_3)(x-x_4)$, and take the transformation 
\begin{align}
x(z)=\frac {x_1 z-b}{z-d}, \qquad z(x)=\frac {xd-b}{x-x_1}, 
\end{align}
which sends $x_1$ to $\infty$. Then, the integral (\ref{P1form}) takes the form
\begin{align}
 I_1=2\sqrt {\frac {x_1d-b}{a\prod_{j=2}^4 (x_1-x_j)}} \int_{z(x_2)}^\infty \frac {dz}{\sqrt{4(z-z(x_2))(z-z(x_3))(z-z(x_4))}}.
\end{align}
Finally, fixing $b,d$ such that $\sum_{j=2}^4 z(x_j)=0$, we get the Weierstrass normal form.
} \cite{shafar}
\begin{align}
 y^2=4x^3-g_2 x -g_3.
\end{align}
Another choice is to bring it in the Legendre normal form 
\begin{align}
 y^2=(1-x^2)(1-k^2x^2),
\end{align}
so relating it to the complete elliptic integral of the first kind. We are interested in the Weierstrass form now. Associated to it, the elliptic integrals of the second kind can be written in the form 
\begin{align}
J_2= \int_{\gamma} \frac {xdx}{y}. 
\end{align}
Let us now quickly see how $J_1$ and $J_2$ are related to the theory of Weierstrass $\wp$ and $\zeta$ elliptic functions.
For $\omega_1$ and $\omega_2$ two complex numbers such that
\begin{align}
 \tau:=\frac {\omega_2}{\omega_1}
\end{align}
has strictly positive imaginary part, one defines the Weierstrass gamma function $\wp$ as \cite{chandrasekharan1985elliptic}
\begin{align}
 \wp(z)=\frac 1{z^2}+\sum_{(m,n)\in \mathbb Z^2_0} \left( \frac 1{(z-m\omega_1-n\omega_2)^2} -\frac 1{(m\omega_1+n\omega_2)^2} \right),
\end{align}
where $m,n$ are relative integers with $(m,n)\neq (0,0)$. It is biperiodic, with periods $\omega_1$ and $\omega_2$, and satisfies the differential equation
\begin{align}
(\wp')^2=4\wp^3-g_2 \wp -g_3, 
\end{align}
with
\begin{align}
g_2&= \sum_{(m,n)\in \mathbb Z^2_0} \frac {60}{(m\omega_1+n\omega_2)^4},\\
g_3&=\sum_{(m,n)\in \mathbb Z^2_0} \frac {140}{(m\omega_1+n\omega_2)^6}.
\end{align}
The Weierstrass $\zeta$ function is
\begin{align}
 \zeta(z)=\frac 1z+\sum_{(m,n)\in \mathbb Z^2_0} \left( \frac 1{z-m\omega_1-n\omega_2}+\frac 1{m\omega_1+n\omega_2} +\frac z{(m\omega_1+n\omega_2)^2}\right).
\end{align}
It is then clear that $\zeta'(z)=-\wp(z)$. However, $\zeta(z)$ is not biperiodic, but it satisfies the relations
\begin{align}
 \zeta(z+\omega_j)=\zeta(z)+2\eta_j, \quad \eta_j=\zeta(\frac {\omega_j}2), \quad j=1,2,3,
\end{align}
with $\omega_3=\omega_1+\omega_2$. In this way, the Legendre's quadratic relation takes the form
\begin{align}
 \eta_1\omega_2-\omega_1\eta_2=i\pi,
\end{align}
which can be easily proven after integrating $\zeta$ along a fundamental parallelogram centred in $0$, with fundamental periods as edges \cite{chandrasekharan1985elliptic}. This means that the rectangle is $(abcd)$, with
\begin{align}
 (a,b,c,d)=(-\frac {\omega_1+\omega_2}2, \frac {\omega_1-\omega_2}2, \frac {\omega_1+\omega_2}2, \frac {-\omega_1+\omega_2}2), 
\end{align}
see figure \ref{fig:chandrasekaran}.
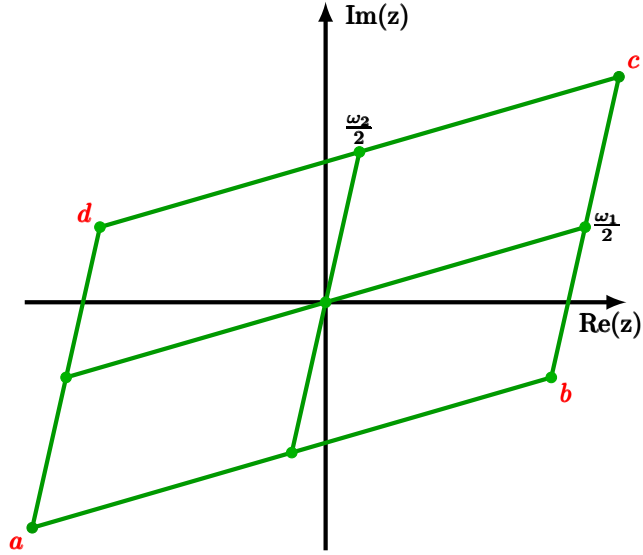
\begin{figure}[!htbp]
\begin{center}
\begin{tikzpicture}[>=latex,decoration={zigzag,amplitude=.5pt,segment length=2pt}]
\draw [ultra thick,->] (-4,0) -- (4,0);
\draw [ultra thick,->] (0,-3.3) -- (0,4);
\draw [black!40!green,ultra thick] (-3.9,-3) -- (3,-1) -- (3.9,3) -- (-3,1) -- cycle;
\draw [black!40!green,ultra thick] (-3.45,-1) -- (3.45,1);
\draw [black!40!green,ultra thick] (-0.45,-2) -- (0.45,2);
\filldraw [black!40!green] (3.9,3) circle (2pt);
\filldraw [black!40!green] (-3.9,-3) circle (2pt); 
\filldraw [black!30!green] (3,-1) circle (2pt);
\filldraw [black!30!green] (-3,1) circle (2pt);
\filldraw [black!30!green] (0,0) circle (2pt);
\filldraw [black!30!green] (3.45,1) circle (2pt);
\filldraw [black!30!green] (-3.45,-1) circle (2pt);
\filldraw [black!30!green] (0.45,2) circle (2pt);
\filldraw [black!30!green] (-0.45,-2) circle (2pt);
\node at (3.8,-0.3) {$\pmb {\rm{Re}(z)}$};
\node at (0.7,3.8) {$\pmb {\rm{Im}(z)}$};
\node at (3.75,1) {$\pmb {\frac {\omega_1}2}$};
\node at (0.45,2.3) {$\pmb {\frac {\omega_2}2}$};
\node [red] at (-4.1,-3.2) {$\pmb {a}$};
\node [red] at (4.1,3.2) {$\pmb {c}$};
\node [red] at (3.2,-1.2) {$\pmb {b}$};
\node [red] at (-3.2,1.2) {$\pmb {d}$};
\end{tikzpicture}
\caption{Fundamental parallelogram.}
\label{fig:chandrasekaran}
\end{center}
\end{figure}
Now, in the fundamental region, the function $z$ has a unique simple pole, sited in $z=0$, with residue 1. Therefore, we can write
\begin{align}
 2\pi i=\int_{(abcd)} \zeta(z) d z=\int_a^b \zeta(z) dz+\int_b^c \zeta(z) dz+\int_c^d \zeta(z) dz+\int_d^a \zeta(z) dz. \label{2pi}
\end{align}
On the other hand, we have that
\begin{align}
 (d,c)=(a+\omega_2, b+\omega_2), \qquad (b,c)=(a+\omega_1, d+\omega_1),
\end{align}
which imply
\begin{align}
 \int_b^c \zeta(z) dz=& \int_{a+\omega_1}^{d+\omega_1} \zeta(z) dz=\int_{a}^{d} \zeta(t+\omega_1) dt=\int_{a}^{d} (\zeta(t)+2\eta_1) dt=\int_{a}^{d} \zeta(t) dt+2\eta_1(d-a)\cr
 =&\int_{a}^{d} \zeta(t) dt+2\eta_1\omega_2,
\end{align}
and
\begin{align}
 \int_c^d \zeta(z) dz=& -\int_d^c \zeta(z) dz=-\int_{a+\omega_2}^{b+\omega_2} \zeta(z) dz=-\int_{a}^{b} \zeta(t+\omega_2) dt=-\int_{a}^{b} (\zeta(t)+2\eta_2) dt\cr
 =&-\int_{a}^{b} \zeta(t) dt-2\eta_2(b-a)=-\int_{a}^{b} \zeta(t) dt-2\eta_2\omega_1.
\end{align}
Inserted in (\ref{2pi}) we get
\begin{align}
 2\pi i=2\eta_1\omega_2-2\eta_2\omega_1,
\end{align}
which is the assert.\\
Finally, it is also possible to prove (but we omit the proof) that 
\begin{align}
 \omega_1&=J_1, \qquad \eta_1=\frac 12 J_2,\\
 \omega_2&=\tilde J_1, \qquad \eta_2=\frac 12 \tilde J_2,
\end{align}
where $\tilde J_k$ are obtained replacing $\gamma$ with $\beta$, so we can further rewrite the Legendre's relation in the form
\begin{align}
 \tilde J_1 J_2-\tilde J_2 J_1=2\pi i. \label{Legendrepz}
\end{align}
It is in this form that the Legendre's relation has its deepest geometrical meaning.
\subsection{Riemann's bilinear relations and intersections}
To interpret the above result we pass to a more general situation and consider an oriented Riemann surface $\Sigma$ of genus $g>0$ (the surface of a donut with $g$ holes or a sphere with $g$ handles). It is well known, \cite{dubrovin1984modern}, Lemma 3.14,
that we can realize it from a $4g$-gone with edges,
in clockwise order, $a_1b_1a_1^{-1} b_1^{-1}\cdots a_gb_ga_g^{-1} b_g^{-1}$, where $a^{-1}$ means the edge is oriented counterclockwise, gluing each edge with its ``inverse'', according to the orientation. Any $a_j$ and any $b_j$ is a closed curve 
and they satisfy the intersection product relations
\begin{align}
a_j\cdot a_k=b_j\cdot b_k=0, \qquad a_j\cdot b_k=-b_k\cdot a_j=\delta_{jk}. 
\end{align}
These curves are homotopically distinct and are a (canonical) basis for the first homology group $H_1(\Sigma,\mathbb Z)\simeq \mathbb Z^{2g}$. Given a closed 1-form $\omega$ on $\Sigma$, one defines its periods relative to the given basis as
\begin{align}
 \pi^a_j (\omega) := \oint_{a_j} \omega, \qquad \pi^b_j (\omega) := \oint_{b_j} \omega.
\end{align}
Given two closed forms $\omega_1$ and $\omega_2$, using Stokes's theorem and closures, it is not difficult to prove that the following identity is always true 
\begin{align}
 \int_\Sigma \omega_1\wedge \omega_2= \sum_{j=1}^g \left[\pi^a_j(\omega_1) \pi^b_j(\omega_2)-\pi^b_j(\omega_1) \pi^a_j(\omega_2)\right].\label{RBR}
\end{align}
\begin{proof}
\noindent Assume that $\Sigma$ is a closed oriented Riemann surface of genus $g$. 
Now, let us notice that if we cut $\Sigma$ along the representatives $a_j$ and $b_j$ we get the $4g$-gon $\tilde \Sigma$ associated to the word $a_1 b_1 a_1^{-1} b_1^{-1}\cdots a_j b_j a_j^{-1} b_j^{-1}\cdots a_g b_g a_g^{-1} b_g^{-1}$. 
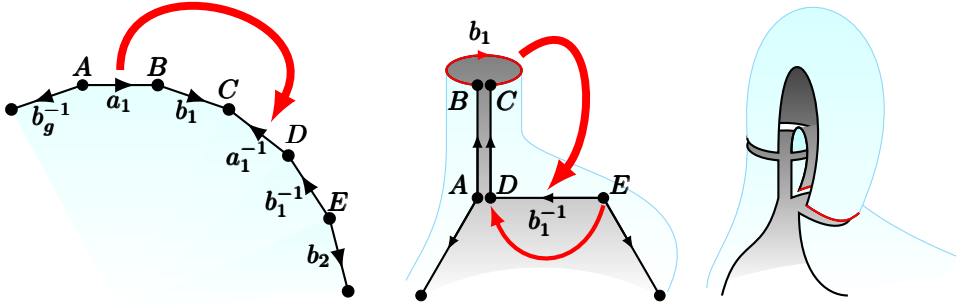
\begin{figure}[!htbp]
\begin{center}
\begin{tikzpicture}[>=latex,decoration={zigzag,amplitude=.5pt,segment length=2pt}]
\shade [top color=cyan!10!white] (0,0) -- (-0.5,3) -- (0.5,3) -- cycle;
\shade [top color=cyan!10!white,rotate=2*9.46232] (0,0) -- (-0.5,3) -- (0.5,3) -- cycle;
\shade [top color=cyan!10!white,rotate=-2*9.46232] (0,0) -- (-0.5,3) -- (0.5,3) -- cycle;
\shade [top color=cyan!10!white,rotate=-4*9.46232] (0,0) -- (-0.5,3) -- (0.5,3) -- cycle;
\shade [top color=cyan!10!white,rotate=-6*9.46232] (0,0) -- (-0.5,3) -- (0.5,3) -- cycle;
\shade [top color=cyan!10!white,rotate=-8*9.46232] (0,0) -- (-0.5,3) -- (0.5,3) -- cycle;
\draw [thick] (-0.5,3) -- (0.5,3);
\draw [thick,rotate=2*9.46232] (-0.5,3) -- (0.5,3);
\draw [thick,rotate=-2*9.46232] (-0.5,3) -- (0.5,3);
\draw [thick,rotate=-4*9.46232] (-0.5,3) -- (0.5,3);
\draw [thick,rotate=-6*9.46232] (-0.5,3) -- (0.5,3);
\draw [thick,rotate=-8*9.46232] (-0.5,3) -- (0.5,3);
\filldraw (-0.5,3) circle (2pt);
\filldraw [thick,rotate=2*9.46232] (-0.5,3) circle (2pt);
\filldraw [thick,rotate=-2*9.46232] (-0.5,3) circle (2pt);
\filldraw [thick,rotate=-4*9.46232] (-0.5,3) circle (2pt);
\filldraw [thick,rotate=-6*9.46232] (-0.5,3) circle (2pt);
\filldraw [thick,rotate=-8*9.46232] (-0.5,3) circle (2pt);
\filldraw [thick,rotate=-10*9.46232] (-0.5,3) circle (2pt);
\draw [ultra thick,->] (0,3) -- (0.2,3);
\draw [ultra thick,->,rotate=-2*9.46232] (0,3) -- (0.2,3);
\draw [ultra thick,->,rotate=-8*9.46232] (0,3) -- (0.2,3);
\draw [ultra thick,->,rotate=2*9.46232] (0,3) -- (-0.2,3);
\draw [ultra thick,->,rotate=-4*9.46232] (0,3) -- (-0.2,3);
\draw [ultra thick,->,rotate=-6*9.46232] (0,3) -- (-0.2,3);
\draw [line width=3pt,->,red] (0,3.2) .. controls (0,4.7) and (2.8,4) .. (2,2.5);
\shade [top color=cyan!10!white] (7+2*0.0868241,0.2) -- (6.25+2*0.0868241,0.2+1.5*0.866025) -- (4.75+2*0.0868241,0.2+1.5*0.866025) -- (4.75+2*0.0868241,0.2+1.5*0.866025+1.5) arc (-80:0:0.5 and 0.2) -- (4.75+0.5-0.0868241+2*0.0868241,0.2+2.2) 
.. controls (4.75+0.5-0.0868241+2*0.0868241,0.2+1.7) and (7+2*0.1368241+0.75,1.75) .. (7+2*0.1368241,0.25) -- cycle;
\draw [cyan!40!white] (7+2*0.0868241,0.2) -- (6.25+2*0.0868241,0.2+1.5*0.866025) -- (4.75+2*0.0868241,0.2+1.5*0.866025) -- (4.75+2*0.0868241,0.2+1.5*0.866025+1.5) arc (-80:0:0.5 and 0.2) -- (4.75+0.5-0.0868241+2*0.0868241,0.2+2.2) 
.. controls (4.75+0.5-0.0868241+2*0.0868241,0.2+1.7) and (7+2*0.1368241+0.75,1.75) .. (7+2*0.1368241,0.25);
\shade [top color=cyan!10!white] (4,0.2) -- (4.75,0.2+1.5*0.866025) -- (4.75,0.2+1.5*0.866025+1.5) arc (260:180:0.5 and 0.2) -- (4.75-0.5+0.0868241,2.2) .. controls (4.75-0.5+0.0868241,1.2) and (4,0.8) .. (3.8,0.4) -- cycle;
\draw [cyan!40!white] (4,0.2) -- (4.75,0.2+1.5*0.866025) -- (4.75,0.2+1.5*0.866025+1.5) arc (260:180:0.5 and 0.2) -- (4.75-0.5+0.0868241,2.2) .. controls (4.75-0.5+0.0868241,1.2) and (4,0.8) .. (3.8,0.4);
\shade [cyan!60!black] (4,0.2) -- (4.75,0.2+1.5*0.866025) -- (4.75,0.2+1.5*0.866025+1.5) arc (260:-80:0.5 and 0.2) -- (4.75+2*0.0868241,0.2+1.5*0.866025) -- (6.25+2*0.0868241,0.2+1.5*0.866025) -- (7+2*0.0868241,0.2) .. controls (7+2*0.0868241-1,0.2+0.5) 
and (4+1,0.2+0.5) .. (4,0.2);
\draw [thick] (4,0.2) -- (4.75,0.2+1.5*0.866025) -- (4.75,0.2+1.5*0.866025+1.5) arc (260:-80:0.5 and 0.2) -- (4.75+2*0.0868241,0.2+1.5*0.866025) -- (6.25+2*0.0868241,0.2+1.5*0.866025) -- (7+2*0.0868241,0.2);
\draw [red,thick] (4.75,0.2+1.5*0.866025+1.5) arc (260:-80:0.5 and 0.2);
\filldraw (4,0.2) circle (2pt) (4.75,0.2+1.5*0.866025) circle (2pt) (4.75,0.2+1.5*0.866025+1.5) circle (2pt) (4.75+2*0.0868241,1.7+1.5*0.866025) circle (2pt) (4.75+2*0.0868241,0.2+1.5*0.866025) circle (2pt) (6.25+2*0.0868241,0.2+1.5*0.866025) circle (2pt)
(7+2*0.0868241,0.2) circle (2pt);
\draw [ultra thick, red, ->] (6.25+2*0.0868241,0.1+1.5*0.866025) .. controls (-0.25+6.25+2*0.0868241,1.5*0.866025-0.8) and (4.85+2*0.0868241+0.25,1.5*0.866025-0.8) .. (4.75+2*0.0868241,0.1+1.5*0.866025);
\draw [line width=3pt,->,red] (4.75+2*0.0868241+0.4,1.7+1.5*0.866025+0.4) .. controls (4.75+2*0.0868241+1+0.4,1.7+1.5*0.866025+0.8+0.4) and (5.5+2*0.0868241+0.7,0.2+1.5*0.866025+0.2+0.7) .. (5.5+2*0.0868241,0.3+1.5*0.866025);
\draw [thick,->] (4.75,0.2+1.5*0.866025) -- (4.75,1.05+1.5*0.866025);
\draw [thick,->] (4.75,0.2+1.5*0.866025) -- (4.75-0.4,0.2+0.7*0.866025);
\draw [thick,->] (4.75+2*0.0868241,0.2+1.5*0.866025) -- (4.75+2*0.0868241,1.05+1.5*0.866025);
\draw [thick,->] (6.25+2*0.0868241,0.2+1.5*0.866025) -- (6.25+2*0.0868241-0.85,0.2+1.5*0.866025);
\draw [thick,->] (6.25+2*0.0868241,0.2+1.5*0.866025) -- (6.25+2*0.0868241+0.4,0.2+0.7*0.866025);
\draw [thick,->,red] (4.75+0.0868241,1.7+1.5*0.866025+0.4) -- (4.75+0.0868241+0.1,1.7+1.5*0.866025+0.4);
\shade [cyan!60!black] (7.8,0.3) .. controls (7.8,0.7) and (-0.5+0.0868241+8.75,1.1) .. (8.75+0.0868241-0.5,0.2+1.5*0.866025+0.5+0.196962) arc (180:0:0.5 and 0.2) .. controls (8.75+0.0868241+0.5,0.2+1.5*0.866025) and (10.5+2*0.0868241,0.9) .. 
(11+2*0.0868241,0.7) .. controls (9.5+2*0.0868241,0.7) and (8.8,0.7) .. (7.8,0.3);
\shade [top color=cyan!10!white] (8,0.2) .. controls (8+0.2,0.2+0.8) and (8.75-0.2,0.2+1.5*0.866025-0.8) .. (8.75,0.2+1.5*0.866025) -- (8.75,0.2+1.5*0.866025+0.5) arc (260:180:0.5 and 0.2) .. controls (-0.5+0.0868241+8.75,1.1) and (7.8,0.7) .. (7.8,0.3) -- cycle;
\draw [cyan!40!white] (8,0.2) .. controls (8+0.2,0.2+0.8) and (8.75-0.2,0.2+1.5*0.866025-0.8) .. (8.75,0.2+1.5*0.866025) -- (8.75,0.2+1.5*0.866025+0.5) arc (260:180:0.5 and 0.2) .. controls (-0.5+0.0868241+8.75,1.1) and (7.8,0.7) .. (7.8,0.3);
\shade [top color=cyan!10!white] (9.5+2*0.0868241,0.2) .. controls (9.2+2*0.0868241,0.2) and (8.85+2*0.0868241,1.5*0.866025-0.8) .. (8.85+2*0.0868241,1.5*0.866025) .. controls (9.85+2*0.0868241,1.5*0.866025-0.6) and (9.75+2*0.0868241,0.6+1.5*0.866025)
.. (8.75+2*0.0868241,0.2+1.5*0.866025) -- (8.75+2*0.0868241,0.2+1.5*0.866025+0.5) arc (-80:0:0.5 and 0.2) .. controls (8.75+0.0868241+0.5,0.2+1.5*0.866025) and (10.5+2*0.0868241,0.9) .. (11+2*0.0868241,0.7) .. controls (11+2*0.0868241,0.3)
and (9.5+2*0.0868241,0.4) .. (9.5+2*0.0868241,0.2);
\draw [cyan!40!white] (9.5+2*0.0868241,0.2) .. controls (9.2+2*0.0868241,0.2) and (8.85+2*0.0868241,1.5*0.866025-0.8) .. (8.85+2*0.0868241,1.5*0.866025) .. controls (9.85+2*0.0868241,1.5*0.866025-0.6) and (9.75+2*0.0868241,0.6+1.5*0.866025)
.. (8.75+2*0.0868241,0.2+1.5*0.866025) -- (8.75+2*0.0868241,0.2+1.5*0.866025+0.5) arc (-80:0:0.5 and 0.2) .. controls (8.75+0.0868241+0.5,0.2+1.5*0.866025) and (10.5+2*0.0868241,0.9) .. (11+2*0.0868241,0.7);
\draw [thick] (8,0.2) .. controls (8+0.2,0.2+0.8) and (8.75-0.2,0.2+1.5*0.866025-0.8) .. (8.75,0.2+1.5*0.866025) -- (8.75,0.2+1.5*0.866025+0.5) arc (260:-80:0.5 and 0.2) -- (8.75+2*0.0868241,0.2+1.5*0.866025) .. controls (9.75+2*0.0868241,0.6+1.5*0.866025) and 
(9.85+2*0.0868241,1.5*0.866025-0.6) .. (8.85+2*0.0868241,1.5*0.866025) .. controls (8.85+2*0.0868241,1.5*0.866025-0.8) and (9.2+2*0.0868241,0.2) .. (9.5+2*0.0868241,0.2);
\shade [top color=black, bottom color=black!50!white] (8.75+0.0868241-0.5,0.2+1.5*0.866025+0.6+0.196962) arc (180:0:0.5 and 0.2) .. controls (8.75+0.0868241+0.5,0.2+1.5*0.866025+0.6+0.196962+2) and 
(8.75+0.0868241-0.5,0.2+1.5*0.866025+0.6+0.196962+2) .. (8.75+0.0868241-0.5,0.2+1.5*0.866025+0.6+0.196962) -- cycle;
\draw [thick] (8.75+0.0868241+0.5,0.2+1.5*0.866025+0.6+0.196962) arc (0:180:0.5 and 0.2); 
\shade [top color=cyan!30!white, bottom color=cyan!10!white] (8.75+0.0868241+0.5,0.2+1.5*0.866025+0.6+0.196962) arc (0:-80:0.5 and 0.2) .. controls (8.75+2*0.0868241,1.5*0.866025+1.5) and (8.75+2*0.0868241+0.4,0.2+1.5*0.866025+0.8) .. 
(8.75+2*0.0868241+0.1,0.3+1.5*0.866025);
\draw [thick] (8.75+2*0.0868241+0.1,0.3+1.5*0.866025) .. controls (8.75+2*0.0868241+0.4,0.2+1.5*0.866025+0.8) and (8.75+2*0.0868241,1.5*0.866025+1.5) .. (8.75+2*0.0868241,1.5*0.866025+0.8) arc (-80:0:0.5 and 0.2);  
\shade [top color=black!50!white, bottom color=black!10!white]  (8.85+2*0.0868241,1.5*0.866025+1.138) .. controls (8.84+2*0.0868241+0.09,1.5*0.866025+1.14) and (8.75+2*0.0868241+0.1+0.3/1.45,0.3+1.5*0.866025+0.7/1.45) .. 
(8.75+2*0.0868241+0.1,0.3+1.5*0.866025) .. controls (9.85+2*0.0868241,0.7+1.5*0.866025) and (9.95+2*0.0868241,1.5*0.866025-0.5) .. (8.95+2*0.0868241,1.5*0.866025+0.1) -- (8.85+2*0.0868241+0.7,1.5*0.866025+1.338) .. controls 
(8.85+2*0.0868241+0.4,1.5*0.866025+1.238) and (8.85+0.2+2*0.0868241,1.5*0.866025+1.138) .. (8.85+2*0.0868241,1.5*0.866025+1.138) -- cycle;
\draw [thick] (8.75+2*0.0868241,1.5*0.866025+0.8) .. controls (8.75+2*0.0868241,1.5*0.866025+1.5) and (8.75+2*0.0868241+0.4,0.2+1.5*0.866025+0.8) .. (8.75+2*0.0868241+0.1,0.3+1.5*0.866025) .. controls (9.85+2*0.0868241,0.7+1.5*0.866025) and
(9.95+2*0.0868241,1.5*0.866025-0.5) .. (8.95+2*0.0868241,1.5*0.866025+0.1); 
\draw [red,thick] (8.75+2*0.0868241+0.1,0.3+1.5*0.866025) .. controls (9.85+2*0.0868241,0.7+1.5*0.866025) and (9.95+2*0.0868241,1.5*0.866025-0.5) .. (8.95+2*0.0868241,1.5*0.866025+0.1); 
\shade [top color=cyan!20!white, bottom color=cyan!5!white] (8.95+2*0.0868241,1.5*0.866025+0.1) .. controls (8.95+2*0.0868241+0.5,1.5*0.866025+1.6) and (8.75,1.5*0.866025+3) .. (8.75,1.5*0.866025+0.8) arc (260:180:0.5 and 0.2) .. controls 
(8.75+0.0868241-0.7,0.2+1.5*0.866025+0.5+0.196962+3) and
(8.75+0.0868241-0.7+3,0.2+1.5*0.866025+0.5+0.196962+2) .. (8.85+2*0.0868241+0.8,1.5*0.866025) .. controls (8.85+2*0.0868241+0.7,1.5*0.866025-0.08) and (9.41+2*0.0868241,1.5*0.866025-0.23) .. (8.95+2*0.0868241,1.5*0.866025+0.1);
\draw [cyan!40!white] (8.95+2*0.0868241,1.5*0.866025+0.1) .. controls (8.95+2*0.0868241+0.5,1.5*0.866025+1.6) and (8.75,1.5*0.866025+3) .. (8.75,1.5*0.866025+0.8) arc (260:180:0.5 and 0.2) .. controls 
(8.75+0.0868241-0.7,0.2+1.5*0.866025+0.5+0.196962+3) and
(8.75+0.0868241-0.7+3,0.2+1.5*0.866025+0.5+0.196962+2) .. (8.85+2*0.0868241+0.8,1.5*0.866025) .. controls (8.85+2*0.0868241+0.7,1.5*0.866025-0.08) and (9.41+2*0.0868241,1.5*0.866025-0.23) .. (8.95+2*0.0868241,1.5*0.866025+0.1) -- cycle;
\draw [thick] (8.85+2*0.0868241+0.8,1.5*0.866025) .. controls (8.85+2*0.0868241+0.7,1.5*0.866025-0.08) and (9.41+2*0.0868241,1.5*0.866025-0.23) .. (8.95+2*0.0868241,1.5*0.866025+0.1) .. controls (8.95+2*0.0868241+0.5,1.5*0.866025+1.6) and 
(8.75,1.5*0.866025+3) .. (8.75,1.5*0.866025+0.8) arc (260:180:0.5 and 0.2); 
\draw [red,thick] (8.85+2*0.0868241+0.8,1.5*0.866025) .. controls (8.85+2*0.0868241+0.7,1.5*0.866025-0.08) and (9.41+2*0.0868241,1.5*0.866025-0.23) .. (8.95+2*0.0868241,1.5*0.866025+0.1);
\node at (-0.5,3.25) {$\pmb A$};
\node at (0.5,3.25) {$\pmb B$};
\node at (1.45,2.95) {$\pmb C$};
\node at (2.35,2.35) {$\pmb D$};
\node at (2.9,1.45) {$\pmb E$};
\node at (0,2.75) {$\pmb {a_1}$};
\node at (0.9,2.65) {$\pmb {b_1}$};
\node at (-0.9,2.55) {$\pmb {b^{-1}_g}$};
\node at (1.7,2.05) {$\pmb {a^{-1}_1}$};
\node at (2.2,1.45) {$\pmb {b^{-1}_1}$};
\node at (2.65,0.73) {$\pmb {b_2}$};
\node at (4.5,0.2+1.5*0.866025+0.18) {$\pmb A$};
\node at (4.5,0.2+1.5*0.866025+1.3) {$\pmb B$};
\node at (5.15,0.2+1.5*0.866025+0.18) {$\pmb D$};
\node at (5.15,0.2+1.5*0.866025+1.3) {$\pmb C$};
\node at (6.65,0.2+1.5*0.866025+0.18) {$\pmb E$};
\node at (4.8,3.7) {$\pmb {b_1}$};
\node at (5.7,1.2) {$\pmb {b^{-1}_1}$};
\end{tikzpicture}
\caption{Riemann surfaces from the fundamental $4g$-gon.}
\label{Fig2}
\end{center}
\end{figure}
This polygon is simply connected so any closed form restricted to it is exact. In particular, this is true for $\omega_2$, which has the form $\omega_2=d\phi_2$, for $\phi_2$ a well defined function on $\tilde \Sigma$ (but not on $\Sigma$). More precisely
we can fix a point $o$ in $\tilde \Sigma$ (for example the center). Since the polygon is convex, any $x\in \tilde \Sigma$ is connected to $o$ by a segment and we can write
\begin{align}
\phi_2(x)\equiv \int_o^x \omega_2, 
\end{align}
where the integration is intended along the segment. Now, if $\omega_2$ is exact and $\omega_1$ is closed (indeed exact too on the simply connected domain), then also $\omega_1\wedge \omega_2$ is exact.
Indeed, the 1-form $\Omega=-\phi_2 \omega_1$ has the property that $d\Omega=\omega_1 \wedge \omega_2$, and is well defined on $\tilde \Sigma$. Finally, since as sets
\begin{align}
 \Sigma -\bigcup_{j=1}^g (a_j \bigcup b_j)=\tilde \Sigma -\partial \tilde \Sigma,
\end{align}
together with $\bigcup_{j=1}^g (a_j \bigcup b_j)$ and $\partial \tilde \Sigma$ are subsets of vanishing measure of $\Sigma$ and $\tilde \Sigma$ respectively, we have
\begin{align}
 \int_\Sigma \omega_1\wedge \omega_2&=\int_{\Sigma -\bigcup_{j=1}^g (a_j \bigcup b_j)} \omega_1\wedge \omega_2=\int_{\tilde \Sigma-\partial \tilde \Sigma} \omega_1\wedge \omega_2=\int_{\tilde \Sigma} \omega_1\wedge \omega_2
 =\int_{\tilde \Sigma}d\Omega =\int_{\partial\tilde \Sigma}\Omega\cr &=-\int_{\partial\tilde \Sigma} \phi_2 \omega_1 =-\int_{\partial\tilde \Sigma} \omega_1(x) \int_o^x \omega_2,
\end{align}
where the main trick has been to rewrite the integral over a region where the integrand is exact, in order to be able to use the Stokes theorem. Since the boundary $\partial \tilde \Sigma$ is the (oriented) union of the edges of the polygon, we then have
\begin{align}
 \int_\Sigma \omega_1\wedge \omega_2&=-\sum_{j=1}^g \left( \int_{a_j} \omega_1(x) \int_o^x \omega_2+\int_{b_j} \omega_1(x) \int_o^x \omega_2+\int_{a_j^{-1}} \omega_1(x) \int_o^x \omega_2+\int_{b_j^{-1}} \omega_1(x) \int_o^x \omega_2 \right).
\end{align}
Now, consider the integrals
\begin{align}
I^a_j\equiv \int_{a_j} \omega_1(x) \int_o^x \omega_2+\int_{a_j^{-1}} \omega_1(x) \int_o^x \omega_2.
\end{align}
The forms $\omega_1$ and $\omega_2$ are well defined on $\Sigma$, so they take the same values on the identified points of $a_j$ and $a_j^{-1}$. If $y(x)$ is the point on $a_j^{-1}$ which identifies with $x\in a_j$ in gluing $\tilde \Sigma$ to form $\Sigma$,
then, noticing that $a_j^{-1}$ is oriented in the opposite sense of $a_j$, we can write
\begin{align}
I^a_j= \int_{a_j} \omega_1(x) \left(\int_o^x \omega_2-\int_o^{y(x)} \omega_2 \right)=\int_{a_j} \omega_1(x) \left(\int_o^x \omega_2+\int_{y(x)}^o \omega_2 \right).
\end{align}
The sum of the two integrals of $\omega_2$ is just equivalent to the integral along a curve from $y(x)$ to $x$, which is homotopically equivalent to $b_j$ traveled in the opposite orientation (from $a_j^{-1}$ to $a_j$), see Fig. \ref{fig:riemann-bilinear}. 
\begin{figure}[!htbp]
\begin{center}
\begin{tikzpicture}[>=latex,decoration={zigzag,amplitude=.5pt,segment length=2pt}]
\draw [ultra thick,fill=white!70!gray] (-3,0) -- (3,0) -- (2.5,-4) -- (-3.5,-4) -- cycle;
\draw [red, ultra thick] (1.5,-4) -- (-0.25,-2) -- (2,0);
\draw [red,->,ultra thick] (1.5,-4) -- (0.625,-3);
\draw [red,->,ultra thick] (-0.25,-2) -- (0.875,-1);
\draw [red,dashed, ultra thick] (1.5,-4) -- (2,0);
\draw [red,dashed,->, ultra thick] (1.5,-4) -- (1.75,-2);
\draw [violet, <->,ultra thick] (2.2,0.1) .. controls (5,0.8) and (4.5,-4.8) .. (1.7,-4.1);
\draw [->,ultra thick] (-3,0) -- (0.1,0);
\draw [->,ultra thick] (3,0) -- (2.75,-2);
\draw [->,ultra thick] (-3.5,-4) -- (-0.6,-4);
\draw [->,ultra thick] (-3,0) -- (-3.25,-2);
\filldraw [red] (1.5,-4) circle (2pt) (2,0) circle (2pt);
\filldraw (-0.25,-2) circle (2pt);
\node at (0,0.25) {$\pmb {a}$};
\node at (3,-2) {$\pmb {b}$};
\node at (-0.6,-4.3) {$\pmb {a^{-1}}$};
\node at (-3.6,-2) {$\pmb {b^{-1}}$};
\node at (-0.5,-2) {$\pmb {o}$};
\node [red] at (2,0.3) {$\pmb {x}$};
\node [red] at (1.5,-4.3) {$\pmb {y(x)}$};
\node [violet] at (5.5,-2) {identified in $\Sigma$};
\node [violet] at (-2,-2) {\pmb {$\tilde \Sigma$}};
\node [violet] at (0,-1) {\pmb{$\int_o^x \omega_2$}};
\node [violet] at (-0.3,-3) {\pmb{$\int_{y(x)}^o \omega_2$}};
\end{tikzpicture}
\caption{Construction of the integral $I_b$ in case $g=1$.}
\label{fig:riemann-bilinear}
\end{center}
\end{figure}
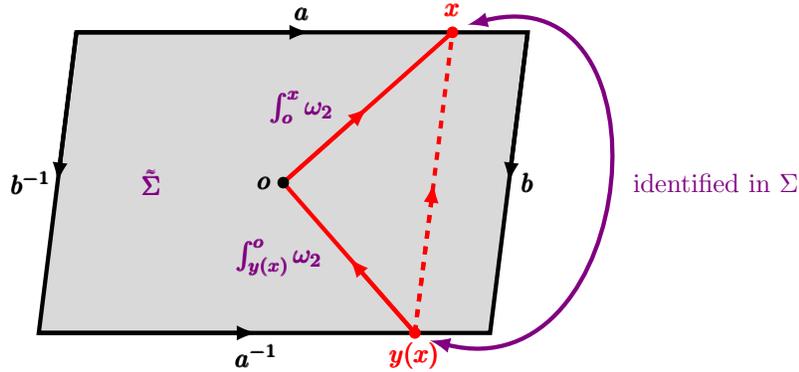
Therefore,
\begin{align}
 \int_o^x \omega_2+\int_{y(x)}^0 \omega_2=-\int_{b_j} \omega_2=-\pi^b_{j}(\omega_2)
\end{align}
is independent from $x$. So, we can write
\begin{align}
 I^a_j=-\pi^b_{j}(\omega_2)\int_{a_j} \omega_1(x)=-\pi^b_{j}(\omega_2) \pi^a_j(\omega_1).
\end{align}
Let us now see what changes for the remaining integrals
\begin{align}
I^b_j\equiv \int_{b_j} \omega_1(x) \int_o^x \omega_2+\int_{b_j^{-1}} \omega_1(x) \int_o^x \omega_2.
\end{align}
Reasoning as above, if $y(x)$ is the point on $b_j^{-1}$ identified with $x$ on $b_j$ when constructing $\Sigma$ from $\tilde \Sigma$, we can write
\begin{align}
I^b_j= \int_{b_j} \omega_1(x) \left(\int_o^x \omega_2+ \int_{y(x)}^o \omega_2\right).
\end{align}
Now, the path $y(x)\to o\to x$ on $\Sigma$ becomes a path homotopic to $a_j^{-1}$ traveled in the opposite direction, so it is homotopic to $a_j$ and, therefore
\begin{align}
 \int_o^x \omega_2+ \int_{y(x)}^o \omega_2=\pi^a_j(\omega_2).
\end{align}
Thus,
\begin{align}
I^b_j= \int_{b_j} \omega_1(x) \pi^a_j(\omega_2)=\pi^a_j(\omega_2)\pi^b_j(\omega_1).
\end{align}
Since
\begin{align}
  \int_\Sigma \omega_1\wedge \omega_2=-\sum_{j=1}^g (I^a_j+I^b_j), 
\end{align}
we get the assertion. 
\end{proof}

This wonderful formula, known as Riemann's bilinear relations, gives (\ref{Legendrepz}) as a very particular case and provides the geometric interpretation we were looking for. 
Indeed it leads to the concept of intersection number between closed forms, or, as we will see in a moment, between cohomology cocycles. 
The transversality of the homology curves is now replaced by the fact that in any given point $\omega_1\wedge \omega_2$ is non degenerate as a bilinear (or multilinear) map on the tangent space
to $\Sigma$. The directions where $\omega_1$ vanishes define locally a curve. Similarly, $\omega_2$ defines locally a second curve. These two curves meet transversally in the given point. This curves are essentially the Poincar\'e duals of the forms and this 
naive description shows its role in providing a geometrical interpretation of the intersection number between forms. When $\Sigma$ is provided by a Riemannian metric $\pmb h$, then, in local coordinates, we can write
\begin{align}
 \omega_1\wedge \omega_2=\omega_{1j} \omega_{2k} dx^j \wedge dx^k=\omega_{1j} \omega_{2k} \frac 1{\sqrt h} \varepsilon^{jk} \sqrt{h} dx^1\wedge dx^2,
\end{align}
where $\varepsilon^{jk}$ is the usual Levi-Civita tensor density, and Einstein's summation convention is adopted. We set 
\begin{align}
 w_2^j\equiv \omega_{2k} \frac 1{\sqrt h} \varepsilon^{jk},
\end{align}
the Hodge dual of $\omega_2$, and identify $\sqrt{h} dx^1\wedge dx^2=dV$ as the volume element on $\Sigma$. Finally, since closed forms are locally exact, in a small enough neighbourhood we can write $\omega_1=d\phi_1$ for a potential $\phi_1$. 
Notice that the Poincar\'e dual of $\omega_1$ is thus locally represented by the level curves $\phi_1=const$. In the given neighbourhood we get
\begin{align}
 \omega_1\wedge \omega_2 =\partial_j \phi_1 w_2^j\ dV= \partial_j (\phi_1 w_2^j)\ dV,
\end{align}
where we used that $\omega_2$ closed implies $\partial_j w_2^j=0$. If as a small neighbourhood we take a small rectangle $Q$ having edges $a,b,c,d$ with $a$ and $c$ along level curves for $\phi_1$ (with values $\kappa_a$ and $\kappa_c$), while 
$b$ and $d$ are orthogonal, then 
\begin{align}
 \int_Q \omega_1\wedge \omega_2= \int_Q \partial_j (\phi_1 w_2^j)\ dV=\kappa_c \int_c n^c_j w_2^j ds +\kappa_a \int_a n^a_j w_2^j ds,
\end{align}
$n^c_j$ and $n^a_j$ being the external normals to $c$ and $a$ respectively. This integral is thus determined by the flux of the vector field $w_2^j$, Hodge dual to $\omega_2$, through the equipotential surfaces (lines) of $\omega_1$. Notice that even in
an extremal case when the rectangle can be extended until wrapping the surface so that\footnote{notice that this cannot be the general case, or at least not assuming $b$ and $d$ remain orthogonal to the equipotential lines} $a=c$, then we would get 
\begin{align}
 \int_Q \omega_1\wedge \omega_2=(\kappa_c-\kappa_a) \int_c n^c_j w_2^j ds,
\end{align}
in general with $\kappa_c-\kappa_a\neq 0$ despite $c$ and $a$ coincide, since $\phi_1$ is not monodromic and takes a different value after wrapping $\Sigma$ along a closed loop. Indeed, in this case $(\kappa_c-\kappa_a)$ is nothing but the period
of $\omega_1$ along the closed path. In general, it is not possible to extend even $c$ to a closed path, and our naive analysis then stops here.  However, it is clear that its correct formulation is given exactly by the Riemann's bilinear formula, which 
generalizes this vision in terms of fluxes (of the Hodge dual of $\omega_2$) through the Poincar\'e dual of $\omega_1$.\\
Now, it is time to be a little bit more precise. If we modify 
$\omega_1$ by an exact form $df$, its periods don't change\footnote{as well as they don't change deforming the path $a_j$ and $b_j$ continuously}. On the other hand, since $\omega_2$ is closed too,
\begin{align}
 (\omega_1+df)\wedge \omega_2=\omega_1\wedge \omega_2 +d(f \omega_2), \label{exact}
\end{align}
and by Stokes's theorem and the fact that $\Sigma$ has no boundary, we see that the exact term does not contribute to the l.h.s. integral. The same holds if we modify $\omega_2$ by an exact form. This means that this formula depends only on the
cohomology classes of $\omega_1$ and $\omega_2$. From (\ref{exact}), we also see that if we consider the cohomology class of $\omega_1\wedge \omega_2$, then, it depends only on the classes $[\omega_1]$ and $[\omega_2]$ and not specifically
on the forms. So, this gives rise to a well defined bilinear antisymmetric map 
\begin{align}
 \cup : H^1(\Sigma)\times H^1(\Sigma) \longrightarrow H^2(\Sigma),\ ([\omega_1],[\omega_2])\mapsto  [\omega_1]\cup[\omega_2]:=[\omega_1\wedge\omega_2],
\end{align}
called the {\it cup product}. The above formula can then be written as
\begin{align}
[\omega_1]\cdot [\omega_2]\equiv \int_\Sigma [\omega_1]\cup [\omega_2]= \sum_{j=1}^g \left[\pi^a_j([\omega_1]) \pi^b_j([\omega_2])-\pi^b_j([\omega_1]) \pi^a_j([\omega_2])\right].
\end{align}
The l.h.s. formula is called the {\it intersection product} of the two classes (or forms). This is because it is in a sense a dual of the intersection formula in homology. Indeed, let us consider the dual basis to the canonical one, that is a set of closed one forms
$\omega^a_j$, $\omega^b_j$, $j=1,\ldots,g$ such that
\begin{align}
 \pi^a_j(\omega^a_k)&=\delta_{jk}, \qquad \pi^a_j(\omega^b_k)=0, \\
 \pi^b_j(\omega^a_k)&=0, \qquad \pi^b_j(\omega^b_k)=\delta_{jk}.
\end{align}
Thus,
\begin{align}
 [\omega^a_j]\cdot [\omega^a_k]&=[\omega^b_j]\cdot [\omega^b_k]=0, \\
 [\omega^a_j]\cdot [\omega^b_k]&=-[\omega^b_k]\cdot [\omega^a_j]=\delta_{jk}, 
\end{align}
which has exactly the same form of the intersection products among the elements of the homology basis. This is our final interpretation of the Legendre's bilinear formula (\ref{Legendrepz}): it gives the intersection product between the first kind and the second kind
differentials. In general, after selecting a basis $b_j$, $j=1,\ldots, 2g$ of $H^1(\Sigma)$, the matrix $\pmb c$ with elements
\begin{align}
c_{jk}= b_j\cdot b_k
\end{align}
is invertible. Any other element $[\omega]$ in $H^1(\Sigma)$ can be written as 
\begin{align}
 [\omega]=\sum_{j=1}^{2g} k^j b_j,
\end{align}
with $k_J\in \mathbb C$.
Therefore,
\begin{align}
 [\omega]\cdot b_k=\sum_{j=1}^{2g} k^j c_{jk}
\end{align}
so that
\begin{align}
k^j =\sum_{k=1}^{2g}   [\omega]\cdot b_k c^{kj},
\end{align}
where $c^{kj}$ are the matrix elements of the inverse matrix $\pmb c^{-1}$. On the other hand, for $u=a,b$, the periods of $[\omega]$ are
\begin{align}
 \pi^u_j([\omega])=\sum_{k=1}^{2g} k^k \pi^u_j(b_k)= \sum_{l=1}^{2g}\sum_{k=1}^{2g}([\omega] \cdot b_l) c^{lk} \pi^u_j(b_k).
\end{align}
Thus, we see that after the intersection product is given, for a path $\gamma$ homologically equal to $\gamma=\sum_{j=1}^g (m_a^j a_j +m_b^j b_j)$, $m_u^j\in \mathbb Z$, our integral is given by
\begin{align}
 \int_\gamma \omega=\sum_{u=a,b} \sum_{j=1}^g \sum_{l=1}^{2g}\sum_{k=1}^{2g} m_u^j ([\omega] \cdot b_l) c^{lk} \pi^u_j(b_k).
\end{align}
So, the only integrals we have really to compute are $\pi^u_j(b_k)$, which we can call the ``master integrals.'' Notice that this expression resembles the decomposition formulas \eqref{Eq_decomposition}. \\
At this point we may wonder how much all this can be generalized to higher dimensions. To get some insight, let us now present some
very general considerations, without any pretense of completeness, but just with the aim to outline the main concepts one would need to deepen referring to the appropriate literature. \\
{\bf Remark:} in literature, Riemann's bilinear relations usually are not exactly (\ref{RBR}), but rather one of its consequences (\cite{dubrovin1984modern}, Th.12.2): if $\omega_j$, $j=1,2$ are two holomorphic forms on a genus $g$ closed oriented Riemann surface $\Sigma$, and $a_j,b_j$, 
$j=1,\ldots,g$ a canonical basis of $H_1(\Sigma)$, then the following relations are true:
\begin{align}
 \sum_{j=1}^g (\pi^a_j(\omega_1) \pi^b_j(\omega_2)-\pi^a_j(\omega_2) \pi^b_j(\omega_1))&=0\\
 {\rm Im} (\sum_{j=1}^g \overline {\pi^b_j(\omega_1)} \pi^a_j(\omega_1) )&>0,
\end{align}
where $\bar z$ means the complex conjugate of $z$.

\subsection{A twisted version}
We want now to consider a situation a little bit more involved. In place of real or complex valued forms over our Riemann surface $\Sigma_g$, let us consider twisted differential forms taking value in a line bundle $\mathcal L$. This is a bundle whose fibres 
are copies of $\mathbb C_*=\mathbb C-\{0\}$. To it we can associate a dual line bundle $\mathcal L^\vee$, which has the following property: if $\psi$ and $\psi^\vee$ are sections of $\mathcal L$ and $\mathcal L^\vee$, respectively, then, 
$\psi(z)\psi^\vee(z)\equiv f(z)$ defines a function $f:\Sigma_g\rightarrow \mathbb C$. In particular, if $u(z)$ is a section of $\mathcal L^\vee$, then $u(z)^{-1}$ is a section of $\mathcal L$. With this in mind, we say that $\phi_L$ is a $\mathcal L$-valued twisted
$k$-form if for any given section $u$ of $\mathcal L^\vee$, then $u \phi_L$ is a $\mathbb C$-valued $k$-form. We will call $\phi_L$ a left $k$-form, and the space of left $k$-forms is
\begin{align}
 \Omega^k(\Sigma_g, \mathcal L) =\Omega^k(\Sigma_g,\mathbb C) \otimes \mathcal L,
\end{align}
and we write
\begin{align}
 \Omega^*_L=\bigoplus_{k=0}^2  \Omega^k(\Sigma_g, \mathcal L).
\end{align}
We say that $\mathcal L$ is flat if it admits a global section. When it happens, it is clear that also $\mathcal L^\vee$ is flat. Assume $u$ is a global section of $\mathcal L^\vee$: we call it {\it the twist}. 
Given a twist, on $\Omega^*_L$ we can put a structure of ring as follows. For $\phi_1,\phi_2\in \Omega^*_L$, we define
\begin{align}
 \phi_1\wedge_u \phi_2:= u \phi_1\wedge \phi_2,
\end{align}
in harmony with the condition $(u\phi_1)\wedge (u\phi_2)= u (\phi_1 \wedge_u \phi_2)$, required by the mapping $\Omega^k(\Sigma_g,\mathcal L)\rightarrow \Omega^k(\Sigma_g,\mathbb C)$, $\phi\mapsto u\phi$. Notice that 
$\Omega^0(\Sigma_g,\mathcal L)$ are the sections of $\mathcal L$. $u$ also allows us to introduce a connection on $\mathcal L$ that induces a covariant derivative 
\begin{align}
 \nabla_\omega: \Omega^k(\Sigma_g,\mathcal L)\longrightarrow \Omega^{k+1}(\Sigma_g,\mathcal L), \quad i.e. \quad d(u\phi_L)=u\nabla_\omega \phi_L.
\end{align}
Concretely,
\begin{align}
 \nabla_\omega \phi \equiv d+\omega\wedge \phi, \qquad \omega=\frac {du}{u}.
\end{align}
It is a flat connection, since it satisfies $\nabla_\omega^2=0$ as one can easily check. This allows us to say that $\phi$ is closed if $\nabla_\omega \phi=0$, and exact if $\phi=\nabla_\omega \psi$ for $\psi$ a left form of lower rank. Notice that
these definitions imply 
\begin{align}
 \nabla_\omega (\phi_L \wedge_u \tilde \phi_L)=(\nabla_\omega \phi_L) \wedge_u \tilde \phi_L+(-1)^k  \phi_L\wedge_u (\nabla_\omega \tilde \phi_L) -\omega \wedge \phi_L\wedge_u \tilde \phi_L,
\end{align}
if $\phi_L$ has degree $k$.\\
In a similar way, we can define the right forms as the twisted forms with value in $\mathcal L^\vee$, simply by exchanging $\mathcal L \leftrightarrow \mathcal L^\vee$, $u\leftrightarrow 1/u$, $\omega \leftrightarrow -\omega$.
A right form will be called $\phi_R$, when necessary to distinguish it from a left form. Of course, a right form is closed if $\nabla_{-\omega} \phi_R=0$, and exact if $\phi_R=\nabla_{-\omega} \psi_R$.\\
A first interesting fact is that the standard wedge product between a left form and a right form satisfies
\begin{align}
 d(\phi_L \wedge \phi_R)&=d\phi_L \wedge \phi_R+(-1)^{k_L} \phi_L \wedge d\phi_R=(\nabla_\omega -\omega\wedge)\phi_L \wedge \phi_R+(-1)^{k_L} \phi_L \wedge (\nabla_{-\omega} +\omega\wedge)\phi_R\cr
 &=\nabla_\omega \phi_L \wedge \phi_R+(-1)^{k_L}\phi_L \wedge \nabla_{-\omega }\phi_R-\omega\wedge \phi_L \wedge \phi_R+\omega\wedge \phi_L \wedge \phi_R \cr
 &=\nabla_\omega \phi_L \wedge \phi_R+(-1)^{k_L} \phi_L \wedge \nabla_{-\omega }\phi_R, \label{connecting formula}
\end{align}
where we used that $\phi_L\wedge \omega=(-1)^{k_L} \omega \wedge \phi_L$. In particular, if $\phi_L$ and $\phi_R$ are covariantly closed, then, $\phi_L \wedge \phi_R$ is closed in the canonical sense. 
Now, suppose $\phi_L$ and $\phi_R$ are 1-forms, well defined over the Riemann surface 
$\Sigma_g$. As in the previous sections, we can cut it to $\tilde \Sigma_g$, the fundamental $4g$-gone, which is simply connected and, indeed, star shaped with respect to an internal point $o$. 
Therefore, for any $z\in \tilde \Sigma_g$ we can take the segment $[o,z]$ and consider the function
\begin{align}
 \psi_L(z)=e^{-\int_o^z \omega(t)} \int_o^z e^{\int_o^t \omega(s)}\phi_L(s).
\end{align}
It is well defined on $\tilde \Sigma_g$ and satisfies
\begin{align}
 \nabla_\omega \psi_L(z)=\phi_L(z).
\end{align}
The formula (\ref{connecting formula}) then shows that
\begin{align}
 d(\psi_L \phi_R)=\phi_L\wedge \phi_R,
\end{align}
so $\phi_L\wedge \phi_R$ is exact on $\tilde \Sigma_g$.
Proceeding as in the proof of (\ref{RBR}), we can write
\begin{align}
 \int_{\Sigma_g} \phi_L\wedge \phi_R&=\int_{\tilde \Sigma_g} \phi_L\wedge \phi_R=\int_{\tilde \Sigma_g} d(\psi_L \phi_R) =\int_{\partial\tilde \Sigma_g}\psi_L \phi_R
 =\int_{\partial\tilde \Sigma_g} e^{-\int_o^z \omega(t)} \int_o^z e^{\int_o^t \omega(s)}\phi_L(t) \phi_R(z),
\end{align}
and following the same exact procedure, we finally get \cite{cho1995}
\begin{align}
 \int_{\Sigma_g} \phi_L\wedge \phi_R= \sum_{j=1}^g &\left[\left(\int_{a_j} e^{\int_o^t \omega(s)}\phi_L(t)\right)\left(\int_{b_j} e^{-\int_o^t \omega(s)}\phi_R(t)\right) \right. \cr &\left. 
 -\left(\int_{b_j} e^{\int_o^t \omega(s)}\phi_L(t)\right)\left(\int_{a_j} e^{-\int_o^t \omega(s)}\phi_R(t)\right)\right],\label{TRPR}
\end{align}
which is the twisted version of (\ref{RBR}).


\subsection{Perversities and thimbles}\label{Perthimbles}
Now, we are to the point! While we apparently related our integrals just to the cohomology of the Riemann surface defined by the polynomial, the situation is more involved in the general case of a Feynman integral. The point is that the section $u$ is
represented by the Baikov polynomial $\mathcal B$ as $u=\mathcal B^\gamma$. We assume $\gamma\in \mathbb R-\mathbb Z$.
This section has zeros (or singular points if $\gamma<0$) in the zeros of $\mathcal B$, so it doesn't define a global section of a line bundle over the whole domain of the
polynomial, but just in the complement of its zeros. Also, if the integration domain goes towards infinity, it must be chosen so that the integral converges (so along paths where $\mathcal B$ goes to $0$ or to $\infty$, according if $\gamma$ is positive or negative).
This looks like to say that the form becomes trivial around infinity, to make sense to the integral. This leads to think that the right (co)homology to be considered is the one of $\mathbb C^n-Z$ relative to infinity, where $Z$ is the zero locus of the polynomial. 
This is indeed the result of Francis Pham, who proved that the right homology is the relative homology $H_n(\mathbb C^n-Z,B)$, $B$ being a relative neighbourhood of infinity: $B=\{x\in \mathbb C^n| |\mathcal B(x)|>N\}$ where $N$ is ``large enough'', 
\cite{MR713258}, \cite{zbMATH03962140}.\\
The point is that one has to manage the homological tools when singularities are present. Let us consider as an illustrative example the integral
\begin{align}
\int_\Gamma \frac {dz}{(z^3+z)^\alpha},
\end{align}
$\alpha\notin \mathbb Z$. The integral has to converge, so $\Gamma$ has to avoid the zeros of the polynomial, but can reach the infinity, if $\alpha>\frac 13$ and the opposite if $\alpha<\frac 13$. The zeros of the polynomials are branch points and we have to 
fix branch cuts to fix the integrand. Moreover, the map $\mathcal B: \mathbb C \rightarrow \mathbb C$ has two critical points, which are the points where the gradient vanishes. Let us briefly see why this could be interesting. If we want to have intuition on the 
cohomology to be used, we can start thinking that it is not expected to depend on the exact value of $\alpha$. Therefore, after rewriting the integral in the form
\begin{align}
\int_\Gamma e^{-\alpha \log (z^3+z)} dz,
\end{align}
we can estimate it for very large $\alpha$. This can be done by means of the saddle point method, so that the integral is dominated by the saddle points of the exponent, which are indeed the critical points of $\mathcal B$. Depending on $\alpha$, one has 
then to follow the steepest descent or ascent lines, in such a way that the integrand goes to zero. For example, if we write $\mathcal{B}(\zeta_+)=|\mathcal{B}(\zeta_+)| e^{i\phi_+}$, where $\zeta_+$ is a critical point, then
\begin{align}
 \mathcal B(\zeta_+)^{-\alpha}=e^{-\alpha\log |\mathcal{B}(\zeta_+)|-i\alpha \phi_+},
\end{align}
we see that the path passing from the saddle point is conveniently chosen so that the phase is constant so to avoid oscillations. Thus, the best choice are the paths having the constant phase, fixed by the critical point. Then, one chooses the direction such to move
from the critical value to infinity or to zero, depending on the case if the path is allowed to reach infinity or zero (in our example, depending if $\alpha>1/3$ or $\alpha<1/3$). Notice that, having fixed the phase, choosing a direction to move away from the critical
point, the absolute value of the integrand moves monotonically toward $0$ or $\infty$. Also, since we have to move along these paths, we have to choose the cuts so that they do not coincide with (part of) the constant phase integration paths. 
Now that we have got the hint, let us go back to the explicit case in some detail. \\
The branch points are $z_0=0,\ z_\pm= \pm i$. The critical points solve $3z^2+1=0$, so that they are $\zeta_\pm=\pm \frac i{\sqrt 3}$. If we coordinatize the target space of $\mathcal B$ with $t$, then the images of the branch points and the critical points in the
target space are three:
\begin{align}
 t_0\equiv \mathcal B(z_i)=0, \qquad \tau_\pm=\mathcal B(\zeta_\pm)=\pm \frac {2i}{3\sqrt 3}.
\end{align}
The phases determined by $\tau_\pm$ are thus $\phi_{\pm}=\pm \frac \pi2$. Therefore, we choose the cuts with phase $0$. Writing $z=x+iy$, this means
\begin{align}
\mathcal B(z)=x^3-3y^2 x+x +i(-y^3+3x^2 y+y), 
\end{align}
so that the cuts are defined by $-y^3+3x^2 y+y=0$, $x^3-3y^2 x+x>0$.
These are
\begin{align}
y&=0, \qquad x>0; \\
y&=\sqrt {3x^2+1}, \qquad x<0;\cr
y&=-\sqrt {3x^2+1}, \qquad x<0,
\end{align}
see figure figure \ref{fig:thimbles}. In a similar way, we can determine the descent/ascent paths from the critical points. 
\begin{figure}[!htbp]
\begin{center}
\begin{tikzpicture}[>=latex,decoration={zigzag,amplitude=.5pt,segment length=2pt}]
\draw [ultra thick,->] (-3,0) -- (3,0); 
\draw [ultra thick,->] (0,-3.5) -- (0,3.5); 
\draw [ultra thick,->] (4,0) -- (10,0);
\draw [ultra thick,->] (7,-3.5) -- (7,3.5);  
\draw [ultra thick,cyan] (7,1.5) -- (7,3.2);  
\draw [ultra thick,cyan,->] (7,1.5) -- (7,2.5);  
\draw [ultra thick,green] (7,-1.5) -- (7,-3.2);  
\draw [ultra thick,green,->] (7,-1.5) -- (7,-2.5); 
\draw [ultra thick,red] (7,1.5) -- (7,0);  
\draw [ultra thick,red,->] (7,1.5) -- (7,0.65);  
\draw [ultra thick,brown!80!black] (7,-1.5) -- (7,0);  
\draw [ultra thick,brown!80!black,->] (7,-1.5) -- (7,-0.65);  
\draw [ultra thick,violet,decorate] (0,2) arc (270:185:1.5);
\draw [ultra thick,violet,decorate] (0,-2) arc (90:175:1.5);
\draw [ultra thick,cyan] (0,1) arc (270:230:4);
\draw [ultra thick,cyan] (0,1) arc (270:310:4);
\draw [ultra thick,cyan,rotate around={-20:(0,5)},->] (0,1) -- (-0.1,1);
\draw [ultra thick,cyan,rotate around={20:(0,5)},->] (0,1) -- (0.1,1);
\draw [ultra thick,green,rotate around={20:(0,-5)},->] (0,-1) -- (-0.1,-1);
\draw [ultra thick,green,rotate around={-20:(0,-5)},->] (0,-1) -- (0.1,-1);
\draw [ultra thick,green] (0,-1) arc (90:130:4);
\draw [ultra thick,green] (0,-1) arc (90:50:4);
\draw [ultra thick,red] (0,1) -- (0,2);
\draw [ultra thick,red] (0,1) -- (0,0);
\draw [ultra thick,red,->] (0,1) -- (0,1.5);
\draw [ultra thick,red,->] (0,1) -- (0,0.5);
\draw [ultra thick,brown!80!black] (0,-1) -- (0,0);
\draw [ultra thick,brown!80!black] (0,-1) -- (0,-2);
\draw [ultra thick,brown!80!black,->] (0,-1) -- (0,-0.5);
\draw [ultra thick,brown!80!black,->] (0,-1) -- (0,-1.5);
\draw [ultra thick,violet,decorate] (0,0) -- (2.7,0);
\draw [ultra thick,decorate] (7,1.5) -- (9.7,1.5);
\draw [ultra thick,decorate] (7,-1.5) -- (9.7,-1.5);
\filldraw (0,0) circle (2pt) (0,1) circle (2pt) (0,2) circle (2pt) (0,-1) circle (2pt) (0,-2) circle (2pt);
\filldraw (7,0) circle (2pt) (7,1.5) circle (2pt) (7,-1.5) circle (2pt);
\draw [->] (7.6,0) arc (0:90:0.6);
\draw [->] (7.7,0) arc (0:-90:0.7);
\node at (-2.5,3.2) {\bf{{$z$-}plane}};
\node at (4.5,3.2) {\bf{{$t$-}plane}};
\node at (6.7,1.5) {$\pmb{\tau_+}$};
\node at (6.7,-1.5) {$\pmb{\tau_-}$};
\node at (6.7,0.25) {$\pmb{t_0}$};
\node at (7.6,0.6) {$\pmb{\phi_+}$};
\node at (7.7,-0.7) {$\pmb{\phi_-}$};
\node at (0.3,2) {$\pmb{z_+}$};
\node at (0.3,-2) {$\pmb{z_-}$};
\node at (-0.3,0.2) {$\pmb{z_0}$};
\node at (-0.25,0.75) {$\pmb{\zeta_+}$};
\node at (-0.25,-0.75) {$\pmb{\zeta_-}$};
\node [red] at (0.3,1.5) {$\pmb {\Gamma_+^d}$};
\node [cyan] at (-1.4,1.5) {$\pmb {\Gamma_+^a}$};
\node [brown!80!black] at (0.3,-1.5) {$\pmb {\Gamma_-^d}$};
\node [green] at (-1.3,-1.55) {$\pmb {\Gamma_-^a}$};
\end{tikzpicture}
\caption{Cuts, ascent paths and descent paths.}
\label{fig:thimbles}
\end{center}
\end{figure}
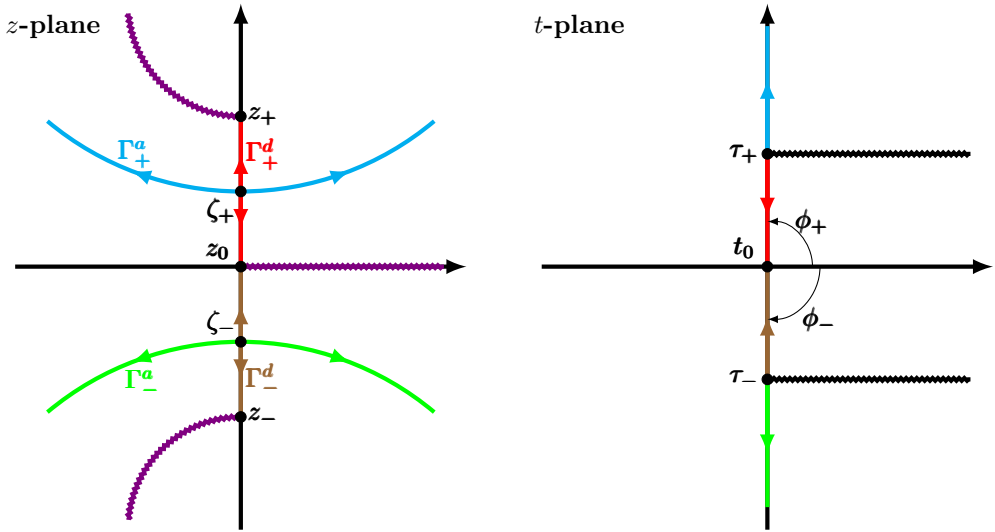
Through $\zeta_\pm$ the paths $\Gamma_\pm$ are defined by the phase $\phi_{\pm}=\pm \frac \pi2$, which correspond to $x^3-3y^2 x+x=0$,
$\pm(-y^3+3x^2 y+y)>0$. These give for $\Gamma_+$ the paths
\begin{align}
 x&=0, \qquad 0<y<1,  & ({\rm descent\ path,}\ \Gamma^d_+);\\
 y&=\sqrt {\frac {x^2+1}3}, & ({\rm ascent\ path,}\ \Gamma^a_+),
\end{align}
for $\Gamma_-$ the paths
\begin{align}
 x&=0, \qquad -1<y<0, & ({\rm descent\ path,}\ \Gamma^d_-);\\
 y&=-\sqrt {\frac {x^2+1}3}, & ({\rm ascent\ path,}\ \Gamma^a_-).
\end{align}
These are curves from infinity to infinity around the cuts (ascent paths), or from a zero to another one (descent paths). They are called {\it Lefschetz thimbles}, for a reason that will become clear soon. Now we want to understand a little better what we have done 
just  inspired by saddle point methods. To this hand, let us consider $\mathcal B$ as a map defining a fibration over $\mathbb C$. The fibre over $t$ is generically given by a set $\mathcal B^{-1}(t)$ of 3 points. The exceptions are the critical points where
the polynomial has double roots, so the fibre is made just by 2 points. For the generic fibre at $t$ (which is zero dimensional) the cohomology is $H^0(\mathcal B^{-1}(t),\mathbb Z)=\mathbb Z^3$, and is invariant when $t$ varies, so we are in trouble
if we move to a critical point. Therefore, to manage this difficulty, it is convenient to restrict this ``fibration'' of cohomology groups to $\mathbb C_\tau\equiv \mathbb C-\{\tau_+,\tau_-\}$ and look at what happens if we move around the critical values. To see this,
notice that if $\lambda:=e^{\frac {2\pi i}3}$ is a primitive third root of $1$, then, 
\begin{align}
 \mathcal B^{-1}(t)=
\begin{pmatrix}
 \sqrt[3]{\frac t2 +\sqrt {\frac {t^2}4 +\frac 1{27}}}+ \sqrt[3]{\frac t2 -\sqrt {\frac {t^2}4 +\frac 1{27}}} \\
 \lambda \sqrt[3]{\frac t2 +\sqrt {\frac {t^2}4 +\frac 1{27}}}+ \lambda^2\sqrt[3]{\frac t2 -\sqrt {\frac {t^2}4 +\frac 1{27}}} \\
 \lambda^2 \sqrt[3]{\frac t2 +\sqrt {\frac {t^2}4 +\frac 1{27}}}+ \lambda\sqrt[3]{\frac t2 -\sqrt {\frac {t^2}4 +\frac 1{27}}} 
\end{pmatrix},
\end{align}
from which we can easily see what happens to its (co)homology when we move on a loop around the critical points: since $\sqrt {\frac {t^2}4 +\frac 1{27}}$ vanishes at $t=\tau_\pm$, it changes sign in moving along the loop, and we see that the upper component
is invariant, while the lowest components are interchanged. So, on $V_t \equiv H^0(\mathcal B^{-1}(t),\mathbb Z)=\mathbb Z^3$, the monodromy action (the action of moving around the points) around $\tau_\pm$ is given by the matrices 
\begin{align}
 M_\pm =
\begin{pmatrix}
 1 & 0 & 0 \\ 0 & 0 & 1 \\ 0 & 1 & 0
\end{pmatrix}.
\end{align}
The badness of the fibration is measured by these (here coincident) monodromy matrices. The point is that $V_t$ has fixed points given by the eigenspaces of $M_\pm$ corresponding to the eigenvalue 1. These define the spaces of invariants
\begin{align}
 V^{M_\pm} \simeq \mathbb Z \
\begin{pmatrix}
 1 \\ 0 \\ 0
\end{pmatrix}
\oplus \mathbb Z 
\begin{pmatrix}
 0 \\ 1 \\ 1
\end{pmatrix}.
\end{align}
Their complement are the spaces of coinvariants 
\begin{align}
 V_{M_\pm} \simeq \mathbb Z 
\begin{pmatrix}
 0 \\ 1 \\ -1
\end{pmatrix}.
\end{align}
On them the monodromies act as minus the identity. The strategy for capturing the right cohomology, is to restrict the fibration of cohomologies, keeping only the coinvariant parts and the information of the action of the monodromies. Notice that 
these are irreducible non trivial representations of $\pi_1 (\mathbb C_\tau)$, the first homotopy group (generated by the loops around the critical values). This ``almost fibration'' with specification of monodromy constrained structures at singular points
substantially describes what is called a perverse sheaf. The family of coinvariant (co)homology elements can be then interpreted as follows. In the homology of $\mathcal B^{-1}(t)$ they can be thought as 
\begin{align}
 &0\cdot \left( \sqrt[3]{\frac t2 +\sqrt {\frac {t^2}4 +\frac 1{27}}}+ \sqrt[3]{\frac t2 -\sqrt {\frac {t^2}4 +\frac 1{27}}} \right) +p\cdot \left( \lambda \sqrt[3]{\frac t2 +\sqrt {\frac {t^2}4 +\frac 1{27}}}+ \lambda^2\sqrt[3]{\frac t2 -\sqrt {\frac {t^2}4 +\frac 1{27}}}\right) \cr
 &-p\cdot \left(\lambda^2 \sqrt[3]{\frac t2 +\sqrt {\frac {t^2}4 +\frac 1{27}}}+ \lambda\sqrt[3]{\frac t2 -\sqrt {\frac {t^2}4 +\frac 1{27}}} \right)
\end{align}
with $p\in \mathbb Z$. The generator is for $p=1$. Then when $t$ varies from a singular point to infinity (or to 0), we see that these points collapse at the critical point, collapsing to zero also as an element of $H_0(\mathcal B^{-1}(t),\mathbb Z)$, since we get the same 
point with total coefficient $1-1=0$.\footnote{If we considered the invariants, the result at the point would be that the points meet with total coefficient $1+1=2$.} These correspond to the thimbles, and, are called vanishing cycles. Notice that
the pair of points are $0$ dimensional cycles. For a polynomial in $n$ variables, one gets that the preimages of $t$ are generically $\mathcal B^-1(t)\simeq TS^{n-1}$, the total space of the tangent bundle of a sphere, which are homologically equivalent
to spheres $S^{n-1}$. The (perverse) cohomology so generated, is represented by union of spheres when $t$ varies, for example, from a critical value to $\infty$ along a line $\gamma$. The result is homotopic to a cylinder $S^{n-1}\times \gamma$, whose face
at the end corresponding to the critical value collapses to a point. This may be thought to have the shape of a thimble.\\
Of course, a rigorous version of this construction in a general setting requires the use of more sophisticated mathematical tools. 

\section{Some general constructions}\label{veryhardmaths}
In this section we want to discuss some relations between Feynman integrals, homologies and cohomologies. Even if we will avoid entering the details, we will illustrate the main ideas on how to deal with singularities by passing to simplicial
(co)homologies, while our original problem remains the one of computing integrals of differential forms. Therefore, we will start by recalling some relations among the main (co)homologies we will need to consider in the successive subsections.
We will not really provide notions but rather only rough ideas and some literature. However, we hope to make this section at least readable.

\subsection{On cohomologies and de Rham Theorem}
In general, for a given space, there can be several associated homologies depending on the structures one is considering. We are of course interested in the de Rham cohomology, which is well defined on smooth manifolds as the quotient of closed forms with
exact forms. But there are other (co)homologies that are of interest for concrete calculations. For the definitions reported below and any further lecture we refer to \cite{dubrovin1984modern},  \cite{bredon1993topology}.
\subsubsection{Singular (co)homology }
One of the most general is {\it singular homology}. A $k$-simplex $\Delta_k$ is the smallest convex subset of $\mathbb R^N$ (N>k) generated by $k+1$ points in 
general position (so it has dimension $k$). If $x_j$, $j=0,\ldots,k$ are the generating points, it is convenient to use the notation $\Delta_k=\{x_0,\ldots,x_k\}$. Then, the $j$-th face of the simplex is the $(k-1)$-simplex 
$\Delta_{k-1}^{j}=\{x_0\ldots, \hat x_j, \ldots, x_n\}$, where the $hat$ means omission of the point. 
If $X$ is a topological space, then a {\it singular $k$-simplex} in $X$ is a pair $(\Delta_k,\phi)$, where $\phi:\Delta_k\rightarrow X$ is a continuous map. A singular $k$-chain $c_k$ is a formal finite combination of singular $k$-simplexes 
$c_k=\sum_\alpha a_\alpha (\Delta_{k,\alpha},\phi_\alpha)$, $a_\alpha\in \mathbb K$, where $\mathbb K$ is an abelian group, for us $\mathbb K=\mathbb Z, \mathbb Q, \mathbb R, \mathbb C$. They form the space $C^{sing}_k(X)$ of singular chains. 
The boundary of a singular simplex $(\Delta_k,\phi)$
is the singular $(k-1)$-chain
\begin{align}
\partial (\Delta_k,\phi)=\sum_{j=0}^k (-1)^j (\Delta_{k-1}^j, \phi|_{\Delta_{k-1}^j}).
\end{align}
This defines by $\mathbb K$-linear extension, a sequence of $\mathbb K$-linear maps 
\begin{align}
 \partial_k: C^{sing}_k(X) \longrightarrow C^{sing}_{k-1}(X), \quad c_k\mapsto \partial c_k.
\end{align}
It follows that $\partial_{k-1}\circ \partial_{k}=0$ and one can define the {\it singular homology groups}
\begin{align}
 H^{sing}_k(X,\mathbb K)=\frac {{\rm ker}(\partial_k)}{{\rm Im}(\partial_{k+1})}.
\end{align}
Similarly, we can define a singular $k$-cochain as a linear map
\begin{align}
 \psi^k: C^{sing}_k(X) \longrightarrow \mathbb K,
\end{align}
that form the $\mathbb K$-linear space $C_{sing}^k(X)$, on which acts the coboundary operator 
\begin{align}
 \delta^k : C_{sing}^k(X) \longrightarrow C_{sing}^{k+1}(X), \quad i.e.\quad\ \delta^k \psi^k (c_{k+1})\equiv \psi^k(\partial_{k+1} c_{k+1}), \ \forall c_{k+1}\in C^{sing}_{k+1}(X).
\end{align}
From these, one defines the {\it singular cohomology groups}
\begin{align}
 H_{sing}^k(X,\mathbb K)=\frac {{\rm ker}(\delta^k)}{{\rm Im}(\delta^{k-1})}.
\end{align}
{\bf Remark:} replacing simplexes with hypercubes, one gets the singular cubic (co)homology. However, this gives a non normalized cohomology \cite{dubrovin1984modern} (also called a generalized cohomology), that is a cohomology that does not satisfy the condition
$H_k(p)=0$ for $k>0$ and $p$ a point. 
\subsubsection{Simplicial (co)homology }
This is less general and is first defined for simplicial complexes. A simplicial complex $\Xi$ is a collection of simplexes such that if a simplex is in the complex, then all its faces are also in, and, moreover, two simplexes in the collection can intersect in one and 
only one common sub-face of a given dimension. It is called finite, if composed by a finite number of simplexes, while it is locally finite if for any point there is a small neighbourhood intersecting just a finite number of simplexes. Given a simplicial complex $\Xi$,
one defines its {\it simplicial $k$-chains} as the formal combinations $c_k=\sum_\alpha a_\alpha \Delta_{k,\alpha}$, $a_\alpha\in \mathbb K$, which define the space $C_k^{simp}(\Xi)$. Defining the boundary
\begin{align}
\partial \Delta_k=\sum_{j=0}^k (-1)^j \Delta_{k-1}^j,
\end{align}
we get linear maps 
\begin{align}
 \partial^s_k: C^{simp}_k(\Xi) \longrightarrow C^{simp}_{k-1}(\Xi), \quad c_k\mapsto \partial c_k,
\end{align}
which satisfy $\partial_{k-1}\circ \partial_{k}=0$ and allow to define the simplicial homology groups
\begin{align}
 H^{simp}_k(\Xi,\mathbb K)=\frac {{\rm ker}(\partial^s_k)}{{\rm Im}(\partial^s_{k+1})}.
\end{align}
Similarly, we can define a simplicial $k$-cochain as a linear map
\begin{align}
 \xi^k: C^{simp}_k(\Xi) \longrightarrow \mathbb K,
\end{align}
that form the $\mathbb K$-linear space $C_{simp}^k(\Xi)$, on which acts the coboundary operator 
\begin{align}
 \delta_s^k : C_{simp}^k(\Xi) \longrightarrow C_{simp}^{k+1}(\Xi), \quad i.e.\quad \ \delta_s^k \xi^k (c_{k+1})\equiv \xi^k(\partial_{k+1} c_{k+1}), \ \forall c_{k+1}\in C^{simp}_{k+1}(\Xi).
\end{align}
From these, one defines the {\it simplicial cohomology groups}
\begin{align}
 H_{simp}^k(\Xi,\mathbb K)=\frac {{\rm ker}(\delta_s^k)}{{\rm Im}(\delta_s^{k-1})}.
\end{align}
This construction can then be extended to the case of a topological space $X$ that admits triangulations. If a given triangulation $T$ can be homotopically deformed to become homeomorphic to a simplicial complex $\Xi_T$, then we can define the
simplicial homology groups of $X$ as
\begin{align}
 H^{simp}_k(X,\mathbb K)=H^{simp}_k(\Xi_T,\mathbb K), \qquad  H_{simp}^k(X,\mathbb K)=H_{simp}^k(\Xi_T,\mathbb K).
\end{align}
These are well defined, since it can be proved that the resulting groups are independent on the triangulation $T$. An important result is that the following isomorphisms (for spaces admitting simplicial homology) hold true:
\begin{align}
 H^{sing}_k(X,\mathbb K)\simeq H^{simp}_k(X,\mathbb K), \qquad  H_{sing}^k(X,\mathbb K)\simeq H_{simp}^k(X,\mathbb K).
\end{align}
For a proof, see \cite{dubrovin1984modern}, \cite{bredon1993topology}.

\subsubsection{de Rham theorem}
Now we can state the following important result (\cite{GeorgesdeRham1931,Weil1952,10.2307/2372421,samelson1967,dubrovin1984modern,bredon1993topology}):

\

{\bf de Rham Theorem.} {\it If $X$ is a smooth manifold and $\mathbb K=\mathbb R, \mathbb C$, then
\begin{align}
 H^{de Rham}_k(X,\mathbb K)&\simeq H^{sing}_k(X,\mathbb K)\simeq H^{simp}_k(X,\mathbb K), \\
 H_{de Rham}^k(X,\mathbb K)&\simeq H_{sing}^k(X,\mathbb K)\simeq H_{simp}^k(X,\mathbb K).
\end{align}
}

An important remark for us is that the same theorem can be extended by weakening the hypothesis and assuming that $X$ admits singularities and more in general a (Withney) stratification or the structure of a stratifold. See \cite{Ewald2004} and
references therein.\\
This result allows us to move to simplicial (co)homology to simplify abstract constructions as we did in section \ref{HardMaths}. But it is also helpful for simplifying certain computations like the one of Betti number and the Euler characteristic.
Here we recall that the Betti numbers of $X$ are the numbers
\begin{align}
 b_k={\rm dim} H_{de Rham}^k(X,\mathbb K),
\end{align}
while the Euler characteristic of $X$ is the number
\begin{align}
 \chi(X)=\sum_{j=0}^N (-1)^j b_j,
\end{align}
where $N={\rm dim} X$.

Finally, we remark that further cohomologies can be introduced, like for example relative (co)homologies $H_k(X,B,\mathbb K)$, cellular (co)homologies, \v Cech and sheaf (see \cite{dubrovin1984modern}), or generalized (co)homologies, like $K$-theories or (co)bordisms. 
We demand any further reading to the standard literature.

\subsection{Cup products and intersections on smooth manifolds}
In the example of closed one forms on a compact closed Riemann surface $\Sigma$, we have seen that the cup product $\cup: H^1(\Sigma)\times H^1(\Sigma)\rightarrow H^2(\Sigma)$ can be interpreted as the intersection product between 1-forms,
in a sense, dual to the intersection product among homological curves (1-cycles), probably more intuitive to be figured out. Now, we want to understand how much this notion can be generalized. To this hand we will refer to the nice lecture in
\cite{Hutchings2011CupPA}. \\
Let $\mathcal M$ be an $m$-dimensional oriented compact smooth manifold. Also, we assume $S_1$ and $S_2$ are two smooth submanifolds of dimensions $m_1$ and $m_2$ respectively. We say that they have transversal intersection if 
$N=S_1\cap S_2\neq \emptyset$, and for any $x\in N$ the union of the embeddings of $\iota_j:T_xS_j \hookrightarrow T_xM$ is the whole $T_xM$. In particular, this implies that $m_1+m_2-m=m_N\geq0$, which is the dimension of the intersection. 
We have to take into account the orientation. We say that the intersection is oriented on a given connected component $\bar N$ of $N$ if it is possible to choose a basis $\pmb v_1,\ldots \pmb v_{m_N} \in T_xN$, $x\in \bar N$, such that:
\begin{itemize}
 \item we can complete it to an oriented basis $\pmb w_1, \ldots, \pmb w_{m_1-m_N}, \pmb v_1,\ldots \pmb v_{m_N} \in T_xS_1$ according to the orientation of $S_1$;
 \item we can complete it to an oriented basis $\pmb v_1,\ldots \pmb v_{m_N}, \pmb z_1, \ldots, \pmb z_{m_2-m_N} \in T_xS_2$ according to the orientation of $S_2$;
 \item $\pmb w_1,\ldots \pmb w_{m_1-m_N}, \pmb v_1,\ldots \pmb v_{m_N}, \pmb z_1, \ldots, \pmb z_{m_2-m_N}$ is an oriented basis for $T_xM$ according to the orientation of $M$.
\end{itemize}
See Fig. \ref{Fig3} for a pictorial representation.
\begin{figure}[!htbp]
\begin{center}
\begin{tikzpicture}[>=latex,decoration={zigzag,amplitude=.5pt,segment length=2pt}]
\shade [top color=green!50!black, middle color=green!30!black] (-5.5,-3) .. controls (-4.5,-2.6) and (-3.75,-2.35) .. (-3.5,-0.9) -- (-0.5,-0.9) .. controls (-0.5/1.5,-0.9-2.9/1.5) and (1-0.5/1.5,-0.9-2.9/1.5) .. (1,-0.9) -- (3,-0.9) .. controls (3.05,-0.9-3.9/2) and (4,-2.5) .. 
(4.5,-3) .. controls (3,-3.4) and (-4,-3.4) .. (-5.5,-3) -- cycle;
\draw [fill=gray!30!white] (-4,0) -- (4,0) -- (5,-2) -- (-6,-2) -- cycle;
\draw [red] (-2,-0.9) ellipse (1.5 and 0.4);
\draw [red] (2,-0.9) ellipse (1 and 0.25);
\shade [ball color=green!] (-3.5,-0.9) .. controls (-3,2) and (-1,2) .. (-0.5,-0.9) arc (0:-180:1.5 and 0.4) -- cycle;
\shade [ball color=green!] (1,-0.9) .. controls (1.5,2) and (2.9,3) .. (3,-0.9) arc (0:-180:1 and 0.25) -- cycle;
\draw [red,ultra thick] (-3.49,-0.9) arc (180:360:1.49 and 0.39);
\draw [red,ultra thick,opacity=0.4] (-3.49,-0.9) arc (180:0:1.49 and 0.39);
\draw [red,ultra thick] (1.01,-0.9) arc (180:360:0.99 and 0.24);
\draw [red,ultra thick,opacity=0.4] (1.01,-0.9) arc (180:0:0.99 and 0.24);
\draw [ultra thick,red,->] (-2,-1.3) -- (-1.9,-1.3);
\draw [ultra thick,red,->] (2,-1.15) -- (2.1,-1.15);
\node [red] at (0.4,-3) {$\pmb{N=S_1\cap S_2}$};
\draw [red,ultra thick,->] (0,-2.7) .. controls (-0.2,-2) and (-1.7,-2.8) .. (-1.7,-1.4);
\draw [red,ultra thick,->] (0.4,-2.7) .. controls (0.6,-2) and (1.7,-2.8) .. (1.7,-1.2);
\node at (-5.2,-1.7) {$\pmb {S_1}$};
\node [green] at (-4.2,-2.7) {$\pmb {S_2}$};
\draw [thick,->] (-4.6,-0.9) -- (-4,-0.9);
\draw [thick,->] (-4.6,-0.9) -- (-4.3,-0.6);
\node at (-4.1,-1.05) {\tiny{\pmb{1}}};
\node [rotate=-33.69] at (-4.5,-0.6) {\tiny{\pmb 2}};
\draw [thick,->,green] (-3.2,-2.8) -- (-2.6,-2.8);
\draw [thick,->,green] (-3.2,-2.8) -- (-3,-2.4);
\node [green] at (-2.75,-2.95) {\tiny{\pmb{1}}};
\node [rotate=-26.5651,green] at (-3.15,-2.45) {\tiny{\pmb 2}};
\draw [thick,cyan,->] (-5,2) -- (-5,2.7);
\draw [thick,cyan,->] (-5,2) -- (-4.3,2);
\draw [thick,cyan,->] (-5,2) -- (-5.5,1.5); 
\node [cyan] at (-4.44,1.85) {\tiny{\pmb{1}}};
\node [cyan] at (-5.5,1.7) {\tiny{\pmb{2}}};
\node [cyan] at (-5.15,2.6) {\tiny{\pmb{3}}};
\node [cyan] at (-3,2.6) {$\pmb M$};
\end{tikzpicture}
\caption{Intersection between two oriented surfaces $S_1$ and $S_2$ in a three dimensional oriented manifold $M$.}
\label{Fig3}
\end{center}
\end{figure}
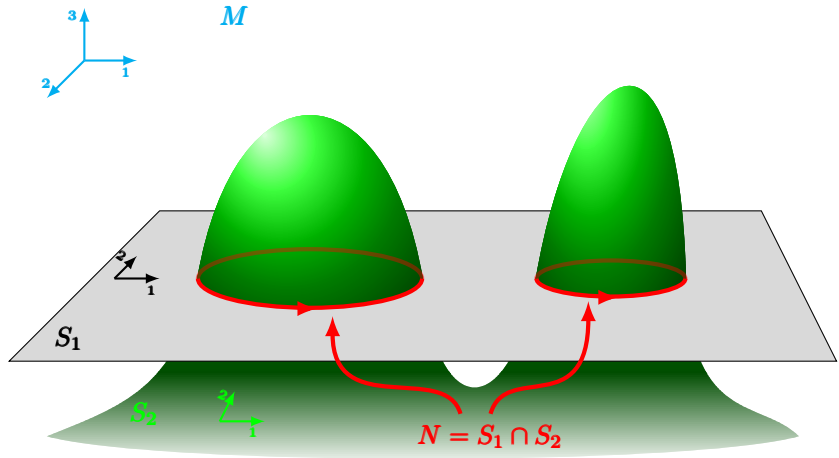
It is clear that it depends only on the connected component and not on the specific point. So, $\pmb v_1,\ldots \pmb v_{m_N}$ defines the orientation of the intersection.
We can consider the homology class $[S_1\cap S_2]$ associated to $N$, in the sense of simplicial homology.\footnote{so, $[N]$ is defined by any given simplicial decomposition of $N$.}
It can be shown that it depends only on the 
homology classes of $S_j$ so it gives a map $H_{m_1} (M,\mathbb Z)\times H_{m_2}(M,\mathbb Z)\rightarrow H_{m_1+m_2-m}(M,\mathbb Z)$, each time $m_1+m_2-m\geq0$.  \\
In particular, when $m_1+m_2=m$, $N$ is just the union of a set of isolated points. This set is finite since $M$ is compact, and defines an element of $H_0(M,\mathbb Z)\simeq \mathbb Z$ (since $M$ is connected). 
Thus we can associate to the intersection product a number representing such element. This can be done as follows. If $\sigma(x)$ is the sign associated to $x\in N$, we set
\begin{align}
S_1\cdot S_2 := \sum_{x\in N} \sigma(x).
\end{align}
It is well defined, since $N$ is finite, and it results to depend on the homology classes only. So, it shows the connection with the intersection product we met in the case of a Riemann surface.\\
However, let us stay general for the moment. There is another interesting map to be considered in simplicial (co)homology in order to go ahead: the {\it cap product}
\begin{align}
 \frown : H^{j}(M,\mathbb Z) \times H_m(M,\mathbb Z) \longrightarrow H_{m-j} (M,\mathbb Z), \quad (\mu,u) \mapsto \mu\frown u,
\end{align}
defined in such the way that for any given $\nu\in H^{m-j}(M,\mathbb Z)$, we must have $\nu(u\frown \mu)\equiv(\nu\cup \mu) (u)$. Then, let $[M]\in H_m(M,\mathbb Z)$ be the class generated my $M$, also called the {\it fundamental class}. Fixing
$u=[M]$, we get a map
\begin{align}
 \star: H^j(M,\mathbb Z)\longrightarrow H_{m-j} (M,\mathbb Z),\quad \mu\mapsto \star \mu:= \mu \frown [M].
\end{align}
Then, this theorem holds (\cite{maxim2019intersection}, Th. 1.1.3): 

\

\begin{Theorem}[Poincar\'e duality]
 If $M$ is a connected, closed and oriented topological manifold, then $\star: H^j(M,\mathbb Z)\longrightarrow H_{m-j} (M,\mathbb Z)$ is an isomorphism for all integers $j$.
\end{Theorem}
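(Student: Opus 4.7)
The plan is to prove Poincar\'e duality by a bootstrap argument built on the five lemma and Mayer--Vietoris, reducing the global statement to a local computation on $\mathbb R^m$. To run the induction on the non-compact pieces that appear in a cover of $M$, I would first promote the statement to a more flexible one: for every open subset $U\subseteq M$, the duality map
\begin{equation}
\star_U\colon H^j_c(U;\mathbb Z)\longrightarrow H_{m-j}(U;\mathbb Z),
\end{equation}
defined by capping against a local fundamental class (whose existence is guaranteed by the orientation), is an isomorphism. Here $H^\ast_c$ denotes singular cohomology with compact supports. The theorem stated in the excerpt is the special case $U=M$: since $M$ is compact one has $H^j_c(M)=H^j(M)$, the local fundamental classes fit together into the global class $[M]\in H_m(M;\mathbb Z)$, and $\star_M=(\cdot)\frown [M]$.

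First I would handle the base case $U\cong\mathbb R^m$. Direct computation gives $H^j_c(\mathbb R^m;\mathbb Z)=\mathbb Z$ if $j=m$ and $0$ otherwise, and $H_{m-j}(\mathbb R^m;\mathbb Z)=\mathbb Z$ if $j=m$ and $0$ otherwise; one then checks that the local orientation class pairs the two generators. Next, the inductive step: given two open sets $U,V\subseteq M$ for which duality is known, consider the pair of Mayer--Vietoris sequences
\begin{equation}
\cdots\to H^{j-1}_c(U\cap V)\to H^j_c(U\cup V)\to H^j_c(U)\oplus H^j_c(V)\to H^j_c(U\cap V)\to\cdots
\end{equation}
and
\begin{equation}
\cdots\to H_{m-j+1}(U\cap V)\to H_{m-j}(U\cup V)\to H_{m-j}(U)\oplus H_{m-j}(V)\to H_{m-j}(U\cap V)\to\cdots
\end{equation}
The cap products with the respective local fundamental classes fit into a ladder connecting these two sequences; once commutativity (up to sign) is checked, the five lemma yields duality for $U\cup V$.

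With these two ingredients, the rest of the argument is formal. I would cover the compact manifold $M$ by finitely many Euclidean charts $U_1,\dots,U_N$ and induct on $N$: the base case is the Euclidean calculation, and at each step $U=U_1\cup\cdots\cup U_{k-1}$ and $V=U_k$ meet in an open subset of $\mathbb R^m$, whose duality is obtained by a secondary bootstrap over finite good covers of that intersection, combined with passage to direct limits (both $H^\ast_c$ and $H_\ast$ commute with colimits over directed unions of open sets). Applying this recursive procedure terminates since a good cover of a compact topological manifold can be refined to a finite one.

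The main obstacle I anticipate is the commutativity of the cap-product ladder in step two: spelling out the connecting morphism in compactly supported cohomology requires choosing representatives in the small-chain/short-cochain models, and matching the Mayer--Vietoris boundary with the singular $\partial$ up to the correct Koszul sign. A secondary difficulty is that $M$ is only assumed to be a topological manifold, so any step that would tacitly use a smooth structure or a global triangulation must be rephrased using singular theory and Euclidean charts; in particular, the construction of the global orientation class $[M]$ from local generators is the compatibility statement that makes the whole bootstrap work, and establishing it rigorously (via the compactness of $M$ and the orientation cocycle on overlaps) is where most of the genuine content of ``being a manifold'' is used.
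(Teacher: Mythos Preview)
Your proposal outlines the standard Mayer--Vietoris/five-lemma bootstrap (essentially the argument in Hatcher or Bredon), and the sketch is sound: the passage to compactly supported cohomology on open pieces, the Euclidean base case, and the ladder of Mayer--Vietoris sequences connected by cap products are exactly the ingredients one needs, and you have correctly identified the two genuine subtleties (sign-commutativity of the ladder at the connecting maps, and the construction of the fundamental class for a merely topological manifold).

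However, there is nothing to compare against: the paper does not prove this theorem. It simply states Poincar\'e duality with a reference to the literature (\cite{maxim2019intersection}, Th.~1.1.3) and moves on to use it. So your proposal is not ``the same'' or ``different'' from the paper's approach---you have supplied a proof where the paper supplies only a citation. If anything, your write-up is more than what the context requires; the theorem is invoked here as background for the discussion of cup products and intersection pairings, not as a result to be established.
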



\

\noindent The most important consequence of this theorem, at least as what regards the applications we are interested in, is the following theorem (\cite{Hutchings2011CupPA}, Th. 1.1):

\

\begin{Theorem}
Assume $M$ is a compact, connected, closed smooth manifold and let $S_j$ smooth submanifolds of dimension $m_j$, $j=1,2$ that (up to an homological deformation) have transversal intersection. Then, it holds
\begin{align}
 \star [S_1] \cup \star [S_2]=\star[S_1 \cap S_2].
\end{align} 
\end{Theorem}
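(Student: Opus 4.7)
The plan is to use the fundamental relation between cup and cap products,
\begin{equation*}
(\alpha\cup\beta)\frown u=\alpha\frown(\beta\frown u),
\end{equation*}
together with the Poincar\'e duality isomorphism just stated. Writing (by abuse) $\star$ for the duality in both directions, set $\eta_j:=\star[S_j]\in H^{m-m_j}(M,\mathbb Z)$, so that $\eta_j\frown[M]=[S_j]$. Applying $\star$ to both sides of the desired equality $\star[S_1]\cup\star[S_2]=\star[S_1\cap S_2]$ turns it into the homological statement $(\eta_1\cup\eta_2)\frown[M]=[S_1\cap S_2]$, and a single application of the cup-cap identity with $u=[M]$ reduces this to
\begin{equation*}
\eta_1\frown[S_2]=[S_1\cap S_2].
\end{equation*}
This is the identity I would aim to establish.

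To do so, I would represent $\eta_1$ geometrically by a Thom class. By the tubular neighborhood theorem, choose an open neighborhood $T\subset M$ of $S_1$ diffeomorphic to the total space of the normal bundle $\nu_{S_1}$; then $\eta_1$ is represented by the Thom class $\tau_{S_1}\in H^{m-m_1}(T,\partial T;\mathbb Z)$ of $\nu_{S_1}$, extended by zero to all of $M$. Transversality of $S_1$ and $S_2$ guarantees that $T\cap S_2$ is a tubular neighborhood of $N:=S_1\cap S_2$ inside $S_2$, and that the normal bundle of $N$ in $S_2$ is canonically isomorphic to $\nu_{S_1}|_N$. Consequently, pulling $\tau_{S_1}$ back along the inclusion $\iota\colon S_2\hookrightarrow M$ produces the Thom class of $N$ in $S_2$. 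The projection formula then gives
\begin{equation*}
\eta_1\frown\iota_*[S_2]=\iota_*\bigl(\iota^*\eta_1\frown[S_2]\bigr),
\end{equation*}
and the Thom isomorphism internal to $S_2$ identifies $\iota^*\eta_1\frown[S_2]$ with the fundamental class $[N]\in H_{m_1+m_2-m}(S_2,\mathbb Z)$. Pushing forward by $\iota_*$ yields $[S_1\cap S_2]$ in $H_{m_1+m_2-m}(M,\mathbb Z)$, which is precisely the reduced identity.

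Next, orientations must be matched with the intersection orientation prescribed earlier in the text. At each $x\in N$, transversality gives a splitting $T_xM\simeq T_xS_1\oplus\bigl(T_xS_2/T_xN\bigr)$, under which the Thom-class orientation on $\nu_{S_1}$ (coming from those of $M$ and $S_1$) induces an orientation on the normal bundle of $N$ in $S_2$. The resulting orientation of $N$ is exactly the one encoded in the ordered basis $\pmb v_1,\ldots,\pmb v_{m_N}$ of $T_xN$ prescribed above; this is a pointwise linear-algebra check.

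The main obstacle I expect is converting the tubular-neighborhood/Thom-class picture into an argument at the level of chains, on which $\cup$ and $\frown$ are literally defined. Because $[S_j]$ is only constrained up to homological deformation, one is free to replace the $S_j$ by smooth, transversely intersecting representatives and to choose a triangulation of $M$ simultaneously subordinate to $S_1$, $S_2$ and $N$; the bookkeeping needed to show that Poincar\'e duality, the cup product, and the Thom-class construction all agree under such a triangulation is where the real work lies. An alternative, if one accepts tensoring with $\mathbb R$, is to use de Rham representatives: take $\eta_j$ to be a bump form supported in a tubular neighborhood of $S_j$ and Poincar\'e dual to it, then verify the identity by a Fubini argument along the normal directions of $S_1$ near $N$. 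This route is conceptually cleaner, at the price of losing the integral structure.
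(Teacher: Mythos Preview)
The paper does not actually prove this statement: it simply refers the reader to Hutchings' notes \cite{Hutchings2011CupPA}. Your proposal is correct and is in fact the standard argument found in that reference --- reduce via the cup--cap identity to $\eta_1\frown[S_2]=[S_1\cap S_2]$, then establish the latter using the Thom class of the normal bundle and naturality under pullback along $S_2\hookrightarrow M$. Your discussion of the orientation check and of the de~Rham alternative is also apt; the latter is closer in spirit to how the surrounding section of the paper treats the Riemann surface case.
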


\

\noindent For a proof, see \cite{Hutchings2011CupPA}. This theorem thus states that cup and cap products are dual under the Poincar\'e map. So, it provides a remarkable generalization of the results discussed for Riemann surfaces. We described it in the case of
simplicial homology, but it clearly works as well if we tensorize with $\mathbb K=\mathbb R,\mathbb C$, and then we can use the isomorphism $H_{simp}^*(M,\mathbb K)\simeq H_{dR}^*(M,\mathbb K)$  to 
get it for the case of the de Rham cohomology (where the cup product is induced by the wedge product), which better matches with the previous section. \\
However, we are not yet happy enough, since we had to make use of smoothness, but, in considering the general integrals we are interested in, we generically have to consider manifolds with singularities, which may be not homologous to smooth manifolds,
so we need to explore further generalizations.

\subsection{Lefschetz theorems, Hodge-Riemann bilinear relations and perversities}\label{Perversion}
We want to go beyond in our tour on intersection theory by looking at finer structures. We will follow mainly \cite{decataldo2009the}, indicating some further lectures explicitly when necessary.\\
First, let us go back to the case of Riemann surfaces. They are complex curves of complex dimension 1, and smooth real surfaces of real dimension 2. They can be embedded in a projective space, for example the elliptic curves can be embedded in
$\mathbb P^2$ as the zero locus of a quartic or cubic polynomial, such as
\begin{align}
y^2=P_4(x), 
\end{align}
in local non homogeneous coordinates. If in the above equation in $\mathbb P^2$ we replace $P_4$ with a polynomial $P_d$ of degree $d=2g$ or $d=2g-1$, we get an hyperelliptic complex curve of genus $g$. They are therefore {\it algebraic varieties},
that are zeros of polynomial functions in some real or complex space. When an algebraic variety can be embedded in a projective space $\mathbb P^N$ for some $N$, then it is called a {\it projective variety}. Also, we can notice that the one forms on
a Riemann surface $\Sigma$, in local complex coordinates $z$ and $\bar z$, have the general form
\begin{align}
 \omega=f(z,\bar z) dz+g(z,\bar z)d\bar z,
\end{align}
with differential
\begin{align}
 d\omega= (\partial_z g-\partial_{\bar z} f) dz\wedge d\bar z. \label{exactform}
\end{align}
Any two form is
\begin{align}
\Omega=\lambda(z,\bar z) dz\wedge d\bar z, 
\end{align}
and it is closed, $d\Omega=0$ for dimensional reasons. It is cohomologically trivial if it has the form (\ref{exactform}). $\Sigma$ is also a K\"ahler manifold, which means the following. It is always possible to put on it a Hermitian metric 
\begin{align}
 G=g_{z\bar z} dz\otimes d\bar z,
\end{align}
that defines at any point $p$ of $\Sigma$ a sesquilinear map
\begin{align}
G_p: T_p\Sigma \times T_p\Sigma \longrightarrow \mathbb C,
\end{align}
such that 
\begin{align}
 G_p(\alpha \xi, \zeta)=G(\xi, \bar \alpha \zeta)
\end{align}
for any $\alpha\in \mathbb C$. Its real part defines a Riemannian metric on $\Sigma$, while its imaginary part defines a two form
\begin{align}
 K=-\frac i2 g_{z\bar z} dz \wedge d\bar z,
\end{align}
which is a closed form, called the K\"ahler form. The same construction can be made in higher dimensions, with the only difference that in general the K\"ahler form is not automatically closed. When it is, then the complex manifold is said to be a K\"ahler manifold.
It is possible to prove that for a K\"ahler manifold $M$, in a local coordinate patch $U$ it is always possible to find a real valued function $\mathcal K:U\rightarrow \mathbb R$, called the {\it K\"ahler potential}, such that 
\begin{align}
g_{z\bar z}=\partial_z\partial_{\bar z} \mathcal K. 
\end{align}
An example are the complex projective manifolds $\mathbb P^N$, that, in non homogeneous coordinates $z_1,\ldots,z_N$, admit the K\"ahler potential
\begin{align}
 \mathcal K_{FS} =\log (1+\sum_{j=1}^N |z_j|^2),
\end{align}
generating the Fubini-Study Hermitian metric
\begin{align}
 G_{FS}=\frac 1{(1+\sum_{j=1}^N |z_j|^2)^2} \left[ (1+\sum_{j=1}^N |z_j|^2) \sum_{k=1}^N dz_k\otimes d\bar z_k - \sum_{j,k=1}^N z_j \bar z_k  dz_j\otimes d\bar z_k\right].
\end{align}
Any projective manifold is a K\"ahler manifold, since it inherits a K\"ahler structure after restriction of the Fubini-Study of the ambient space $\mathbb P^N$ in which it is embedded.\\
The cohomology of a K\"ahler manifold of complex dimension $m$ has a nice structure, called a Hodge structure. For any $k=0, 1, \ldots, 2m$, the cohomology group $H^k(M,\mathbb C)$ admits a direct decomposition
\begin{align}
 H^k(M,\mathbb C)=\bigoplus_{p+q=k} H^{p,q}(M,\mathbb C),
\end{align}
where $p$ and $q$ run in $\{0,\ldots,m\}$, and the elements of $H^{p,q}(M,\mathbb C)$ are represented by closed forms of the type
\begin{align}
 \omega_{p,q}=\sum_{j_1,\ldots,j_p,k_1,\ldots,k_q} \omega_{j_1\ldots j_p k_1\ldots k_q} dz^{j_1} \wedge \cdots \wedge dz^{j_p} \wedge d\bar z^{k_1}\wedge \cdots d\bar z^{k_q}.
 \end{align}
For example, for connected Riemann surfaces of genus $g$ we have
\begin{align}
 H^0(\Sigma,\mathbb C) &\simeq \mathbb C, \\
 H^1(\Sigma,\mathbb C)&=H^{1,0} (\Sigma,\mathbb C)\oplus H^{0,1}(\Sigma,\mathbb C)\simeq \mathbb C^g\oplus \mathbb C^g,\\
 H^2(\Sigma,\mathbb C)&=H^{2,0}(\Sigma,\mathbb C)\oplus H^{1,1}(\Sigma,\mathbb C)\oplus H^{0,2}(\Sigma,\mathbb C)\simeq 0\oplus \mathbb C\oplus 0.
\end{align}
In particular, $H^{1,0}(\Sigma,\mathbb C)$ is generated by $g$ holomorphic forms, which for an hyperelliptic curve in $\mathbb P^2$
\begin{align}
 y^2=P(z),
\end{align}
are
\begin{align}
 \omega_j= \frac {z^j dz}{y}, \quad j=0,\ldots, g-1.
\end{align}
$H^2(\Sigma,\mathbb C)$ is generated by the K\"ahler form, which is also real valued, $K\in H^{1,1}(M,\mathbb C)\cap H^2(M,\mathbb R)$. \\
Now, unfortunately, it is time to become more abstract. Given a cohomology group of any kind, let us think at it as a finitely generated abelian group $H$ over $\mathbb Z$, and, following \cite{decataldo2009the}, let use the notation 
$H_{\mathbb K}=H\otimes_{\mathbb Z} \mathbb K$, $\mathbb K=\mathbb Q, \mathbb R, \mathbb C$. We say that $H$ carries a {\it pure Hodge structure} of weight $k$ if there is a decomposition of abelian groups
\begin{align}
 H_{\mathbb C}=\bigoplus_{p+q=k} H^{p,q}, \quad i.e.\quad \overline{H^{p,q}}=H^{q,p}.
\end{align}
An equivalent formulation is to say that $H$ admits a decreasing Hodge Filtration, that is a family of abelian subgroups $F^p$ of $H_{\mathbb C}$ with the properties
\begin{align}
 H_{\mathbb C}&=F^0\supset F^{1} \supset \cdots \supset F^p \supset F^{p+1}\supset \cdots, \\
 F^p \cap \overline {F^{k+1-p}}&=0, \\
 F^p \oplus \overline {F^{k-p}}&=H_{\mathbb C}.
\end{align}
Indeed, the equivalence is read by the relations
\begin{align}
 H^{p,q}&=F^p\cap \overline{F^q},\\
 F^p&=\bigoplus_{j\geq p} H^{j,k-j}.
\end{align}
On $H_{\mathbb C}$ there is a real action of $S^1\subset \mathbb C_*$ given by $\rho(s) a=s^{p-q} a$ for $a\in H^{p,q}$, and extended by $\mathbb R$ linearity. In particular, $\rho(i)=w$ is called the Weyl map.
A {\it polarization} on the pure Hodge structure is a real bilinear form $Q$ on $H_{\mathbb R}$, which is invariant under the action of $S^1$ and such that 
\begin{align}
 B(a,b):=Q(a,wb)
\end{align}
is symmetric and positive definite. After extending $Q$ over $H_{\mathbb C}$, one gets that $i^{2k+p-q} Q(a,\bar a)>0$ for a non vanishing $a\in H^{p,q}$.\\
Now, consider a non singular algebraic projective variety of dimension $m$, $X\subset \mathbb P^N$ (for instance a projective manifold). Let $H$ be a generic hyperplane in $\mathbb P^N$. Then, $H\cap X$ defines an element of
$H_{n-2}(X,\mathbb Z)$, whose Poincar\'e dual is $\eta\in H^2(X,\mathbb Z)$, see \cite{maxim2019intersection}. We then define the Lefschetz map
\begin{align}
 L: H^k(X,\mathbb Q) \longrightarrow H^{k+2}(X,\mathbb Q), \quad a\mapsto L(a)=\eta \cup a.
\end{align}
The degree $k$ primitive vector spaces are the subspaces $P^k\subseteq H^k(X,\mathbb Q)$ defined by $P^k:={\rm ker} L^{m-k+1}$. Finally, on $H^k(X,\mathbb Q)$ let us define the quadratic form
\begin{align}
 Q^{(k)}_L([\omega],[\omega']):=(-1)^{\frac {k(k+1)}2} \int_X \eta^{m-k}\wedge \omega \wedge \omega'.
\end{align}
Then, $H^k(X,\mathbb Z)$ is a pure Hodge structure of weight $k$ and $P^k$ is a rational hodge structure of weight $k$. Then we have the following important properties (\cite{decataldo2009the}, Th.5.2.1).

\

\begin{Theorem}[Hard Lefschetz Theorem]
 With the above notations, the maps
\begin{align}
L^k:H^{m-k}(X,\mathbb Q)\longrightarrow H^{m+k}(X,\mathbb Q)
\end{align}
are isomorphisms.
\end{Theorem}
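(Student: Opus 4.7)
The plan is to reduce the theorem to a purely representation-theoretic statement about finite-dimensional $\mathfrak{sl}_2$-modules, using Hodge theory on the Kähler manifold underlying $X$. Since $X\subset \mathbb P^N$ is smooth projective, the restriction of the Fubini–Study metric makes $X$ a compact Kähler manifold whose Kähler class is (a positive multiple of) the hyperplane class $\eta$. Thus $L(a)=\eta\cup a$ is, up to scalar, cup product with the Kähler class $[K]$, and it suffices to prove the corresponding statement with $[K]$.

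First I would invoke the Hodge decomposition theorem, which identifies each de Rham cohomology group $H^k(X,\mathbb C)$ with the space $\mathcal H^k(X)$ of harmonic $k$-forms, via the Laplacian associated to the Kähler metric. Next I would introduce, on the graded space $\mathcal H^\ast(X)=\bigoplus_k \mathcal H^k(X)$, three operators: $L=K\wedge(\cdot)$, its adjoint $\Lambda=L^\ast$ with respect to the $L^2$-Hermitian product on forms, and the grading operator $H$ acting on $\mathcal H^k$ by multiplication by $k-m$. The crucial input is the Kähler identity
\begin{equation}
[L,\Lambda]=H,\qquad [H,L]=2L,\qquad [H,\Lambda]=-2\Lambda,
\end{equation}
which is a consequence of the standard identity $[\Lambda,\bar\partial]=-i\partial^\ast$ and its conjugate, combined with $dK=0$. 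These relations say precisely that $(L,\Lambda,H)$ define a representation of the Lie algebra $\mathfrak{sl}_2(\mathbb C)$ on $\mathcal H^\ast(X)$.

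Because $\mathcal H^\ast(X)$ is finite-dimensional, the $\mathfrak{sl}_2$-representation decomposes into a finite direct sum of finite-dimensional irreducible representations. For any irreducible finite-dimensional $\mathfrak{sl}_2$-module $V=\bigoplus_j V_j$ graded by $H$-eigenvalues, it is an elementary fact that $L^k\colon V_{-k}\to V_k$ is a linear isomorphism for every $k\ge 0$: both weight spaces have the same dimension, and $L^k$ restricted to the lowest weight vector is nonzero. Applying this fiberwise to each irreducible summand, and taking the part landing in $\mathcal H^{m-k}$ (i.e. $H$-eigenvalue $-k$) and $\mathcal H^{m+k}$ (eigenvalue $+k$), gives that
\begin{equation}
L^k\colon \mathcal H^{m-k}(X)\longrightarrow \mathcal H^{m+k}(X)
\end{equation}
is an isomorphism. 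Transferring back through the Hodge isomorphism and noting that $L$ on harmonic forms corresponds to cup product with $[K]$ on cohomology (since harmonic representatives multiply to closed forms, whose classes agree with the cup product), we obtain the isomorphism $L^k\colon H^{m-k}(X,\mathbb C)\to H^{m+k}(X,\mathbb C)$. Finally, because $\eta\in H^2(X,\mathbb Z)$ and $L$ preserves the rational subspace, the isomorphism descends to $\mathbb Q$-coefficients, yielding the claim.

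The main obstacle is establishing the Kähler identities $[L,\Lambda]=H$ rigorously; this is not a formal consequence of $dK=0$ but requires a pointwise computation in Kähler normal coordinates (where the metric osculates the flat one to second order), or equivalently the identification of $\Lambda$ as a contraction operator and the careful bookkeeping of commutators of wedge and contraction. Once these identities are in hand, the rest of the argument is, as indicated, purely an application of the representation theory of $\mathfrak{sl}_2$ to a finite-dimensional module.
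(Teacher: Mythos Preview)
Your outline is the standard and correct proof of Hard Lefschetz via the $\mathfrak{sl}_2$-action on harmonic forms. One small point worth making explicit: for $L$, $\Lambda$, $H$ to act on $\mathcal H^\ast(X)$ rather than on all forms, you need that $L$ (and hence $\Lambda$) commutes with the Laplacian. This is part of the same package of K\"ahler identities you invoke, but it is logically prior to the $\mathfrak{sl}_2$ relations on harmonic forms, so it deserves a sentence. The descent to $\mathbb Q$ is fine once you note that $L^k$ is $\mathbb Q$-linear and that source and target have equal $\mathbb Q$-dimension (e.g.\ by Poincar\'e duality, or by tensoring the $\mathbb C$-isomorphism).

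As for comparison with the paper: there is nothing to compare. The paper does not prove the Hard Lefschetz Theorem at all; it simply records the statement and refers the reader to \cite{decataldo2009the}, Th.~5.2.1. Your proposal therefore supplies what the paper deliberately omits, and does so along the classical Hodge-theoretic route.
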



\

\

\begin{Theorem}[Primitive Lefschetz decomposition]
 For $0\leq k\leq m$ one has the decomposition
\begin{align}
H^k(X,\mathbb Q)=\bigoplus_{j\geq 0} L^j P^{k-2j}. 
\end{align}
Moreover, for any given $j$, $V^j\equiv L^j P^{k-2j}$ is a pure Hodge substructure of weight $k$, and for any $j\neq j'$, $V^j$ and $V^{j'}$ are $Q^{(k)}_L$-orthogonal.
\end{Theorem}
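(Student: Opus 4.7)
The strategy is to first establish the elementary one-step decomposition $H^k(X,\mathbb Q)=P^k\oplus L\bigl(H^{k-2}(X,\mathbb Q)\bigr)$ for every $k\le m$, and then iterate it on $H^{k-2}(X,\mathbb Q)$, peeling off one primitive summand at each step until the degree drops below $0$. The Hard Lefschetz Theorem stated just above is the single non-formal input: it yields, for every $\ell\le m$, an isomorphism $L^{m-\ell}\colon H^{\ell}(X,\mathbb Q)\to H^{2m-\ell}(X,\mathbb Q)$, and from this one immediately deduces that $L\colon H^{\ell-2}\to H^{\ell}$ is injective whenever $\ell\le m$. These two consequences control both the direct sum property and the recursion.

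To prove $H^k=P^k\oplus L(H^{k-2})$ for $0\le k\le m$ I would argue in two stages. For the intersection: if $\alpha=L\beta$ lies in $P^k=\ker L^{m-k+1}$, then $L^{m-k+2}\beta=0$, and since $k-2\le m$ the isomorphism $L^{m-k+2}\colon H^{k-2}\to H^{2m-k+2}$ forces $\beta=0$, hence $\alpha=0$. For the span: given $\alpha\in H^k$ the class $L^{m-k+1}\alpha$ sits in $H^{2m-k+2}$, and since $L^{m-k+2}\colon H^{k-2}\to H^{2m-k+2}$ is surjective one can solve $L^{m-k+2}\beta=L^{m-k+1}\alpha$; then $\alpha-L\beta\in P^k$ and $L\beta\in L(H^{k-2})$. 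Iterating the decomposition, and using that $L^{j}\colon H^{k-2j}\to H^{k}$ remains injective because $k-2j\le m$, yields $H^k=\bigoplus_{j\ge0}L^{j}P^{k-2j}$.

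The Hodge-structure and orthogonality statements then drop out almost for free. The Lefschetz class $\eta$ is the Chern class of an ample line bundle on $X$, hence is represented by a real $(1,1)$-form; cup product with $\eta$ therefore sends $H^{p,q}$ into $H^{p+1,q+1}$. This shows $L$ preserves the Hodge decomposition, so $P^{k-2j}$ inherits a Hodge structure of weight $k-2j$ and $L^{j}P^{k-2j}$ becomes a pure Hodge substructure of $H^k$ of weight $k$, concentrated in bidegrees $(p+j,q+j)$ with $p+q=k-2j$. For orthogonality under $Q^{(k)}_L$, take $\alpha\in P^{k-2i}$ and $\beta\in P^{k-2j}$ with $i<j$ (without loss of generality); then
\begin{equation*}
Q^{(k)}_{L}(L^{i}\alpha,L^{j}\beta)=(-1)^{k(k+1)/2}\int_X \eta^{\,m-k+i+j}\wedge\alpha\wedge\beta,
\end{equation*}
and since $m-k+i+j\ge m-k+2i+1=m-(k-2i)+1$, the factor $\eta^{m-k+i+j}\wedge\alpha$ already vanishes by definition of $P^{k-2i}$.

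The main obstacle is not internal to the argument above: once Hard Lefschetz is granted, the decomposition, its compatibility with the Hodge filtration, and the $Q^{(k)}_L$-orthogonality all follow by linear algebra and bookkeeping. The real difficulty is packaged into the Hard Lefschetz Theorem itself, whose standard proofs proceed either through Kähler identities and the $\mathfrak{sl}_2$-action generated by $L$ and its adjoint $\Lambda$, or through an algebraic argument using the Hodge index theorem; I would invoke it as a black box. A secondary point demanding care is that the iteration must track the injectivity of $L^{j}$ restricted to $P^{k-2j}$ for all relevant $j$, but this reduces again to the same Hard Lefschetz isomorphism applied at the appropriate cohomological degree.
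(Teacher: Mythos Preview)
Your argument is correct and is in fact the standard proof of the Primitive Lefschetz decomposition: the one-step splitting $H^k=P^k\oplus L(H^{k-2})$ via Hard Lefschetz, its iteration, the type $(1,1)$ property of $\eta$ for the Hodge-substructure claim, and the degree count for $Q^{(k)}_L$-orthogonality are all handled cleanly. One small point worth making explicit is why $L^j\colon H^{k-2j}\to H^k$ is injective: since $k\le m$ one has $m-(k-2j)=m-k+2j\ge j$, so $L^j$ is a left factor of the Hard Lefschetz isomorphism $L^{m-k+2j}$ on $H^{k-2j}$; you allude to this but the inequality deserves a line.

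There is nothing to compare against in the paper itself: the statement is quoted from \cite{decataldo2009the}, Th.~5.2.1, as part of a survey of background results, and no proof is given or sketched. Your write-up supplies exactly the argument one finds in standard references on Hodge theory, with Hard Lefschetz taken as a black box, which is the appropriate level of detail here.
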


\

\

\begin{Theorem}[Hodge-Riemann bilinear relations]
 For any $0\leq k\leq m$, $Q^{(k)}_L$ is a polarization for the pure Hodge structure $P^k_{\mathbb R}$. In particular, after complexification,
\begin{align}
 i^{p-q} Q^{(k)}_L ([\omega],[\bar\omega])>0, \qquad \forall [\omega]
\in P^k\cap H^{p,q}(X,\mathbb C),
\end{align}
so it is strictly positive definite.
\end{Theorem}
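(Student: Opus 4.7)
The plan is to reduce everything to pointwise linear algebra on harmonic forms via Hodge theory, and then to invoke the Weil (Hodge--Riemann) identity relating the Hodge star to the Lefschetz operator on primitive forms. First, I would pick a K\"ahler metric on $X$ whose K\"ahler class is (a positive rational multiple of) $\eta$, so that the Lefschetz operator $L$ on cohomology is induced, via the harmonic representative, by wedging with the pointwise K\"ahler form $\kappa$. By the Hodge decomposition theorem, each class in $H^k(X,\mathbb{C})$ has a unique harmonic representative, and this identification is compatible with the $(p,q)$ bigrading; hence primitivity $L^{m-k+1}[\omega]=0$ at the level of cohomology can be promoted to pointwise primitivity $\kappa^{m-k+1}\wedge\omega=0$ of the harmonic representative.

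Next, I would set up the $\mathfrak{sl}_2$-triple $(L,\Lambda,H)$ acting on the exterior algebra at each point, where $\Lambda$ is the adjoint of $L$ and $H$ measures degree relative to $m$. The essential step is the \emph{Weil identity}: for a pointwise primitive $(p,q)$-form $\alpha$ with $p+q=k\le m$, one has
\begin{equation*}
\ast\,\alpha \;=\; (-1)^{\frac{k(k+1)}{2}}\, i^{p-q}\, \frac{L^{m-k}}{(m-k)!}\,\alpha.
\end{equation*}
This is the main technical ingredient, and it is what I expect to be the principal obstacle: it is essentially a finite-dimensional representation-theoretic statement for the $\mathfrak{sl}_2$-action on $\bigwedge^{\bullet} T^*_xX$, but one has to track the signs and powers of $i$ very carefully. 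Once the identity is in hand, one writes
\begin{equation*}
Q^{(k)}_L([\omega],[\bar\omega])
= (-1)^{\frac{k(k+1)}{2}} \int_X \eta^{m-k}\wedge\omega\wedge\bar\omega
= (-1)^{\frac{k(k+1)}{2}}(m-k)! \int_X \bigl(L^{m-k}\omega\bigr)\wedge\bar\omega,
\end{equation*}
and, using the Weil identity on the harmonic primitive $\omega\in P^k\cap H^{p,q}$, transforms this into
\begin{equation*}
i^{p-q}\,Q^{(k)}_L([\omega],[\bar\omega])
= (m-k)! \int_X \omega\wedge\ast\bar\omega
= (m-k)!\,\lVert\omega\rVert_{L^2}^2 \;>\;0,
\end{equation*}
which is the desired positivity.

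Finally, to conclude that $Q^{(k)}_L$ is a polarization of the Hodge structure on $P^k$, I would check the three defining properties separately: (i) $Q^{(k)}_L$ is defined over $\mathbb{Q}$ since $\eta$ is a rational class and the cup product is rational; (ii) invariance under the $S^1$-action $\rho(s)a=s^{p-q}a$ follows from the fact that $Q^{(k)}_L$ pairs $H^{p,q}$ nontrivially only with $H^{m-p,m-q}$ restricted to $P^k$ with $H^{q,p}$, so the weights cancel; (iii) the symmetry of $B(a,b)=Q^{(k)}_L(a,wb)$ and its positive definiteness on $P^k_{\mathbb R}$ follow by $\mathbb{R}$-linear extension from the pointwise positivity derived above, splitting a real primitive class as a sum over $(p,q)$ components conjugate in pairs. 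The rational and $S^1$-equivariance checks are formal; the real content of the theorem is concentrated entirely in the Weil identity and the Hodge-theoretic promotion from cohomology classes to harmonic representatives.
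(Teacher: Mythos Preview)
The paper does not actually prove this theorem: it is stated, together with the Hard Lefschetz Theorem and the Primitive Lefschetz decomposition, as a classical result, with a reference to de Cataldo--Migliorini (Th.~5.2.1). So there is no ``paper's own proof'' to compare against; the authors are quoting the theorem as background for the subsequent discussion of mixed Hodge structures and intersection cohomology.

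Your plan is the standard textbook argument (as in Griffiths--Harris, Voisin, or Huybrechts): pass to harmonic representatives via the K\"ahler identities, use that cohomological primitivity becomes pointwise primitivity for harmonic forms, and then apply the Weil formula relating $\ast$ to $L^{m-k}$ on primitive $(p,q)$-forms to convert $Q^{(k)}_L$ into an $L^2$-norm. The structure is correct and complete in outline. One small slip to fix: in your first displayed computation you write $\int_X \eta^{m-k}\wedge\omega\wedge\bar\omega = (m-k)!\int_X (L^{m-k}\omega)\wedge\bar\omega$, but by definition $L^{m-k}\omega = \eta^{m-k}\wedge\omega$ with no factorial; the $(m-k)!$ belongs on the other side, coming from the Weil identity $\ast\alpha = (-1)^{k(k+1)/2} i^{p-q}\,L^{m-k}\alpha/(m-k)!$. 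The two misplacements cancel, so your final positivity line is correct, but you should clean up the intermediate step. Also be careful that the Weil identity is applied to $\bar\omega$, which has type $(q,p)$, so the exponent of $i$ there is $q-p$; tracking this through gives the stated $i^{p-q}$ after inversion.
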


\

This result seems to give us a very broad generalization of the properties we met for integrals along one dimensional cycles on a Riemann surface. But, again, it is not enough, since we had to require for $X$ to be a nonsingular variety. The simplest
example is $\mathbb C_*=\mathbb C\backslash \{0\}$. Its first cohomology group is obviously one dimensional, and this fact is incompatible with the existence of a pure Hodge structure. Indeed, assuming $H^{2k+1}$ has a pure Hodge structure, then
the relation $H^{p,q}=\overline {H^{q,p}}$ implies
\begin{align}
 {\rm dim} (H^{2k+1})=2\sum_{j=0}^k {\rm dim}(H^{j,2k+1-j})\in 2\mathbb N,
\end{align}
so it must be even dimensional.\\
Since we have to work with general Feynman integrals, we cannot avoid to work with singular varieties, so we need to overcome this difficulty. Luckily, this has been done along the years, passing from Weyl, Serre, Grothendieck, and finally solved by
Deligne \cite{DeligneI}, \cite{DeligneII}, \cite{DeligneIII}, who proved the existence of a {\it mixed Hodge structure} on the $H^k(X,\mathbb Q)$ when $X$ is an arbitrary algebraic complex variety. This means that, given $X$ of complex dimension $m$ and 
$0\leq k\leq 2m$, it is always possible to find
two filtrations:
\begin{itemize}
 \item a decreasing filtration $F^*$, called the {\it Hodge filtration}, such that
 \begin{align}
 H^k(X,\mathbb Q) =F^0 \supseteq F^1 \supseteq \cdots \supseteq F^m \supseteq F^{m+1}=\{0\}; 
 \end{align}
 \item an increasing filtration $W_*$, called the {\it Weight filtration}, such that
 \begin{align}
 \{0\} =W_{-1}\subseteq W_0 \subseteq \cdots \subseteq W_{2k-1} \subseteq W_{2k}=H^k(X,\mathbb Q);
 \end{align}
\end{itemize}
which have the following property. On the complexification of each graded part of $W_*$, $F^*$ induces a decreasing filtration that endows the pieces 
\begin{align}
gr_j^W:=W_j\otimes \mathbb C/ W_{j-1} \otimes \mathbb C 
\end{align}
with a pure Hodge structure of weight $j$. We do not intend to delve further into these concepts, and demand the reader to \cite{DeligneI} for details.\\
At this point, in order to define intersection theory, one has essentially to tackle the problem on how to define intersections of cycles that may contain singularities or pass through a singular region. To get an idea, let us follow the nice exposition in 
\cite{maxim2019intersection}, Chp. 1. Let us consider a two dimensional pinched torus $W$, which looks something like a w\"urstel bent over and glued at the tips just in one point $p$. It is clear that on this surface, a loop around the hole is a non trivial generator for $H_1(W)$
(but pass through the singularity), while a loop $L$ wounding the ``w\"urstel'' shrinks down to a point when moved to the gluing point $p$ (the singularity), Fig. \ref{Fig4}. 
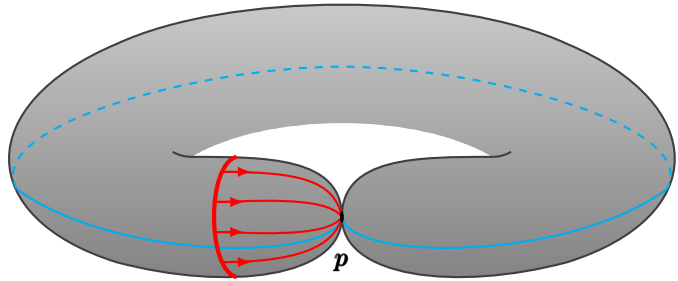
\begin{figure}[!htbp]
\begin{center}
\begin{tikzpicture}[>=latex,decoration={zigzag,amplitude=.5pt,segment length=2pt}]
\shade [top color=white!60!gray, bottom color=gray] (0,0) .. controls (0,0.8) and (-1,0.8) .. (-2,0.8)  .. controls (-1,1.4) and (1,1.4) .. (2,0.8) .. controls (1,0.8) and (0,0.8) .. (0,0) .. controls (0,-0.8) and (1.5,-1) .. (3,-0.6) .. controls (4.9,0) and (4.9,1.5) .. (3,2.3) .. controls (1.5,3) and (-1.5,3) .. (-3,2.3)
.. controls (-4.9,1.5) and (-4.9,0) .. (-3,-0.6) .. controls (-1.5,-1) and (0,-0.8) .. (0,0) -- cycle;
\draw [thick,gray!50!black] (0,0) .. controls (0,-0.8) and (1.5,-1) .. (3,-0.6) .. controls (4.9,0) and (4.9,1.5) .. (3,2.3) .. controls (1.5,3) and (-1.5,3) .. (-3,2.3) .. controls (-4.9,1.5) and (-4.9,0) .. (-3,-0.6) .. controls (-1.5,-1) and (0,-0.8) .. (0,0) -- cycle;
\draw [thick,gray!50!black] (0,0) .. controls (0,0.8) and (-1,0.8) .. (-2,0.8) arc (270:240:0.5);
\draw [thick,gray!50!black] (0,0) .. controls (0,0.8) and (1,0.8) .. (2,0.8) arc (270:300:0.5);
\draw [thick,cyan] (0,0) .. controls (-0.1,-0.6) and (-3,-0.6) .. (-4.32,0.37); 
\draw [thick,cyan] (0,0) .. controls (0.1,-0.6) and (3,-0.6) .. (4.32,0.37); 
\draw [thick,cyan,dashed]  (-4.32,0.37) .. controls (-4.8,1.2) and (-1.9,2) .. (0,2) .. controls (1.9,2) and (4.8,1.2) .. (4.32,0.37); 
\draw [ultra thick,red] (-1.4,0.8) .. controls (-1.8,0.7) and (-1.8,-0.7) .. (-1.4,-0.8); 
\draw [red,thick] (-1.6,0.6) .. controls (-0.6,0.6) and (-0.1,0.5) .. (0,0); \draw [red,thick, ->] (-1.6,0.6) -- (-1.2,0.6);
\draw [red,thick] (-1.7,0.2) .. controls (-1.3,0.2) and (-0.2,0.3) .. (0,0); \draw [red,thick,->] (-1.7,0.2) -- (-1.3,0.2);
\draw [red,thick] (-1.7,-0.2) .. controls (-1.3,-0.2) and (-0.2,-0.3) .. (0,0); \draw [red,thick,->] (-1.7,-0.2) -- (-1.3,-0.2);
\draw [red,thick] (-1.6,-0.6) .. controls (-0.6,-0.6) and (-0.1,-0.4) .. (0,0); \draw [red,thick,->] (-1.6,-0.6) -- (-1.2,-0.6);
\filldraw (0,0) ellipse (0.5pt and 2pt);
\node at (0,-0.6) {$\pmb p$};
\end{tikzpicture}
\caption{A singular torus}
\label{Fig4}
\end{center}
\end{figure}
So, $H^1(W)$ is one dimensional and the pure Hodge structure is lost. A relevant 
observation is that if we think at the singular point $p$ as a zero dimensional representative of $H_0(W)$, it meets the loop $L$ non transversally. This situation repeats for higher dimensional varieties, where singularities can be much worse. A strategy
is to construct more refined (co)homologies but starting from the simplicial ones and restricting the allowed (co)chains. One idea is first to generate a homology starting from a simplicial one where only triangulations giving locally finite chains
 are allowed (so that any point admits a small boundary that intersects only a finite number of cells of the chain). This construction defines the {\it Borel-Moore homology groups} $H^{BM}_k(X,\mathbb Z)$, \cite{maxim2019intersection}. 
 Tensoring with $\mathbb C$ gives the isomorphism
$H_k^{BM}(X,\mathbb C)\simeq H^k_c(X,\mathbb C)$ of the compactly supported cohomology, whose cochains (or closed $k$-forms) have compact support. Assume now that singularities define an increasing filtration of $X$, that is a sequence of subvarieties
\begin{align}
 \emptyset =X_{-1} \subseteq X_0 \subseteq X_1\ldots \subseteq X_{2m-2}=X_{2m-1} \subseteq X_{2m}=X,
\end{align}
where the $X_j$, $j<2m$ are the real $j$-dimensional singular sub loci, called the {\it strata}, of which the largest one has at least real codimension 2, and $X\backslash X_{2m-2}$ is required to be dense in $X$. With this in mind, we can refine further the 
homology by allowing only triangulations giving locally finite $k$-chains whose support intersects all strata transversally. We call $H^{tr}_k(X)$ the obtained homology group. A theorem by McCrory (\cite{maxim2019intersection}, Th.2.3.2, \cite{mccrory1975cone}) then provides the isomorphism 
$H^{tr}_k(X)\simeq H^{2m-k}(X,\mathbb Z)$. Assuming that (in a suitable sense, see \cite{maxim2019intersection}) $X$ is oriented, we can use cap product with the fundamental class $[X]$ to map $H^{2m-k}(X,\mathbb Z)$ into $H_k^{BM}(X,\mathbb Z)\simeq H^{lf}_k(X)$,
the last term being the locally finite simplicial homology. However, since $X$ is singular, the Poincar\'e map is no more an isomorphism. Therefore, $H^{lf}_k(X)$ and $H^{tr}_k(X)$ are not isomorphic, the reason essentially being that, in general, 
the locally finite cycles cannot be deformed so to have transversal intersection with all strata. For example, this is what happened in our example of the ``glued w\"urstel''. Therefore, one needs to do a final refinement, in order to get a correct intersection theory
together with a good Poincar\'e duality. This is obtained by restricting which chains can meet the singular locus, if not transversally. \\
The first important request is that any chain has to meet $X_{2m-2}$ transversally. For algebraic varieties of complex dimension 1, that's all. For example, in the case of our glued w\"urstel this just implies that the loop around the hole is forbidden. 
Thus, $H_1(Z)=0$ and the perverse homology of the singular surface is the same as for the two dimensional sphere. Indeed, the singularity in this case is non normal, in the sense that if we take a small neighbourhood of the singular point $p$ and keep the
point out, the neighbourhood separates in two disconnected components. We can normalize the surface by adding the laking point separately to each one of the two disconnected parts. In other words we unglue the w\"urstel, which becomes an usual one, 
with the topology of $S^2$. These kinds of singularity are thus simply solved by normalization, and for the case of complex dimension two things remain relatively simple.\\
However, for higher dimensional $X$, one has to consider also intersection with higher codimensional singularities. Here is where one introduces a {\it perversity}, \cite{goresky1,goresky2,goresky3,goresky4,goresky5,goresky6,goresky7,goresky8,goresky9}: 

\

\begin{Definition}
 A perversity is a function 
\begin{align}
p:\mathbb Z_{\geq 2}\longrightarrow \mathbb N, 
\end{align}
satisfying the conditions 
\begin{align}
 p(2)=0, \qquad p(j)\leq p(j+1) \leq p(j)+1, \quad j\geq 2.
\end{align}
\end{Definition}

\

In particular, $p_t(j)=j-2$ is called the top perversity, while $p_0(j)=0$ is the zero perversity. Also, $p_{lm}(j)=\lfloor (j-2)/2 \rfloor$, $j>2$, is called the lower-middle perversity, and $p_{um}=p_t-p_{lm}$ is the upper-middle perversity.\\
Given a perversity $p$, an allowed $k$-chain $\sigma$ is a locally finite chain whose support $|\sigma|$ satisfies
\begin{align}
 {\rm dim} (|\sigma| \cap X_{2m-j})&\geq k-j+p(j), \\
 {\rm dim} (|\partial\sigma| \cap X_{2m-j})&\geq k-j+p(j)-1,
\end{align}
for all $j\geq 2$. The so generated homology groups are called the {\it BM intersection homology groups} with perversity $p$ of $X$, $I^{BM}H_k^{(p)} (X,\mathbb Z)$. If in place of locally finite chains one starts with finite chains, then 
one obtains the {\it homology groups} with perversity $p$, $IH_k^{(p)} (X,\mathbb Z)$. \\
With these notions, one has the following version of the Poincar\'e duality (\cite{maxim2019intersection}, Th.2.6.1):

\

\begin{Theorem}[Poincar\'e duality]
 If $X$ is an algebraic manifold of complex dimension $m$, whose singularities define a filtration as above, and $p$ and $\bar p$ are two complementary perversities, which means $p+\bar p=p_t$, then
there is a a non-degenerate bilinear pairing
\begin{align}
\frown : IH_k^{(p)}(X,\mathbb Z) \times I^{BM}H_{2m-k}^{(\bar p)}(X,\mathbb Z) \longrightarrow \mathbb Q.
\end{align}
\end{Theorem}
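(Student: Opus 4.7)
The plan is to construct the pairing geometrically as a signed count of transverse intersection points, verify it descends to intersection homology, and then establish non-degeneracy by identifying the pairing with the one coming from Verdier duality between the Deligne intersection complexes.

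First, I would define the pairing at the chain level. Given a $p$-allowable finite $k$-chain $\sigma$ and a $\bar p$-allowable locally finite $(2m{-}k)$-chain $\tau$, the crucial dimensional estimate is
\begin{align*}
\dim(|\sigma|\cap X_{2m-j}) + \dim(|\tau|\cap X_{2m-j})
&\leq (k-j+p(j)) + ((2m-k)-j+\bar p(j)) \\
&= 2m - 2j + (j-2) = (2m-j) - 2,
\end{align*}
which is strictly less than $\dim X_{2m-j}$ by two. Consequently, after a stratified transversality perturbation, $|\sigma|\cap|\tau|$ misses every singular stratum and consists of a locally finite set of transverse intersection points in the smooth locus; local finiteness of $\tau$ together with compactness of $|\sigma|$ makes this set finite. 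I set $\sigma\frown\tau := \sum_{x\in|\sigma|\cap|\tau|}\varepsilon(x)\in\mathbb{Z}$, with signs dictated by the orientations as in the smooth case.

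Second, I would show the pairing passes to homology. If $\sigma=\partial\alpha$ for a $p$-allowable $(k{+}1)$-chain $\alpha$, the same dimensional budget (with $k$ replaced by $k{+}1$) still gives a slack of one, so $\alpha\cap\tau$ can be arranged to be a $1$-chain in the smooth locus with $\partial(\alpha\cap\tau)=\sigma\frown\tau$; hence the count is zero. A symmetric argument handles variations of $\tau$ in the locally finite $\bar p$-class. This step is essentially the stratified analogue of the classical moving lemma, and it encodes the McCrory-type identification of allowable chains with dually transverse chains used to motivate perversities in the first place.

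Third, non-degeneracy. Here the cleanest route is sheaf-theoretic: by Goresky-MacPherson, for each perversity $q$ there is a constructible complex $\mathcal{IC}_q^\bullet$ on $X$ whose hypercohomology (resp. compactly supported hypercohomology) recovers $I^{BM}H_*^{(q)}(X,\mathbb{Q})$ (resp. $IH_*^{(q)}(X,\mathbb{Q})$). Complementarity $p+\bar p=p_t$ is precisely the condition under which Verdier duality yields a canonical isomorphism $\mathbb{D}(\mathcal{IC}_p^\bullet)\simeq \mathcal{IC}_{\bar p}^\bullet[2m]$ in the derived category of sheaves on $X$. Taking hypercohomology of this isomorphism produces the non-degenerate pairing $IH_k^{(p)}\times I^{BM}H_{2m-k}^{(\bar p)}\to\mathbb{Q}$. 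To close the loop, I would verify that this sheaf-theoretic pairing agrees with the geometric $\frown$: both restrict to the ordinary Poincaré intersection pairing on the smooth open stratum $X\setminus X_{2m-2}$, and the axiomatic characterization of $\mathcal{IC}_q^\bullet$ (uniqueness up to quasi-isomorphism given the perversity-controlled cohomology sheaves on each stratum) forces the two pairings to coincide globally.

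The main obstacle I expect is the stratified transversality step. One cannot simply invoke Thom's theorem: the perturbation must preserve the inequalities $\dim(|\sigma|\cap X_{2m-j})\leq k-j+p(j)$ \emph{for every} $j\geq 2$ simultaneously. This requires an inductive isotopy argument up the filtration, using conical neighborhoods of each stratum and controlled collapsing maps, which is exactly the technical heart of Goresky's original geometric construction. Once stratified general position is established, both well-definedness of the chain-level pairing and the identification with the Verdier-dual pairing reduce to comparatively formal diagram chases.
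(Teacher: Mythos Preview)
The paper does not prove this theorem: it is quoted from the literature (specifically \cite{maxim2019intersection}, Th.~2.6.1), and the authors explicitly state that ``a construction that proves homological invariance and independence of choices, as well as calculability, passes through sheaf theory and its generalizations. Analyzing such a formulation goes too far beyond the aim of our discussion.'' So there is no in-paper proof to compare against.

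That said, your outline is the standard one and matches in spirit the single sentence the paper does offer. Your first two steps (the dimensional budget showing that complementary-perversity allowable chains can be made to intersect transversally away from the singular strata, and the cobordism argument to show the count is a homology invariant) are exactly Goresky--MacPherson's original geometric construction; your third step (non-degeneracy via the Verdier-dual isomorphism $\mathbb{D}(\mathcal{IC}_p^\bullet)\simeq \mathcal{IC}_{\bar p}^\bullet[2m]$) is precisely the sheaf-theoretic route the paper alludes to. You also correctly flag stratified general position as the real technical work. One small point: the paper's allowability inequalities are written with $\geq$, which is a typo---the correct condition is $\leq$, as you use.
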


\

This theorem provides the necessary definition of ($\mathbb Q$-valued) intersection number. A construction that proves homological invariance and independence of choices, as well as calculability, passes through sheaf theory and its generalizations.
Analyzing such a formulation goes too far beyond the aim of our discussion. However, we want just to add some remarks, for who has at least a basic knowledge on sheaf theory (a nice introduction is in \cite{maxim2019intersection}, Chp. 4). In the sheaf theoretical formulation,
a particular role is played by Local Systems, which are locally constant sheaves, that is, any sheaf $\mathcal L$ over $X$, such that its local restriction to a small neighbourhood $U_p$ of any given $p\in X$, $\mathcal L_p\simeq M_p$, a given $A$-module. 
Thus, it comes out to be necessary to work with {\it twisted homologies}, which are homologies taking value in a Local System. This leads to work with complexes of sheaves. A main difficulty in doing these, is that in such framework often diagrams 
commute only up to homotopies and not identically. To overcome this problem, one introduces the category $K(X)$, that is essentially the same of the one of complex of sheaves, up to identifications by homotopy. The problem is that one of the main
computational instruments, t.i. short exact sequences, are lost in passing from the category $C(X)$ of complexes of sheaves to $K(X)$. A remedy consists in finding a way to map short exact sequences in $C(X)$ into {\it distinguished triangles}, from which it is
also possible to construct long exact sequences. This leads to the language of derived categories (\cite{maxim2019intersection}, Chp. 4). This way one arrives to construct the intersection homology groups $IH^k(X)$.\\
In place of embarking into the formidably difficult program of analyzing such constructions further, we pass to illustrate a possible strategy to be applied to the more specific situation of computing integrals of the form (\ref{baikov}), suggested by 
M. Kontsevich in a private communication to P. Mastrolia, T. Damour and S.L. Cacciatori, at IHES on January 2020.

\subsection{From Feynman integrals to intersection theory.}\label{FeynmanToInters}
Let us consider (\ref{baikov}). In general, we assume that $\nu_j=0$ for sake of simplicity, and $\gamma\in \mathbb C\backslash \mathbb Z$.\footnote{Indeed $\gamma=N/2+\epsilon$ because of dimensional regularization.} Consider the map
\begin{align}
 B:\mathbb C_*^M \longrightarrow \mathbb C \supset \mathbb C_*.
\end{align}
This map has critical regions with critical values $t_\alpha\in \mathbb C$, $\alpha=1,\ldots,n$. For each $t$, we can consider the algebraic variety
\begin{align}
X_t=B^{-1}(t) \cap \mathbb C_*^M, 
\end{align}
with $t\in \tilde {\mathbb C} \equiv\mathbb C^*-\{t_\alpha\}_{\alpha=1}^n$. For these $t$, $X_t$ is smooth and admits a compactification $\bar X_t$, with $\bar X_t$ projective, $X_t\subset \bar X_t$ open embedding, and $\bar X_t\backslash X_t$ is a normal 
crossing divisor, which means that at any point of $X_t$ it is described by local equations $\prod_{j=1}^r z_j$. Thus, starting from $H^{M-1}(X_t, \mathbb Z)$, we get a (Deligne) weight filtration, starting from $\bar X_t$ (with $H^{M-1}(\bar X_t)$ of lowest 
weight $0$). When $t$ moves in $\tilde {\mathbb C}$, then we get a variation of mixed Hodge structures on $H^{M-1}(X_t, \mathbb Z)$.\\
Now, these variations of mixed Hodge structures define a perverse sheaf over $\tilde {\mathbb C}$. Luckily, on complex curves they can be understood quite easily, we refer to \cite{Williamson} for a nice presentation. A perverse sheaf is not exactly a sheaf, but
has properties similar to a sheaf, as it can be glued by local data. Also, away from a singular point, it is effectively a sheaf, more precisely a Local System, so on $x$ non singular, we can think the relative stalks as a finitely generated module $V$. Around a 
singular point, one has to add a monodromy rule around the singular point $t_\alpha$:
\begin{align}
 T_\alpha : V\rightarrow V, \qquad \alpha=1,\ldots,n.
\end{align}
This is the effect of our variation of mixed Hodge monodromy. Let $V^{T_\alpha}\subseteq V$ the submodule of invariant elements (eigenelements). The exact sequence $V^{T_\alpha} \hookrightarrow V\rightarrow V_{T_\alpha}$ defines the coinvariant
module $V_{T_\alpha}$. Since the monodromy action depends only on the first homotopy group, we can see monodromies as representing the action of the first homotopy group $\pi_1(\tilde {\mathbb C})$. On $V_{T_\alpha}$, we get an irreducible
representation of $\pi_1(\mathbb C^*-\{t_\alpha\}_{\alpha=1}^n)$ on a variation of a pure Hodge structure. Thus, it is endowed with a polarization via the Hodge-Riemann bilinear relations. This polarization is symmetric or skew-symmetric, according to the
parity of $M-1$, so we have representations
\begin{align}
 \pi_1(\tilde {\mathbb C})\longrightarrow G,
\end{align}
with $G$ $\mathbb Z$-orthogonal or $\mathbb Z$-symplectic, according to the given parity. Any such irreps determines univocally a local system. 
Notice that from the Primitive Lefschetz decomposition one can read out the precise structure of the intersection product induced by the polarization.  \\
At this point, one can also determine a basis for the homology underlying the original integral, using this perverse extension. We can get a twisted sheaf over $C_*$, given by the Local System $\mathcal L$ twisted by the perverse monodromies, 
associated to $t^\gamma$. The original integral can be thought to be defined on a middle dimensional cycles in a complex $m$-dimensional variety $Y$. 
The cohomology $H^1(\mathbb C_*, \mathcal L)$ determines a basis of real $m$-dimensional {\it Lefschetz thimbles} in $Y$, Fig. \ref{Fig5}, which indeed represent the perverse cohomology structure. 
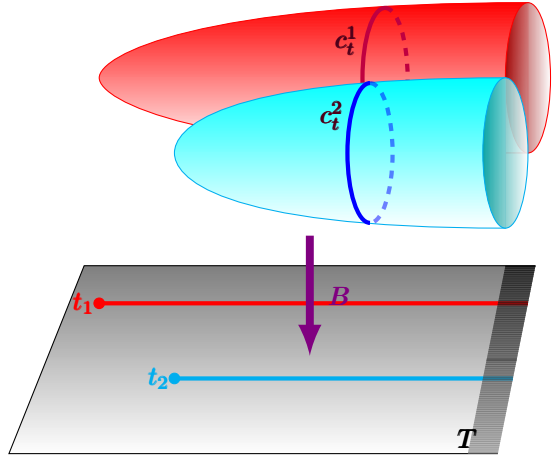
\begin{figure}[!htbp]
\begin{center}
\begin{tikzpicture}[>=latex,decoration={zigzag,amplitude=.5pt,segment length=2pt}]
\draw [top color=gray] (3,0) -- (-3,0) -- (-4,-2.5) -- (2.5,-2.5);
\draw [red,top color=red] (3-0.5*0.5/2.5,3.5) .. controls (-1.5,3.5) and (-2.8,2.8) .. (-2.8,2.5) .. controls (-2.8,5-2.8) and (-1.5,5-3.5) .. (3-0.5*0.5/2.5,5-3.5) -- cycle;
\draw [red,left color=red!60!black] (3-0.5*0.5/2.5,2.5) ellipse (0.3 and 1);
\draw [purple,ultra thick] (1,2.5+0.93) arc (90:270:0.3 and 0.93);
\draw [purple,ultra thick,dashed] (1,2.5+0.93) arc (90:-90:0.3 and 0.93);
\draw [cyan,top color=cyan] (2.5+0.5*0.5/2.5,2.5) .. controls (-1.3,2.5) and (-1.8,1.8) .. (-1.8,1.5) .. controls (-1.8,3-1.8) and (-1.3,3-2.5) .. (2.5+0.5*0.5/2.5,3-2.5) -- cycle;
\draw [cyan,left color=cyan!60!black] (2.5+0.5*0.5/2.5,1.5) ellipse (0.3 and 1);
\draw [blue,ultra thick] (1-0.5/2.5,1.5+0.93) arc (90:270:0.3 and 0.93);
\draw [blue,ultra thick,dashed,opacity=0.5] (1-0.5/2.5,1.5+0.93) arc (90:-90:0.3 and 0.93);
\draw [red,ultra thick] (-2.8,-0.5) -- (3-0.5*0.5/2.5,-0.5);
\draw [cyan,ultra thick] (-1.8,-1.5) -- (2.5+0.5/2.5,-1.5);
\draw [violet,->, line width=3pt] (0,0.4) -- (0,-1.2);
\filldraw [red] (-2.8,-0.5) circle (2pt);
\filldraw [cyan] (-1.8,-1.5) circle (2pt);
\node [violet] at (0.4,-0.4) {$\pmb {B}$};
\node [red] at (-3,-0.5) {$\pmb {t_1}$};
\node [cyan] at (-2,-1.5) {$\pmb {t_2}$};
\node [purple!40!black] at (0.5,3) {$\pmb {c^1_t}$};
\node [purple!40!black] at (0.5-0.5/2.5,2) {$\pmb {c^2_t}$};
\shade [top color= black, bottom color=gray,opacity=0.5] (3,0) -- (2.6,0) -- (2.1,-2.5) -- (2.5,-2.5) -- cycle;
\node at (2.1,-2.3) {$\pmb {T}$};
\end{tikzpicture}
\caption{Thimbles drawn above the plane of $t=-\gamma \log B$.}
\label{Fig5}
\end{center}
\end{figure}
On $\mathbb C$, each critical value $t_\alpha$ of $B$ determines a phase of $t_\alpha^\gamma$. Draw $n$ lines with such constant phases, passing through the $n$ critical values
(so $\gamma \log B$ has constant imaginary part) starting from the critical values of $B$ to $\infty$ or to zero, according Re$(\gamma)<0$, or Re$(\gamma>0)$. To each line it corresponds a family of $(m-1)$-dimensional cycles $c^\alpha_t$, collapsing to a 
point in $t=t_\alpha$ (a collapsing cycle), and going towards a region where $B^{-\gamma}$ assumes larger and larger values. Such cycles are determined by the coinvariant elements in a way similar to the construction we gave in section \ref{Perthimbles}.
Fix a large real value $T\gg {\rm Re}(-\gamma \log t_\alpha)$, for all $\alpha$; then $-\gamma \log t>T$ corresponds to the region $Z\subset Y$ where $B^{-\gamma}|_{Z}>T$. The unions 
\begin{align}
\tau^\alpha =\bigcup_{t | {\scriptsize
\begin{pmatrix}
{\rm Im}(-\gamma \log \frac t{t_\alpha})=0,\\ {\rm Re}(-\gamma \log \frac t{t_\alpha})> 0 
\end{pmatrix}
}} c^\alpha_t,
\end{align}
are called Lefschetz thimbles, and result to be a basis for the middle homology of $Y$ relative to $Z$. We remark that such constructions are quite familiar to physicists even if in a different form. Let us define the potential function
\begin{align}
 W=-\hbar \gamma \log B
\end{align}
so that the integrand becomes
\begin{align}
 e^\frac W\hbar.
\end{align}
One can evaluate the integral for very small $\hbar$ by using the stationary phase method. So, such integral is dominated by the stationary points, where 
\begin{align}
 dW=0=\frac {dB}{B}.
\end{align}
These are the critical points of the polynomial. The valuation of the integral is then dominated by the paths that make the integrand as less oscillating as possible. These are the union of paths such that the phase of the integrand is constant, and that goes to
infinity with Re$(W)>0$, so that the integrand goes to zero. This means ${\rm Re}(-\gamma \log \frac t{t_\alpha})> 0$, and we get the thimbles.

\

Here, we end up our mathematical excursus and go back to the practical constructions of Cho, Matsumoto and others.


\section{Computing Intersection Numbers: state of art and open problems}\label{secinternumb}
In this section we mainly address the practical computation of the coefficients of the Master Integrals, the so-called \textsl{intersection numbers}.\\
Intersection numbers between two $n$-forms $\bra{\phi_L}$ and $\ket{\phi_R}$, which have been introduced abstractly in Section \ref{veryhardmaths}, are defined by Cho and Matsumoto \cite{cho1995} as 
\begin{equation}
\bra{\phi_L}\ket{\phi_R}_\omega:=\frac{1}{(2\pi i)^n}\int_X \iota_\omega(\phi_L)\wedge \phi_R\,,\label{interdef}
\end{equation}
where $X$ is the whole complex space $\mathbb{C}^n$ deprived of the hyperplanes corresponding to the poles of $\omega$ and $\iota_\omega$ is a map which sends $\phi_L$ to an equivalent form with compact support. In our particular case, $\bra{\phi_L}$ and $\ket{\phi_R}$ are \textsl{twisted} cocycles, which implies that the intersection number in Eq.~\eqref{interdef} is not an integer in general \cite{aomoto2011theory}.\\
Notice that if we omitted $\iota_\omega$ in Eq.~\eqref{interdef}, then the intersection number would vanish, as both $\phi_L$ and $\phi_R$ are holomorphic in the domain of integration $X$. To understand why this is true, we focus on a 1D description (there is only a single variable $z$) in order to have a simple notation, but there is no difference in considering a multidimensional example. We consider $z$ to be complex, hence $X$ is $\mathbb{C}$ without some points corresponding to poles. A certain function $f(x,y)$ can always be split into its real and imaginary part: $f(x,y)=u+iv$. Introducing a change of variables
\begin{equation}
\begin{cases}
z=x+iy\\
\bar{z}=x-iy
\end{cases}\label{change}
\end{equation}
allows to obtain a certain $\tilde{f}(z,\bar{z})$. If $f$ is holomorphic, in order to satisfy the Cauchy-Riemann conditions, after the change of variables $\tilde{f}$ depends only on $z$: $\tilde{f}(z,\bar{z})=f(z)$.
In general a form $\phi$ can be decomposed as \begin{equation}
\phi=\hat{\phi}_1dz+\hat{\phi}_2d\bar{z}\,.\label{formgen}
\end{equation}
When we wedge two forms as in Eq.~\eqref{formgen}, only the mixed terms $dz\wedge d\bar{z}$ survive; but if the forms involved are holomorphic, the wedge product is null.
\

Strictly focusing on the twisted cohomology framework, we present in this section a summary on what has been obtained for the computation of intersection numbers so far; in Section \ref{newideas}, we give a review of still open problems plus some ideas on how to tackle them.

\subsection{Univariate case}\label{logforms}

In this section we show how the intersection number can be evaluated exactly in the case when integrals are defined over only one complex variable. We will refer to the simple derivation that was presented in \cite{matsumoto1998}.
Let us consider a $1$-form $\phi_L$ having poles at some points $z_i \in \mathbb{C}$. In order to compute Eq.~\eqref{interdef},  $\iota_\omega(\phi_L)$ must be constructed explicitly. The key point lies in defining circular regions around each $z_i$ point: to fix  ideas, we call $V_i$ and $U_i$ two discs centered in $z_i$ such that $V_i\subset U_i$; these discs are defined such that $U_i\cap U_j=\emptyset$ for $i\neq j$. We then introduce for each $i$:
\begin{itemize}
    \item[1:] a holomorphic function $\psi_i$ such that
    \begin{equation}
        \nabla_\omega\psi_i=\phi_L\,\,\,\text{on}\,\,\,U_i\setminus \left\lbrace z_i\right\rbrace;\label{psi_i}
    \end{equation}
    \item[2:] a function $h_i$ such that
    \begin{equation}
h_i=
\begin{cases}
1\,\,\,\text{on}\,\,\,V_i\,;\\
0\leq h_i \leq 1\,\,\,\text{smooth interpolation on}\,\,\,U_i\setminus V_i\,;\\
0\,\,\,\text{out of}\,\,\,U_i\,.\label{h_i}
\end{cases}
\end{equation}
\end{itemize}
Then $\iota_\omega(\phi_L)$ can be written as
\begin{equation}
\iota_\omega(\phi_L)=\phi_L-\sum_i \nabla_\omega(h_i\psi_i)=\phi_L-\sum_i(dh_i\psi_i+h_i\nabla_\omega\psi_i)\,.\label{compact}
\end{equation}
Notice that in Eq.~\eqref{compact}, as $h_i=0$ out of $U_i$, we are actually not modifying $\phi_L$ in that region. Inside $V_i$, $h_i=1$, meaning that the whole $\phi_L$ is subtracted and $\iota_\omega(\phi_L)=0$ in the innermost region around the singular points; on the other hand, in the outer ring $U_i\setminus V_i$, $\phi_L$ is subtracted smoothly. Finally, $-dh_i\psi_i$ is an extra term existing only in the $U_i\setminus V_i$ ring. Notice that $\phi_L$ and $\iota_\omega(\phi_L)$ lie in the same cohomology class, as they are identical up to a covariant derivative of a function.\\
What is left now is to find the explicit form of $\psi_i$ obeying Eq.~\eqref{psi_i}.
\begin{Lemma}\label{Lemmapsi} Unique existence of $\psi_i$.\\
$\exists !\,\, \psi_i$ such that $\psi_i$  is holomorphic on $U_i\setminus\left\lbrace z_i \right\rbrace$ and $\nabla_\omega\psi_i=\phi_L$ on $U_i\setminus \left\lbrace z_i\right\rbrace$. 
\end{Lemma}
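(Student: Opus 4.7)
The plan is to localize the equation $\nabla_\omega \psi_i = \phi_L$ on $U_i$ so that it becomes a scalar first-order linear ODE in one complex variable, then to establish uniqueness from the multivaluedness of the homogeneous solution and existence via a Laurent series ansatz. Writing $\phi_L = \hat\phi_L(z)\,dz$ on $U_i$ and $\omega = \omega(z)\,dz$ with $\omega(z) = (\log u(z))'$, the equation becomes
\[
\psi_i'(z) + \omega(z)\,\psi_i(z) = \hat\phi_L(z).
\]
Since $u = B^\gamma$ with $\gamma \in \mathbb{C}\setminus\mathbb{Z}$, the form $\omega$ has at most a simple pole at $z_i$ with residue $a = \gamma\,\mathrm{ord}_{z_i}(B) \in \mathbb{C}\setminus\mathbb{Z}$.

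For uniqueness, any two candidate solutions differ by a solution of the homogeneous equation $\psi_h' + \omega \psi_h = 0$, whose solutions on a cut of $U_i\setminus\{z_i\}$ are $C/u(z) = C\,B(z)^{-\gamma}$. Because $\gamma \notin \mathbb{Z}$, these functions pick up the nontrivial monodromy factor $e^{-2\pi i \gamma}\neq 1$ under analytic continuation around $z_i$, so the only single-valued (hence holomorphic) solution on $U_i\setminus\{z_i\}$ is $\psi_h \equiv 0$. This forces $\psi_i$ to be unique.

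For existence I would make a Laurent ansatz $\psi_i(z) = \sum_{k \geq K} b_k (z-z_i)^k$ with $K = -N+1$, where $-N$ is the order of the pole of $\hat\phi_L$ at $z_i$. Expanding $\omega = a/(z-z_i) + \sum_{j \geq 0} a_j (z-z_i)^j$ and $\hat\phi_L = \sum_{m \geq -N} c_m (z-z_i)^m$ and matching coefficients produces the two-term recursion
\[
(m+1+a)\,b_{m+1} = c_m - \sum_{k=K}^{m} a_{m-k}\,b_k,
\]
which determines $\psi_i$ uniquely as a formal Laurent series, since the leading factor $m+1+a$ never vanishes (as $a\notin\mathbb{Z}$).

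The main obstacle will be the convergence of this formal series on a punctured neighborhood of $z_i$. This is classical regular-singular-point theory: the convergence of the Laurent expansions of $\omega$ and $\hat\phi_L$ on $U_i$ yields majorant bounds on $a_j$ and $c_m$, which propagate through the recursion (with the nonvanishing denominators $m+1+a$ bounded below in modulus for large $m$) to give majorant bounds on $b_k$, hence a positive radius of convergence. A more hands-on alternative is to use the integrating factor to write
\[
\psi_i(z) = \frac{1}{u(z)}\int_{z_*}^{z} u(\zeta)\,\hat\phi_L(\zeta)\,d\zeta
\]
on a simply connected cut of $U_i\setminus\{z_i\}$, verify by direct differentiation that it solves the ODE, and then invoke the uniqueness argument to conclude that the multivalued pieces in $u$ and in the primitive must conspire (under $\gamma\notin\mathbb{Z}$) to yield a single-valued $\psi_i$ on the full punctured disc.
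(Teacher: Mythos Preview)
Your approach is essentially the same as the paper's: both reduce to a local Laurent expansion around $z_i$ and derive a recursion for the coefficients of $\psi_i$ from matching powers in $\psi_i' + \omega\psi_i = \hat\phi_L$. You in fact go further than the paper's sketch by supplying the monodromy argument for uniqueness (the paper does not spell this out) and by addressing convergence of the formal series, so your version is a strictly more complete realization of the same idea.
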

\begin{proof}
Consider a 1-form $\phi_L$ having a pole of order $N$ at $z=z_i$; $\omega$ is a 1-form sharing a pole of $\phi_L$, $\psi_i$ is required to be holomorphic: hence in terms of the local coordinate variable $z$ near $z_i$ (up to a change of coordinates, we can consider $z_i=0$ without loss of generality) it is possible to write
\begin{equation}
\phi_L=\sum_{m=-N}^\infty b_mz^m dz\,,\,\,\,\,\,\,\,\,\,\,\,\,\,\,\,\,\,\,\,\,\,\omega=\sum_{q=-1}^\infty a_qz^qdz\,,\,\,\,\,\,\,\,\,\,\,\,\,\,\,\,\,\,\,\,\,\,\psi_i     =\sum_{m=-N+1}^\infty c_mz^m.\label{localmente}
\end{equation}
Notice how in \eqref{localmente} $b_{-1}=\text{Res}_{z=z_i}(\phi_L)$ and $a_{-1}=\text{Res}_{z=z_i}(\omega)=\alpha_i$. Then $\nabla_\omega\psi_i$ becomes
\begin{align}
(d+\omega\wedge)\sum_{m=-N+1}^\infty c_mz^m=d\left(\sum_{m=-N+1}^\infty c_mz^m\right)+\sum_{m=-N+1}^\infty\sum_{q=-1}^\infty a_qc_mz^{m+q}dz=\cr=\sum_{n=-N+1}^\infty\left((n+1)c_{n+1}+\sum_{q=-1}^{n}a_qc_{n-q}\right)z^{n}dz\,,
\end{align}
in which we defined $n=m+q$: hence the sum $\sum_{m=-N+1}^\infty\sum_{q=-1}^\infty$ must satisfy the condition $q=n-m$, whose bigger value is obtained when $m$ is as small as possible (which is $m=-N+1$). This implies that the sum over $q$ goes up to $n+N-1$. We are thus lead to the identification
\begin{equation}
(n+1)c_{n+1}+\sum_{q=-1}^{n+N-1}a_q c_{n-q} =b_{n} \xRightarrow{n=-1} \sum_{q=-1}^{N-2} a_{q}c_{-q-1} = b_{-1} .\nonumber\qedhere
\end{equation}
\end{proof}

The intersection number \eqref{interdef} can be rewritten as
\begin{equation}
\bra{\phi_L}\ket{\phi_R}_\omega=\frac{1}{2\pi i}\int_X\left[\phi_L-\sum_{i}(dh_i)\psi_i-\sum_ih_i\nabla_\omega\psi_i\right]\wedge\phi_R=-\frac{1}{2\pi i}\sum_i\int_{U_i\setminus V_i} dh_i\psi_i\wedge \phi_R,\label{calcoli}
\end{equation}
where the second equality can be obtained by recalling that the first term vanishes because $\phi_L\wedge\phi_R=0$; the second term survives where $dh_i\neq0$, i.e. in $U_i\setminus V_i$; the third term is again proportional to $\phi_L\wedge\phi_R$ and vanishes. Notice that it holds:
\begin{equation}
dh_i\psi_i\wedge \phi_R=d(h_i\psi_i\phi_R).
\end{equation}
This is because the extra terms $h_id\psi_i\wedge\phi_R$ and $h_i\psi_id\phi_R$ vanish, as $d\psi_i$ and $\phi_R$ are both holomorphic and as $\phi_R$ is a closed form. By the Stokes theorem, it is possible to rewrite Eq.~\eqref{calcoli} as
\begin{align}
\bra{\phi_L}\ket{\phi_R}_\omega=-\frac{1}{2\pi i}\sum_i\int_{U_i\setminus V_i}d(h_i\psi_i\phi_R)=-\frac{1}{2\pi i}\sum_i\int_{\partial (U_i\setminus V_i)}h_i\psi_i\phi_R=\frac{1}{2\pi i}\sum_i\int_{\partial V_i}\psi_i\phi_R\,.\label{eqclosedpath}
\end{align}
Because $\partial V_i$ is a closed path, we can always rewrite Eq.~\eqref{eqclosedpath} as a sum of residues:
\begin{equation}\label{int_num_univar}
 \bra{\phi_L}\ket{\phi_R}_\omega=\sum_{i=1}^k\text{Res}_{z=z_i}(\psi_i \phi_R)\,.
\end{equation}

\subsection{Logarithmic $n$-forms}

Evaluation of intersection numbers is particularly simple for logarithmic forms. In the univariate case, one can check from Lemma \ref{Lemmapsi} that if $\phi_L$ has a simple pole, than the function $\psi_i$ takes the form
\begin{equation}
\psi_i=\frac{\text{Res}_{z=z_i}(\phi_L)}{\text{Res}_{z=z_i}(\omega)}+\mathcal{O}(z-z_i)=\frac{\text{Res}_{z=z_i}(\phi_L)}{\alpha_i}+\mathcal{O}(z-z_i)\,.
\end{equation}
In this case, Eq. \eqref{int_num_univar} leads to the following formula for the evaluation of the intersection number:
\begin{equation}
\bra{\phi_L}\ket{\phi_R}_\omega=\sum_i\frac{\text{Res}_{z=z_i}\phi_L\text{Res}_{z=z_i}\phi_R}{\alpha_i}.\label{intersimple}
\end{equation}

Eq.~\eqref{intersimple} is valid for logarithmic $1$-forms, but it can be generalized to logarithmic $n$-forms. We give only the final result for this case. Let us consider the twisted one-form $\omega$ in the following form: \begin{equation}
\omega=\sum_{i=0}^{s+1}\alpha_i d\log f_i\,,
\end{equation}
where the $f_i$ are linear functions in the given variables (hence each $f_i=0$ defines a certain hyperplane where the twist $\omega$ shows a logarithmic singularity), while $\sum_i\alpha_i=0$. The cocycles $\phi_{L,R}$ assume instead the form
\begin{equation}
\phi_{I}=d\log\left(\frac{f_{i_0}}{f_{i_1}}\right)\wedge\cdots\wedge d\log\left(\frac{f_{i_{n-1}}}{f_{i_n}}\right)\label{philog}
\end{equation}
with $I=(i_o,\cdots,i_n)$; the indexes are ordered such that $0\leq i_0<\cdots<i_n\leq s+1$. In other words, $\phi_{I}$ belongs to the $n$-th twisted cohomology group  $H^n(X,\nabla_\omega)$, where $X=\mathbb{CP}^n\setminus\left\lbrace\cup_i{f_i=0}\right\rbrace$.\\
With these hypotheses, a recent work \cite{Mizera:2017rqa} has shown that the intersection number between two logarithmic $n$-forms can also be written as
\begin{equation}
\bra{\phi_L}\ket{\phi_R}_\omega=\frac{1}{(-2\pi i)^n}\oint_{\wedge_{a=1}^n\left\lbrace\right|\omega_a|=\epsilon\rbrace}\frac{\phi_L\hat{\phi}_R}{\prod_{a=1}^n\omega_a}, \label{usoquestaformula}
\end{equation}
where $\hat{\phi}$ is the component of $\phi=\hat{\phi}\,dz_1\wedge \cdots \wedge dz_n$ , $a$ enumerates the dimensions involved (the number of variables) and $\omega_a$ represent the components of the twist $\omega$ along each $dz_a$.

\subsection{Multivariate case: recursive method}\label{recursive}
The main aim now is to be able to compute any intersection number, independently of the order of the poles appearing and the number of dimensions involved. A first attempt follows a recursive approach in the number of variables (we refer to \cite{Weinzierl:2020xyy,Frellesvig:2020qot}). In order to compute an intersection number in $z_1,\cdots,z_n$ variables, the problem can be divided into two steps: calculating the intersection number in $z_1,\cdots,z_{n-1}$ variables and then a generalized intersection number depending only on $z_n$, which will be introduced later. In the 1D case, $\phi_{L,R}$ and $\omega$ are 1-forms, which we can regard as some sort of \textsl{scalars}. Taking into account the existence of $\nu$ independent equivalence classes, it is possible to build \textsl{vectorial} ($\phi_{L,R,j})$ and \textsl{tensorial} ($\Omega_{ij}$) objects, with ``gauge'' transformations written as
\begin{equation}
\hat{\phi}_{L,j}^\prime=\hat{\phi}_{L,j}+\partial_z\xi_j+\xi_i\Omega_{ij}\,,\,\,\,\,\,\,\,\,\,\,\,\,\,\,\,\,\,\,\,\,\hat{\phi}_{R,j}^\prime=\hat{\phi}_{R,j}+\partial_z\xi_j-\Omega_{ji}\xi_i\,.
\end{equation}
The next key idea is to study, for each $i$, the twisted cohomology group associated to the fibration
\begin{equation}
\nabla_\omega=d+\sum_{j=1}^n\omega_jdz_j=\sum_{j=1}^idz_j\left(\frac{\partial}{\partial z_j}+\omega_j\right)+\sum_{j=i+1}^ndz_j\left(\frac{\partial}{\partial z_j}+\omega_j\right)\label{fibration} \,,
\end{equation}
and define
\begin{equation}
    \omega^{(i)}:=\sum_{j=1}^i\omega_jdz_j \,.
\end{equation}

In Eq.~\eqref{fibration} $z_{i+1},\cdots,z_n$ are treated as fixed parameters. Starting from $i=n-1$, the goal is to express a cohomology class $\bra{\phi_L^{(n)}}\in H_\omega^n$ using a basis of $H_\omega^{n-1}$, which is supposed to be known: such basis is formed by $\nu_{n-1}$ elements $\bra{e_j^{(n-1)}}$, with dual basis $\ket{d_j^{(n-1)}}$ such that $\bra{e_j^{(n-1)}}\ket{d_k^{(n-1)}}=\delta_{jk}$. Inserting an identity of the form $I=\sum_{j=1}^{\nu_{n-1}}\ket{d_j^{(n-1)}}\bra{e_j^{(n-1)}}$ leads to:
\begin{equation}
\bra{\phi_L^{(n)}}=\sum_{j=1}^{\nu_{n-1}}\bra{\phi_{L,j}^{(n)}}\wedge\bra{e_j^{(n-1)}},\qquad \ket{\phi_R^{(n)}}=\sum_{j=1}^{\nu_{n-1}}\ket{d_j^{(n-1)}}\wedge\ket{\phi_{R,j}^{(n)}},\label{nton-1}
\end{equation}
where
\begin{equation}
\bra{\phi_{L,j}^{(n)}}=\bra{\phi_L^{(n)}}\ket{d_j^{(n-1)}},\qquad\ket{\phi_{R,j}^{(n)}}=\bra{e_j^{(n-1)}}\ket{\phi_R^{(n)}}.\label{classcoeff}
\end{equation}
Notice how the coefficients defined in Eq.~\eqref{classcoeff} are not unique objects but classes themselves. For example, the coefficient $\bra{\phi_{L,j}^{(n)}}$ is invariant with respect to the choice within the cohomology class $\ket{d_j^{(n-1)}}$ of $H_\omega^{*\,\,n-1}=H_{-\omega}^{n-1}$: a change $d_j^{n-1}\to d_j^{n-1}+\nabla_{-\omega^{(n-1)}}\psi$ with $\psi$ being a 2-form would make no difference. This equivalence class is smaller than $\bra{\phi_L^{(n)}}$, which is composed of all the transformations $\phi_L^{(n)}\to \phi_L^{(n)}+\nabla_\omega\xi$ with $\xi$ being a generic$(n-1)$-form: thus, the coefficients in Eq.~\eqref{classcoeff} represent new equivalence classes overall.\\
We introduce a new matrix $\Omega^{(n)}$ with dimension $\nu_{n-1}\times\nu_{n-1}$
\begin{equation}
\Omega_{ij}^{(n)}:=\bra{(\partial_{z_n}+\omega_n)e_i^{(n-1)}}\ket{d_j^{(n-1)}},\label{matriciozzo}
\end{equation}
which implies
\begin{equation}
    \sum_i\Omega_{ij}^{(n)}\bra{e_j^{(n-1)}}=\bra{(\partial_{z_n}+\omega_n)e_i^{(n-1)}}\,.
\end{equation}
The matrix in Eq.~\eqref{matriciozzo} allows to write the gauge transformations of the coefficients in the following form (the derivation can be found in \cite{Weinzierl:2020xyy}):
\begin{equation}
\hat{\phi}_{L,j}^{(n)}\to\hat{\phi}_{L,j}^{(n)}+g_i\left(\cev{\partial_{z_n}}\delta_{ij}+\Omega_{ij}^{(n)}\right)\,,\,\,\,\,\,\,\,\,\,\,\,\,\,\,\,\,\,\,\,\,\hat{\phi}_{R,j}^{(n)}\to\hat{\phi}_{R,j}^{(n)}+\left(\delta_{ij}\veclungo{\partial_{z_n}}-\Omega_{ji}^{(n)}\right)g_i,\label{gaugecoeff}
\end{equation}
where the $g_i$ are arbitrary functions. With a shorter notation Eq.~\eqref{gaugecoeff} can be rewritten as 
\begin{equation}
\hat{\phi}_j\to \hat{\phi}_j+(\delta_{jk}\partial_{z_n}+\Omega_{jk})g_k\, ,\label{gaugecoeffshort}
\end{equation}
where the $L$ and $R$ cases are restored taking $\Omega\equiv(\Omega^{(n)})^{\intercal}$ and $\Omega\equiv-\Omega^{(n)}$ respectively. In a number of practical cases, the matrix $\Omega^{(n)}$ can often be chosen to have only simple poles by the application of the \textsl{Moser algorithm}, whose details we do not present in this work but can be found in \cite{Moser1959/60,Lee:2014ioa}. 

\

This property, along with Eq.~\eqref{gaugecoeffshort}, allows to find coefficients $\hat{\phi}_j$ which are more convenient than the original ones, namely which show only simple poles in $z_n$. We start looking for the transformation which lowers the order of a pole appearing in $\hat{\phi}_j$. We outline the procedure when a pole is located at a finite point.
To fix ideas, we call $q$ the irreducible polynomial appearing at the denominator of $\hat{\phi}_j$, raised to a certain power $o$. The correct transformation is given by the choice
\begin{equation}
g_j(z_n)=\frac{1}{q^{o-1}}\sum_{k=0}^{\text{deg}(q)-1}c_{jk}z_n^k\label{f_j}\, ,
\end{equation}
where the coefficients $c_{jk}$ have yet to be determined. With the introduction of Eq.~\eqref{f_j}, Eq.~\eqref{gaugecoeffshort} assumes the form
\begin{align}
\hat{\phi}_j&\to\hat{\phi}_j+\delta_{jk}\left[\partial_{z_n}\left(\frac{1}{q^{o-1}}\right)\sum_{s=0}^{\text{deg}(q)-1}c_{ks}z_n^s+\frac{1}{q^{o-1}}\sum_{s=0}^{\text{deg}(q)-1}c_{ks}\partial_{z_n}z_n^s\right]+{\frac{1}{q^{o-1}}\sum_{s=0}^{\text{deg}(q)-1}\Omega_{jk}c_{ks}z_n^s}\nonumber\\
&=\underbrace{\hat{\phi}_j-(o-1)\frac{\partial_{z_n}q}{q^o}\sum_{s=0}^{\text{deg}(q)-1}c_{js}z_n^s}_{1}+\underbrace{\frac{1}{q^{o-1}}\sum_{s=0}^{\text{deg}(q)-1}sc_{js}z_n^{s-1}}_{2}+\underbrace{\frac{1}{q^{o-1}}\sum_{s=0}^{\text{deg}(q)-1}\Omega_{jk}c_{ks}z_n^s}_3\,.\label{123}
\end{align}
Notice how the term labeled $1$ in Eq.~\eqref{123} is the one involving the power $q^0$ in the denominator: these terms (which are of the type $z_n^k/q^o$ with $k\in\left\lbrace0,\cdots,\text{deg}(q)-1\right\rbrace$) can be eliminated by solving $\text{deg}(q)$ equations. The solution of this system determines $c_{jk}$ for a fixed $j$. Considering all the $j$ values, all the appropriate $c_{jk}$ coefficients are determined by solving $\nu_{n-1}\cdot\text{deg}(q)$ equations. A similar procedure exists when the pole is not located at a finite point but at infinity.

\

Notice that, thanks to Eq.~\eqref{nton-1}, it is possible to write the intersection number as 
\begin{equation}
\bra{\phi_L}\ket{\phi_R}_\omega=\sum_{z_0\in S_n}\sum_{j=1}^{\nu_{n-1}}\text{Res}_{z_n=z_0}\left(\hat{\psi}_{L,j}^{(n)}\bra{e_j^{(n-1)}}\ket{\phi_R}\right)\, ,\label{almostthere}
\end{equation}
where $S_n$ is the set comprehensive of all the singular points of $\Omega^{(n)}$ in $z_n$, while $\hat{\psi}_{L,j}^{(n)}$ is a function which locally solves
\begin{equation}
\partial_{z_n} \hat{\psi}_{L,j}^{(n)}+\hat{\psi}_{L,i}^{(n)}\Omega_{ij}^{(n)}=\hat{\phi}_{L,j}^{(n)}.\label{altrapsi_i}
\end{equation}
Notice how this situation closely resembles the one outlined by Eq.~\eqref{intersimple} and Eq.~\eqref{psi_i}. Similarly to the result given by Lemma \ref{Lemmapsi}, the solution of Eq.~\eqref{altrapsi_i} is given by
\begin{equation}
\hat{\psi}_{L,j}^{(n)}=\sum_i\text{Res}_{z_n=z_0}\hat{\phi}_{L,i}^{(n)}\left(\text{Res}_{z_n=z_0}\Omega^{(n)}\right)_{ij}^{-1}+\mathcal{O}(z_n-z_0).\label{sol}
\end{equation}
Notice that only the first term in Eq.~\eqref{sol} contributes to the residue in Eq.~\eqref{almostthere}; hence, recalling that $\hat{\phi}_{L,i}^{(n)}$ and $\Omega^{(n)}$ have at most simple poles in $z_n$, it is possible to substitute
\begin{equation}
\hat{\psi}_{L,j}^{(n)}\to \sum_i \hat{\phi}_{L,i}^{(n)}\left(\Omega^{(n)}\right)_{ij}^{-1}.
\end{equation}
Remembering that $\text{adj}\,\,\Omega^{(n)}=\text{det} (\Omega^{(n)})\cdot(\Omega^{(n)})^{-1}=(\Omega^{(n)})^{-1}P/Q$ (we write $\det\Omega^{(n)}=P/Q$) we can finally obtain
\begin{align}
\bra{\phi_L}\ket{\phi_R}_\omega&=\sum_{z_0\in S_n}\sum_{i,j=1}^{\nu_{n-1}}\text{Res}_{z_n=z_0}\left(\frac{Q}{P}\hat{\phi}_{L,i}^{(n)}\left(\text{adj}\Omega^{(n)}\right)_{ij}\hat{\phi}_{R,j}^{(n)}\right)\nonumber\\
&=\frac{1}{2\pi i} \int_{\gamma}dz_n\sum_{i,j=1}^{\nu_{n-1}}\frac{Q\hat{\phi}_{L,i}^{(n)}\left(\text{adj}\Omega^{(n)}\right)_{ij}\hat{\phi}_{R,j}^{(n)}}{P}\,,\label{prewow}
\end{align}
where by $\gamma$ we mean the contour of small counterclockwise circles around the $z_0\in S_n$ points.\\
Let's take a deeper look at Eq.~\eqref{prewow}.
In the univariate case, the twist is a certain $\omega=\omega_1dz_1$ with $\omega_1=P/Q$ , where $P$ and $Q$ are certain polynomials in the only variable $z_1$. Calling $C_1=\left\lbrace z_1\in\mathbb{C}: P(z_1)=0\right\rbrace$ and $I_1=<P>$ the ideal generated from $P$, then $C$ is the algebraic variety generated from $I_1$: $C=V(I_1)$. If $\phi_{L,R}$ have only simple poles, then we can express the intersection number between such forms as a global residue:
\begin{equation}
\bra{\phi_L}\ket{\phi_R}_\omega=-\text{Res}_{<P>}(Q\hat{\phi}_L\hat{\phi}_R)\,.\label{weinzierl1D}
\end{equation}
The notation in Eq.~\eqref{weinzierl1D} is as follows \cite{alggeom}: given a certain form
\begin{equation}
\tilde{\omega}=\frac{h(z)dz_1\wedge\cdots\wedge dz_n}{f_1(z)\cdots f_n(z)},\label{tildomega}
\end{equation}
in which the variables are $z=(z_1,\cdots,z_n)$ and $f\equiv f_1(z),\cdots,f_n(z)$ are $n$ holomorphic functions in the neighbourhood of the closure $\overline{U}$ of the ball $U=\left\lbrace ||z-\xi||<\epsilon\right\rbrace$ which have a unique isolated zero given by the point $\xi$ ($f^{-1}(0)=\xi)$, then the residue of $\tilde{\omega}$ at the point $\xi$ is defined as
\begin{equation}
\text{Res}_{\left\lbrace f_1,\cdots,f_n\right\rbrace,\xi}(\tilde{\omega})=\left(\frac{1}{2\pi i}\right)^n\oint_\Gamma \frac{h(z)dz_1\wedge\cdots\wedge dz_n}{f_1(z)\cdots f_n(z)}.\label{resdef}
\end{equation}
In Eq.~\eqref{resdef} the contour $\Gamma$ is given by $\Gamma=\left\lbrace z: |f_i(z)=\epsilon_i|\right\rbrace$ with orientation given by $d(\text{arg}f_1)\wedge\cdots\wedge d(\text{arg}f_n)\geq 0$. The subscript $<P>$ in Eq.~\eqref{weinzierl1D} refers to the fact that the residue in Eq.~\eqref{resdef} does not depend strictly on the $f$ functions in the denominator, but more in general to the ideal they generate $<f_1,\cdots,f_n>$, as we will see with more detail later in section \ref{newideas}. Notice that Eq.~\eqref{weinzierl1D} is equivalent to Eq.~\eqref{intersimple}. With this new formalism, Eq.~\eqref{prewow} can be similarly rewritten as
\begin{equation}
\bra{\phi_L}\ket{\phi_R}_\omega=-\text{Res}_{<P>}(Q\hat{\phi}_{L,i}(\text{adj}\,\,\Omega^{(n)})_{ij}\hat{\phi}_{R,j})\,,\label{wow}
\end{equation}
in which the contour given by the many counterclockwise circles of Eq.~\eqref{prewow} is now deformed into a single clockwise contour (hence the minus sign) partially encircling a singular point, going towards infinity, coming back from infinity, partially encircling the next singular point, etc. until a closed path is formed. Eq.~\eqref{wow} generalizes Eq.~\eqref{weinzierl1D} in many dimensions. Notice how in Eq.~\eqref{matriciozzo} a generalized twist $\Omega^{(n)}$ was built, with properties very similar to the ones of the usual twist $\omega$. Writing its determinant  $\det\Omega^{(n)}=P/Q$, one can build the set $C_n=\left\lbrace z_n\in \mathbb{C}:P(z_n)=0 \right\rbrace$ (set of the points where $\Omega^{(n)}$ doesn't have full rank) and $I=<P>$, the ideal generated by $P$: much similarly to what happened when writing the univariate Eq.~\eqref{weinzierl1D}.

\

In conclusion, the procedure described in this section outlines an algorithm which, given as an input two cohomology classes $\phi_{L,R}$, computes their intersection number recursively: starting from $i=n-1$, it expands $\phi_{L,R}$ on the basis of $H_{\omega,-\omega}^{(n-1)}$ and  computes the matrix $\Omega^{(n)}$. By using transformations of the form \eqref{gaugecoeffshort}, it transforms the coefficient vectors $\phi_{L,R,j}$ in equivalent coefficients but with only simple poles in $z_n$ and therefore can finally compute the residue \eqref{wow}. The procedure is then repeated for all $i$, until one has only an intersection number depending on the last variable.
\subsection{Moving onwards: an open problem}\label{newideas}
Recursive approaches to the computation of intersection numbers are known, but the time needed for the computation can be extremely long because of the many steps composing the algorithm. A new question arises: is there any way to compute \eqref{interdef} without having to introduce some type of recursivity (whether in the number of variables, or in the reduction of the order of the poles appearing in the twisted cocycles)? The fact that a recursive solution of the problem exists suggests there might be some underlying property of intersection numbers which could lead to a more fundamental method of computation. Being able to understand this could make the computation faster, but it would also imply getting a deeper understanding of the problem: in the end, it would be possible to understand what Feynman integrals represent - both in an algebraic and geometric perspective, not only by a straightforward computation problem point of view.

\

As a first note, we observe that in every case computation can be carried out, intersection numbers are always written in terms of a residue of the form Eq.~\eqref{resdef}: we find it instructive to focus on Eq.~\eqref{resdef} and highlight some of its properties. 
\begin{itemize}
\item[1:] \textbf{Non degenerate case}\\
If $f=f_1,\cdots, f_n$ is non degenerate (i.e. its Jacobian evaluated in $0$ is $J_f(0)\neq 0$), then Eq.~\eqref{resdef} can be evaluated by the introduction of a change of variables $w:=f(z)$. Using the usual Cauchy formula leads to
\begin{align}
\left(\frac{1}{2\pi i}\right)^n\int_{\Gamma}\tilde{\omega} &=\left(\frac{1}{2\pi i}\right)^n\int_{|w_i|=\epsilon_i}h(f^{-1}(w)) \frac{dw_1\cdots dw_n}{J_f(w)}\frac{1}{w_1}\cdots\frac{1}{w_n}=\frac{h(f^{-1}(0))}{J_f(0)}\,.
\end{align}
\

\item[2:] \textbf{$h\in \text{I}(f)$ ideal generated from the $f_i$}\\
In this case the residue is 0. If, for example, $h(z)=g(z)f_1(z)$, then $\tilde{\omega}$ is holomorphic in a bigger set, which we call $U_1:=U\setminus (D_2+\cdots D_n)$.\\
Then the contour $\Gamma_1=\left\lbrace z\,\,\text{such that}\,\,|f_1(z_1)|\leq\epsilon_1,|f_i(z_i)|=\epsilon_i \,\,\text{for}\,\,i\neq 1\right\rbrace$, which is an element of $U_1$, has a boundary $\delta \Gamma_1=\pm \Gamma$: by the Stokes theorem, the residue of $\tilde{\omega}$ along $\Gamma$ is then 0.
\end{itemize}
\
While the general degenerate case is hard to compute,in \cite{Sogaard:2013fpa} is suggested that if each of the $z_i$ appearing in $\tilde{\omega}$ depends only on a single variable $z_i$, it is possible to factorize Eq.~\eqref{resdef} as
\begin{equation}
\text{Res}_{(0)}(\tilde{\omega})=\left(\frac{1}{2\pi i}\right)^n\oint\frac{dz_1}{f_1(z_1)}\cdots\oint\frac{dz_n}{f_n(z_n)}h(z).\label{factorize}
\end{equation}
How can this be achieved? Algebraic geometry \cite{alggeom} shows that actually Eq.~\eqref{resdef} can be reinterpreted in terms of sheaf cohomology, which allows to obtain an important theorem:
\begin{Theorem}\label{teotrasf}\textbf{Transformation Theorem}\\
Let $f=\left\lbrace f_1,\cdots, f_n \right\rbrace$ and $g=\left\lbrace g_1,\cdots, g_n \right\rbrace$ be holomorphic maps $f,g:\bar{U}\to \mathbb{C}^n$ such that $f^{-1}(0)=g^{-1}(0)=0$ and $g_i(z)=\sum_j A_{ij}(z)f_j(z)$, with $A_{ij}$ being a holomorphic matrix. Then
\begin{equation}
\textup{Res}_{(0)}\left(\frac{h(z)dz_1\wedge\cdots\wedge dz_n}{f_1(z)\cdots f_n(z)}\right)=\textup{Res}_{(0)}\left(\frac{h(z)dz_1\wedge\cdots\wedge dz_n}{g_1(z)\cdots g_n(z)}\det A\right)\,.
\end{equation}
\end{Theorem}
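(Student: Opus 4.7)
The plan is to reduce the general case to a one-coordinate (``elementary'') substitution by a telescoping argument, and then to handle the elementary step by combining an algebraic manipulation with the ideal-invariance of the residue highlighted as property 2 in the excerpt.

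\textbf{Step 1 — reduction to an elementary substitution.} I would introduce an interpolating sequence
\[
g^{(0)}=(f_1,\dots,f_n),\qquad g^{(k)}=(g_1,\dots,g_k,f_{k+1},\dots,f_n),\qquad g^{(n)}=(g_1,\dots,g_n).
\]
At each step $g^{(k-1)}\mapsto g^{(k)}$, only the $k$-th entry is altered. Expanding $g_k=\sum_j A_{kj}f_j$ and using the equalities $f_j=\sum_i (A^{-1})_{ji}g_i$ for $j<k$, one can rewrite $g_k$ in the basis $(g_1,\dots,g_{k-1},f_k,\dots,f_n)$ in the form $g_k=\alpha_k f_k+s_k$ with $s_k\in\langle g_1,\dots,g_{k-1},f_{k+1},\dots,f_n\rangle$ and with the scalars $\alpha_k$ satisfying $\prod_k\alpha_k=\det A$. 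Hence, if the elementary statement
\[
\mathrm{Res}_{(0)}\!\left(\frac{h\,dz}{p_1\cdots p_{k-1}\cdot(\alpha f_k+s)\cdot p_{k+1}\cdots p_n}\right)
=\mathrm{Res}_{(0)}\!\left(\frac{\alpha\,h\,dz}{p_1\cdots p_{k-1}\cdot f_k\cdot p_{k+1}\cdots p_n}\right)
\]
is known whenever $s$ lies in the ideal generated by the remaining $p_j$'s, then iterating the elementary result $n$ times accumulates exactly the factor $\prod_k\alpha_k=\det A$ and gives the theorem.

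\textbf{Step 2 — elementary step.} For the elementary case I would use the telescoping identity
\[
\frac{1}{\alpha f_k+s}=\frac{1}{\alpha f_k}-\frac{s}{\alpha f_k\,(\alpha f_k+s)},
\]
iterated $M$ times to produce
\[
\frac{1}{\alpha f_k+s}=\sum_{m=0}^{M}(-1)^m\frac{s^m}{(\alpha f_k)^{m+1}}+(-1)^{M+1}\frac{s^{M+1}}{(\alpha f_k)^{M+1}(\alpha f_k+s)}.
\]
Plugging this into the residue, every term with $m\ge 1$ has numerator containing $s^m\in\langle p_j:j\neq k\rangle^m$, so the numerator of the integrand lies inside the ideal generated by the remaining denominator factors; by the second bullet of the excerpt (vanishing of residues when the numerator lies in the ideal of the denominator), each such term contributes zero. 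Only the $m=0$ term survives, giving the desired identity, provided the remainder also vanishes for some $M$.

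\textbf{Main obstacle.} The subtle point is making the truncation of the geometric series rigorous: one must argue that for $M$ sufficiently large, $s^{M+1}$ lies deep enough in the ideal of the other denominators that the remainder residue vanishes. The clean way to secure this is to invoke the Noetherian finiteness of the local algebra $\mathcal O_0/\langle p_1,\dots,p_n\rangle$, which implies that powers of the maximal ideal kill the residue pairing once the exponent exceeds the socle degree. An alternative I would consider, bypassing any series manipulation, is a homotopy argument: define $A_t=(1-t)I+tA$, note that $\det A_t\neq 0$ on a neighborhood of $[0,1]$ (perturbing $A$ infinitesimally if needed), and apply Stokes' theorem on the cylinder $\bigcup_t \Gamma_{A_t f}$ in $U\setminus\bigcup_i D_i$ to conclude that the residue with the $\det A_t$ correction is independent of $t$, matching the $t=0$ endpoint (trivial) with the $t=1$ endpoint (the claim).
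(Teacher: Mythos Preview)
The paper does not give its own proof of this theorem; it is quoted from Griffiths--Harris, with the remark that it follows from the sheaf-cohomological interpretation of the local residue. So there is no in-paper argument to compare against, and I comment on your sketch directly.

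There are two substantive gaps. First, the telescoping in Step~1 needs every intermediate system $g^{(k)}=(g_1,\dots,g_k,f_{k+1},\dots,f_n)$ to have $0$ as an \emph{isolated} common zero, otherwise the $k$-th residue is undefined; this forces all leading principal minors of $A$ to be units in $\mathcal O_0$, which is not assumed (e.g.\ $n=1$, $f=z$, $g=z^2$, $A=z$). A genericity or row-permutation patch is required but not supplied. (Minor bookkeeping: the $m=0$ term of your series yields $h/\alpha$ on the $f_k$-side, not $\alpha h$; after iteration this still produces $\det A$ on the $g$-side, so the slip is harmless.) Second, and more seriously, the geometric-series manipulation in Step~2 is an identity of \emph{integrands}, whereas the two residues are integrals over \emph{different} real $n$-tori: the left-hand side lives on $\Gamma'=\{|p_j|=\epsilon_j,\ |\alpha f_k+s|=\epsilon_k\}$, the target on $\Gamma=\{|p_j|=\epsilon_j,\ |f_k|=\epsilon_k'\}$. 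Property~2 as stated concerns a first-power product $f_1\cdots f_n$ on its own torus; your $m\ge1$ terms carry $(\alpha f_k)^{m+1}$ in the denominator and sit on $\Gamma'$, so you must redo the Stokes argument there, checking that $f_k\neq0$ on the filled-in solid torus (which constrains the $\epsilon_j$). Even granting that, the surviving $m=0$ term is $\oint_{\Gamma'}\tfrac{h/\alpha}{f_k\prod_{j\neq k}p_j}$, and identifying it with the residue for $(\dots,f_k,\dots)$ still requires $[\Gamma']=[\Gamma]$ in $H_n$ of the complement of $\{f_k=0\}\cup\bigcup_{j\neq k}\{p_j=0\}$ --- precisely the analytic content of the elementary case, which the algebra does not furnish. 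The same cycle issue afflicts the remainder term, so invoking the socle degree of $\mathcal O_0/\langle p_1,\dots,p_n\rangle$ alone does not dispose of it. Your homotopy alternative $A_t=(1-t)I+tA$ is much closer to how the result is actually established and, once you control the zero-sets of the intermediate $A_t f$, gives a cleaner route.
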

Given $n$ functions $g_i(z)=g_i(z_i)$ obeying the hypotheses of Theorem \ref{teotrasf}, then the residue in Eq.~\eqref{resdef} would take the form in Eq.~\eqref{factorize}. Calling $R=\mathbb{K}(z_1,\cdots,z_n)$ the ring over a field $\mathbb{K}$ with $n$ variables $z_1,\cdots,z_n$, we notice that the ideal $I=<f_1,\cdots,f_n>\subset R$ generated by the $f_1, \cdots, f_n$, defined as $I=\left\lbrace\sum_i h_i f_i\,\,\vert \,\,h_i\in R\right\rbrace$ implies that the ideal $J$ generated by the $g_i$ polynomials is a subset of $I$: 
\begin{equation}
J=\left\lbrace\sum_i \tilde{h}_i g_i=\sum_{ij}\tilde{h}_iA_{ij}(z)f_j=\sum_i\left(\sum_j\tilde{h}_j A_{ji}(z)\right)f_i\,\,\vert\,\, \tilde{h}_i\in R\right\rbrace\subset I.
\end{equation}
This observation implies that theorem \ref{teotrasf} can be reinterpreted in the following way: the residue does not depend on the specific generators of the ideal $I$: it is possible to use the generators of an ideal which is a subset of the original one. As suggested in \cite{Sogaard:2013fpa}, there is a way to generate such an ideal: the exploitation of Gr\"obner bases, which can be regarded as a set of polynomials with special properties. In Appendix \ref{appendixGrobner} we give a brief introduction on the topic (for more details, see for example \cite{grobnerbook}). \\
Going back to the problem of the computation of the residue and following the idea from \cite{Sogaard:2013fpa}, we recall that the aim is to find some new $g_i(z)=A_{ij}(z)f_j(z)$ which are more convenient than the original $f_i$ polynomials. 
In order to find $g_i(z)=g_i(z_i)$ for a fixed $i$, we define a Lexicographic order in which the $i$-th variable $z_i$ is the ``least important variable'': 
\begin{equation}
z_{i+1}\succ z_{i+2}\succ\cdots\succ z_n\succ z_1\succ z_2\succ\cdots\succ z_i\,.
\end{equation}
Starting from the original $f_1,\cdots,f_n$ polynomials and constructing a Gr\"obner basis while keeping this order in mind, one fact is bound to happen: one polynomial in the basis must depend only on the $z_i$ variable. In fact, the obtained Gr\"obner basis can be reduced with respect to itself (i.e. each polynomial in the basis can be divided with respect to the other ones): because of the chosen order, this process eliminates progressively the variables $z_1,\cdots,z_{i-1},z_{i+1},z_n$, leaving at least one polynomial depending only on the last variable $z_i$. This polynomial is $g_i(z)=g_i(z_i)$. Notice that, by the Buchberger's algorithm, the many elements of the basis are constructed as a linear combination of the former polynomials: all of them are of the form requested by the Transformation Theorem hypotheses. Repeating this process for all the $n$ possible Lexicographic orders, at the end a set $\{g_1(z_1),\cdots,g_n(z_n)\}$ is created. The computation of Gr\"obner bases is achieved by the software \textsl{Macaulay2} \cite{macaulay2}: here we give a brief example of how this procedure works.
\begin{Example}\label{ex1}
As an example, we study the residue for $\tilde{\omega}=\frac{z_2 dz_1\wedge dz_2}{z_1^2(z_2-z_1)}$.

\

Here we use $h(z)=z_2$, $f_1(z)=z_1^2, f_2(z)=z_2-z_1$. The Jacobian is degenerate in $(0,0)$, hence we compute the residue of $\tilde{\omega}$ around $(0,0)$.\\
First, we look for $g_1(z_1)$: in this case there is no need to start creating a Gr\"obner basis associated to the ordering $z_2\succ z_1$: $f_1(z)$ already depends only on $z_1$, hence we simply consider $g_1(z_1)=f_1(z)=z_1^2$.\\
We now switch to the problem of finding $g_2(z_2)$, using the order $z_1\succ z_2$: it is clear that it is not possible to obtain a function depending only on $z_2$ dividing each $f_i$ by the other remaining polynomial, so we build the Gr\"obner basis. Notice how this suggests that $\left\lbrace f_1, f_2\right\rbrace$ cannot form a complete Gr\"obner basis for the chosen order. 
\begin{equation}
S(f_1,f_2)=\text{lcm}(z_1^2,z_1)\left(\frac{f_1}{z_1^2}+\frac{f_2}{z_1}\right)=z_1z_2.
\end{equation}
Dividing $S(f_1,f_2)$ with respect to the set $\left\lbrace f_1,f_2\right\rbrace$ with the goal to obtain a remainder which is ``smaller'' in the sense of the specified order leads to
\begin{equation}
S(f_1,f_2)=z_1z_2\xrightarrow{+z_2f_2}z_2^2\,,
\end{equation}
which is irreducible. We then add $S(f_1,f_2)$ (or better yet, its remainder, which has already went through the division process) to the original set: we obtain $\left\lbrace f_1, f_2, f_3 \right\rbrace$ with $f_3=z_2^2$. Without checking if this new set is a Gr\"obner basis or not (which it actually is), we identify $g_2(z_2)=f_3(z)=z_2^2$.
\begin{gather}
\begin{pmatrix} g_1 \\ g_2 \end{pmatrix}
=
\begin{pmatrix} A_{11}& A_{12}\\ A_{21}& A_{22} \end{pmatrix}
\begin{pmatrix} f_1 \\ f_2 \end{pmatrix}\,\,\,\text{with}\,\,\,
\begin{pmatrix} A_{11} & A_{12}\\ A_{21} & A_{22} \end{pmatrix}=
\begin{pmatrix} 1 & 0\\ 1 & z_1+z_2 \end{pmatrix}\Rightarrow \det A=z_1+z_2
\end{gather}
By Theorem \ref{teotrasf}, the residue can be written as
\begin{align}
\text{Res}_{\left\lbrace f_1,f_2\right\rbrace(0,0)}\tilde{\omega}&=\text{Res}_{\left\lbrace g_1,g_2\right\rbrace(0,0)}\left(\frac{z_2(z_1+z_2)dz_1\wedge dz_2}{z_1^2 z_2^2}\right)=\left(\frac{1}{2\pi i}\right)^2\oint\frac{dz_1}{z_1^2}\oint dz_2\frac{z_2(z_2+z_1)}{z_2^2}=1.
\end{align}
\end{Example}

We wonder if it is always possible to write an intersection number in terms of a residue of the kind in Eq.~\eqref{resdef}, not only when simple poles are involved. Regarding the emerging importance of the global residue theorem in the computation of intersection numbers, we mention a recent work \cite{Mizera:2019vvs}. It is argued that intersection numbers between forms can be computed by an expansion in the parameter $\gamma$ appearing in the Baikov polynomial $B^\gamma$, both for big and small $\gamma$: it appears that every term in the expansion can be written as a residue. We briefly focus on this result in order to give an intuitive idea of the reasons behind it without entering into too much detail.\\
We start by considering the manifold $M=\mathbb{CP}^m\setminus \cup_{i=1^k}H_i$, where $H_i$ represent some hyperplanes defined by certain equations $f_i=0$. With these functions it is possible to build the holomorphic function
\begin{equation}
    W=\sum_{i=1^k}\alpha_i \log{f_i}\,,\label{W}
\end{equation}
which only has logarithmic singularities along the $H_i$ hyperplanes. Eq.~\eqref{W} also defines a holomorphic 1-form $dW$, which in the language introduced in Section \ref{secbaikov} corresponds to $d\log(B)$. Similarly, it is possible to construct the operator 
\begin{equation}
    \nabla_{dW}=d+\gamma dW\wedge\,,\label{Deltadw}
\end{equation}
which corresponds to our $d+\omega\wedge$ (remember that $\omega=d\log(B^\gamma)=\gamma\,d\log(B)$). We consider the space $\Omega^k(M)$ of the k-forms on $M$ and construct two sequences:

\begin{equation}
  0 \xrightarrow{dW\wedge} \Omega^1(M) \xrightarrow{dW\wedge} \cdots \xrightarrow{dW\wedge} \Omega^m(M) \xrightarrow{dW\wedge} 0 \label{1sequence}
\end{equation}
and 
\begin{equation}
   0 \xrightarrow{\nabla_{dW}} \Omega^1(M) \xrightarrow{\nabla_{dW}} \cdots \xrightarrow{\nabla_{dW}} \Omega^m(M) \xrightarrow{\nabla_{dW}} 0\,. \label{2sequence}
\end{equation}
Notice how both sequences are well defined, as $dW\wedge dW=0$ and $\nabla_{dW}^2=0$. The associated cohomology groups $H^k(M,dW\wedge)$ and $H^k(M,\nabla_{dW})$ are respectively defined as
\begin{equation}
    H^k(M,dW\wedge)\coloneqq\{\phi_k\in \Omega^k(M) | dW\wedge\phi_k=0\}/\{\phi_k\in \Omega^k(M) | \phi_k=dW\wedge\phi_{k-1}, \text{with}\,\phi_{k-1}\in \Omega^{k-1}(M)\}\label{om1}
\end{equation}
and
\begin{equation}
    H^k(M,\nabla_{dW})\coloneqq\{\phi_k\in \Omega^k(M) | \nabla_{dW}\phi_k=0\}/\{\phi_k\in \Omega^k(M) | \phi_k=\nabla_{dW}\phi_{k-1}, \text{with}\,\phi_{k-1}\in \Omega^{k-1}(M)\}\label{om2}
\end{equation}

All cohomology groups are trivial for each $k\neq m$: only $H^m(M,dW\wedge)$ and $H^m(M,\nabla_{dW})$ contain useful information. We define two pairings, the first between $H^m(M,dW\wedge)$ and itself and the second between $H^m(M,\nabla_{dW})$ and $H^m(M,\nabla_{-dW})$. In the first case, taking $\phi_{\pm}\in H^m(M,\nabla_{dW})$ it is possible to define \cite{Mizera:2017rqa}
\begin{equation}
    (\phi_{-}|\phi_{+})_{dW,0}=\text{Res}_{dW=0}\left(\frac{\hat{\phi}_{-}\hat{\phi}_{+}d^m z}{\partial_1 W\cdots\partial_m W}\right)\,,\label{pair1}
\end{equation}
where the $\,\hat{}\,$ notation strips the forms from their $d^m z$ part and where $\text{Res}_{dW=0}$ represents a sum of residues around each critical point given by $\partial_i W=0$, for $i=1,\cdots, m$.\\
In the second case, taking $\phi_{\pm}\in H^m(M,\nabla_{\pm dW})$ allows to define
\begin{equation}
    \bra{\phi_{-}}\ket{\phi_{+}}_{dW}=\left(\frac{\gamma}{2\pi i}\right)^m\int_M\phi_{-}\wedge\phi_{+}^c\,,\label{pair2}
\end{equation}
where $\phi_{+}^c$ represents the compact version of $\phi_{+}$ belonging to the same equivalence class as $\phi_{+}$, similarly to the usual definitions from Cho and Matsumoto \cite{cho1995} (in Eq.~\eqref{interdef} we take the compact version of $\phi_{-}$, but it is just a matter of convention): Eq.~\eqref{pair2} represents an intersection number. It is argued that, like ~\eqref{om1} looks like a limit for big $\gamma$ of ~\eqref{om2}, then the pairings in Eq.~\eqref{pair1} and Eq.~\eqref{pair2} must be linked. It is found that Eq.~\eqref{pair2} can be expanded as follows:
\begin{equation}
   \bra{\phi_{-}}\ket{\phi_{+}}_{dW}=\sum_{k=0}^\infty \gamma^{-k} (\phi_{-}|\phi_{+})_{dW,k}\,
\end{equation}
where each $(\phi_{-}|\phi_{+})_{dW,k}$ is a \textit{higher residue pairing} \cite{saitohigher} and has the form of a $\text{Res}_{dW=0}$ of certain functions which only contain $\hat{\phi}_{-}$, $\hat{\phi}_{+}$ in the numerator and the derivatives $\partial_i W$ both in numerator and denominator.

\
Notice that, as it can be seen in Example \ref{ex1}, any residue of the kind we met can be calculated independently of the order of the poles appearing. If - as suggested by \cite{Mizera:2019vvs} - intersection numbers were always found to be expressed only in terms of residues, then a procedure involving Gr\"obner bases could be exploited. The biggest difference between the residue in Eq.~\eqref{resdef} and the integral in Eq.~\eqref{I} is the type of cohomology involved. Consider a $(n-1)$-form $\tilde{\xi}$ of the same type as in Eq.~\eqref{tildomega}:
\begin{equation}
\tilde{\xi}=\frac{\tilde{h}(z)dz_1\wedge\cdots\wedge dz_{n-1}}{\tilde{f}_1(z)\cdots \tilde{f}_{n-1}(z)}.\label{tildexi}
\end{equation}
The form $\tilde{\xi}$ is holomorphic everywhere in $U=\left\lbrace z\in \mathbb{C}^{n-1}\,\,\text{such that}\,\,\vert\vert z||<\epsilon\right\rbrace$ except from the set $\tilde{D}$, defined as $\tilde{D}=\tilde{D}_1+\cdots+\tilde{D}_{n-1}$, where each $\tilde{D}_i=(\tilde{f}_i)$ is the divisor of $\tilde{f}_i$ (namely, it is the formal sum of the points sent to 0 by applying $\tilde{f}_i$). This implies that its external derivative is null: $d\tilde{\xi}=0$, so that $\int_\Gamma d\tilde{\xi}=0$, where $\Gamma$ is the contour appearing in Eq.~\eqref{resdef}. As the addition of a term of the type $d\tilde{\xi}$ gives no contribution to Eq.~\eqref{resdef}, the residue of $\tilde{\omega}$ depends on the de Rham cohomology class $[\tilde{\omega} ]\in \text{H}^n_\text{dR}(U\setminus D)$, where $D=D_1+\cdots+D_n$, with $D_i=(f_i)$. On the other side, the integral \eqref{I} similarly depends on the twisted cohomology class of $\phi$. This difference is due to the presence of $u=B^\gamma$ inside the integral: as we have seen earlier, the effect of $u$ is the introduction of some cuts in the complex space: we wonder if, in a sense, $u$ can be considered as part of the geometry and  can be ``transferred inside the sign of integral''. The integral $I$ on the cut complex space can also be thought as an integral on a certain Riemann surface where the cuts have been glued together. We wonder if, in this sense, it is possible to rewrite the integral \eqref{I} as multi-variable residue with a form similar to Eq.~\eqref{resdef}, or at least as a sum of terms of this kind as suggested in \cite{Mizera:2019vvs}. In this case, a similar approach could be followed: while Theorem \ref{teotrasf} emerges from the interpretation of the residue by means of sheaf cohomology  - in which the cohomology of interest is the de Rham cohomology - another similar theorem for the twisted cohomology group should exist, telling how one can adapt the function inside the integral while obtaining the same residue. As a first guess, it shouldn't differ too much from the original theorem. Once having derived how the integrand can be modified to obtain a more pleasing integral, the focus could remain the same: having functions in the denominator which depend only from a single variable, so that the computation of the integral can finally be performed.

\

Of course, while we stress that these observations only outline a possible way to tackle the problem of computing an intersection number as a whole, we highlight how certainly the path has to lie in looking at Eq.~\eqref{interdef} not only with straightforward computation in mind, but with the objective to understand a deeper structure: the path certainly lies in the abstract procedure outlined in Section \ref{veryhardmaths}. In this section, we introduced many of the known practical tools: some have already been exploited for the actual evaluation of intersection numbers (Sections \ref{logforms}-\ref{recursive}), while in Section \ref{newideas} we introduced some concepts and definitions which may become part of the practical instruments needed to pursue, in the end, the computation of the Feynman integral. 

\section{An explicit example of Feynman integral}
We want to conclude this review by showing how to explicitly apply intersection theory to the computation of Feynman integrals. The present example is due to Manoj Mandal and Federico Gasparotto.
Let us consider the double box diagram for massless particles:
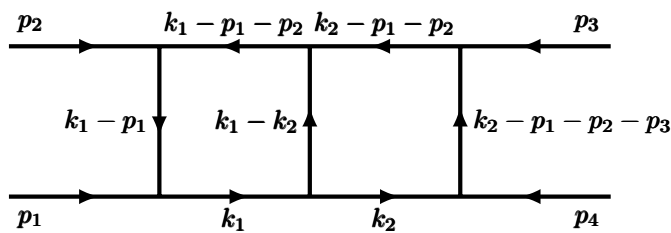
\begin{figure}[!h]
\begin{center}
\begin{tikzpicture}[>=latex,decoration={zigzag,amplitude=.5pt,segment length=2pt}]
\draw [ultra thick] (-4,1) -- (4,1) (-4,-1) -- (4,-1) (-2,1) -- (-2,-1) (0,1) -- (0,-1) (2,1) -- (2,-1);
\draw [ultra thick,->] (-3,1) -- (-2.8,1); 
\draw [ultra thick,->] (-3,-1) -- (-2.8,-1);
\draw [ultra thick,->] (3,1) -- (2.8,1);
\draw [ultra thick,->] (3,-1) -- (2.8,-1);
\draw [ultra thick,->] (-1,-1) -- (-0.8,-1);
\draw [ultra thick,->] (1,-1) -- (1.2,-1);
\draw [ultra thick,->] (-2,0) -- (-2,-0.2);
\draw [ultra thick,->] (0,0) -- (0,0.2);
\draw [ultra thick,->] (2,0) -- (2,0.2);
\draw [ultra thick,->] (-1,1) -- (-1.2,1);
\draw [ultra thick,->] (1,1) -- (0.8,1);
\node at (-3.7,-1.3) {$\pmb {p_1}$};
\node at (-3.7,1.3) {$\pmb {p_2}$};
\node at (3.7,1.3) {$\pmb {p_3}$};
\node at (3.7,-1.3) {$\pmb {p_4}$};
\node at (-1,-1.3) {$\pmb {k_1}$};
\node at (1,-1.3) {$\pmb {k_2}$};
\node at (-1,1.3) {$\pmb {k_1-p_1-p_2}$};
\node at (1,1.3) {$\pmb {k_2-p_1-p_2}$};
\node at (-2.7,0) {$\pmb {k_1-p_1}$};
\node at (-0.7,0) {$\pmb {k_1-k_2}$};
\node at (3.5,0) {$\pmb {k_2-p_1-p_2-p_3}$};
\end{tikzpicture}
\caption{Double box diagram.}
\label{FigLast}
\end{center}
\end{figure}

After reducing to the maximal cut and choosing a basis of Master Integrals, we will show how to use intersection theory for determining the differential equations for the MIs as functions of a Mandelstam's variable.
Of course, momentum conservation requires $p_4=-p_1-p_2-p_3$, so we have $E=3$ independent external momenta, while $L=2$, so we need
\begin{align}
 M=EL+\frac {L(L+1)}2=9
\end{align}
Baikov's variables. Seven are provided by the true denominators 
\begin{align}
 D_1&=k_1^2, \qquad\ D_2=(k_1-p_1)^2, \qquad\ D_3=(k_1-p_1-p_2)^2, \qquad D_4=(k_1-k_2)^2, \cr
 D_5&=(k_2-p_1-p_2)^2, \qquad\ D_6=(k_2-p_1-p_2-p_3)^2, \qquad\ D_7=k_2^2,
\end{align}
while the remaining two are given by the fake denominators
\begin{align}
 D_8&=(k_2-p_1)^2, \qquad\ D_9=(k_1-p_1-p_2-p_3)^2.
\end{align}
Using the Mandelstam's variables\footnote{Notice also that the massless condition and momentum conservation imply $s+t=-2p_1\cdot p_3$}
\begin{align}
 s=2p_1\cdot p_2, \qquad t=2p_2\cdot p_3,
\end{align}
the Baikov polynomials $B$ in the variables $z_j=D_j$, $j=1,\ldots,9$, is given by
{\small
\begin{align}
 \frac{16B}{(-st(s+t))^{\frac {4-d}2}}=& st^2 z_1 z_3+s^2t^2 z_4-st^2 z_1 z_4-st^2 z_2z_4 -st^2z_3z_4 
 +(s^2 t +st^2) z_4^2 -st^2 z_1 z_5 +t^2 z_1^2 z_5 +st z_1 z_2 z_5\cr 
 &-t^2 z_1 z_3 z_-st^2 z_4 z_5 -2st z_1 z_4 z_5-t^2 z_1 z_4 z_5+st z_2 z_4 z_5 +t^2 z_3z_4z_5 +t^2 z_1 z_5^2 +t z_1^2 z_5^2 \cr
 &-t z_1 z_2 z_5^2 -s^2 t z_2 z_6 +st z_1 z_2 z_6 +s^2 z_2^2 z_6 -2st z_1 z_3 z_6 +st z_2 z_3 z_6 -s^2 t z_4 z_6 +st z_1 z_4 z_6 \cr
 &-s^2 z_2 z_4 z_6 -2st z_2 z_4 z_6 +st z_3 z_4 z_6 +st z_1 z_5 z_6 -t z_1^2 z_5 z_6 +st z_2 z_5 z_6 +(s+t) z_1 z_2 z_5 z_6 \cr 
 & -s z_2^2 z_5 z_6 +t z_1 z_3 z_5 z_6 -t z_2 z_3 z_5 z_6 +s^2 z_2 z_6^2 -s z_1 z_2 z_6^2 +sz_2^2 z_6^2+s z_1 z_3 z_6^2\cr 
 &-s z_2 z_3 z_6^2 -st^2 z_3 z_7-t^2 z_1 z_3 z_7 +st z_2 z_3 z_7 +t^2 z_3^2 z_7 -st^2 z_4 z_7 +t^2 z_1z_4z_7 +st z_2 z_4 z_7 \cr
 &-2st z_3 z_4 z_7 -t^2 z_3 z_4 z_7 +st^2 z_5 z_7 -t^2 z_1 z_5 z_7 -2st z_2 z_5 z_7 +t z_1 z_2 z_5 z_7 +s z_2^2 z_5 z_7 \cr 
 &-t^2 z_3 z_5 z_7 -2t z_1 z_3 z_5 z_7 +t z_2 z_3 z_5 z_7 +st z_2 z_6 z_7 -t z_1 z_2 z_6 z_7 -s z_2^2 z_6 z_7 +st z_3 z_6 z_7 \cr
 &+t z_1 z_3 z_6 z_7 +(s+t) z_2 z_3 z_6 z_7-t z_3^2 z_6 z_7 +t^2 z_3 z_7^2 -t z_2 z_3 z_7^2 +t z_3^2 z_7^2-2st z_1 z_3 z_8 \cr
 &-s^2 t z_4 z_8 +st z_1 z_4 z_8 +s^2 z_2 z_4 z_8 +st z_3 z_4 z_8 +st z_1 z_5 z_8 -t z_1^2 z_5z_8 -s z_1 z_2 z_5 z_8 \cr
 &+t z_1 z_3 z_5 z_8 +s^2 t z_6 z_8 -2st z_1 z_6 z_8 +t z_1^2 z_6 z_8 -s^2 z_2 z_6 z_8 +s z_1 z_2 z_6 z_8 -2st z_3 z_6 z_8 \cr
 &-2(s+t) z_1 z_3 z_6 z_8 -s z_2 z_3 z_6 z_8 +t z_3^2 z_6 z_8 +st z_3 z_7 z_8 +t z_1 z_3 z_7 z_8 -s z_2 z_3 z_7 z_8 \cr
 &-t z_3^2 z_7 z_8 +s z_1 z_3 z_8^2 +s^2 t z_2 z_9 -s^2 t z_4 z_9 +st z_1 z_5 z_9 -2st z_2 z_5 z_9 +st z_4 z_5 z_9 \cr
 &-t z_1 z_5^2 z_9 +t z_2 z_5^2 z_9 -s^2 z_2 z_6 z_9 +s^2 z_4 z_6 z_9 -s z_1 z_5 z_6 z_9 +s z_2 z_5 z_6 z_9 -2st z_2 z_7 z_9 \cr
 &+st z_3 z_7 z_9 +st z_4 z_7 z_9 -2st z_5 z_7 z_9 +t z_1 z_5 z_7 z_9 -2(s+t) z_2 z_5 z_7 z_9+t z_3 z_5 z_7 z_9 +s z_2 z_6 z_7 z_9 \cr
 &-s z_3 z_6 z_7 z_9 +t z_2 z_72 z_9 -t z_3 z_7^2 z_9 -s^2 t z_8 z_9 +st z_1 z_8 z_9 -s^2 z_2 z_8 z_9 +st z_3 z_8 z_9 \cr
 &-s^2 z_4 z_8 z_9 -2st z_4 z_8 z_9 +st z_5 z_8 z_9 +s z_1 z_5 z_8 z_9 +t z_1 z_5 z_8 z_9 +s z_2 z_5 z_8 z_9 -t z_3 z_5 z_8 z_9 \cr
 &-s^2 z_6 z_8 z_9 +s z_1 z_6 z_8 z_9 -2s z_2 z_6 z_8 z_9 +s z_3 z_6 z_8 z_9 +st z_7 z_8 z_9 -t z_1 z_7 z_8 z_9 +s z_2 z_7 z_8z_9 \cr 
 &+(s+t) z_3 z_7 z_8 z_9 +s^2 z_8^2 z_9 -s z_1 z_8^2 z_9 -s z_3 z_8^2 z_9 +s z_5 z_7 z_9^2 +s^2 z_8 z_9^2 -s z_5 z_8 z_9^2 \cr
 &-s z_7 z_8 z_9^2 +s z_8^2 z_9^2.
\end{align}
}
However, we will not work with the complete diagram but on the maximal cut, which correspond to set $z_j=0$, $j=1,\ldots,7$ so that only the denominators $D_8$ and $D_9$ survive. Using $w_1=z_8$ and $w_2=z_9$ we then get for the
maximal cut
\begin{align}
 B(w_1,w_2)=\frac {(-st(s+t))^{\frac {4-d}2}}{16} \left[ 
 -s^2 t w_1 w_2 + s^2 w_1^2 w_2 + s^2 w_1 w_2^2 + s w_1^2 w_2^2
 \right].
\end{align}
The twisting section is thus 
\begin{align}
 u=B^{\gamma}
\end{align}
with 
\begin{align}
 \gamma=\frac {d-E-L-1}2=\frac {d-6}2.
\end{align}
The associated connection 1-form is
\begin{align}
 \omega=\gamma\frac {-s t + 2sw_1 + s w_2 + 2w_1 w_2}{-s t w_1 + s w_1^2  + s w_1 w_2 + w_1^2 w_2}dw_1+\gamma\frac{-s t + s w_1 + 2s w_2 + 2w_1 w_2}{-s t w_2 + s w_1 w_2 + s w_2^2 + w_1 w_2^2} dw_2.
\end{align}
The equation $\omega=0$ has two solutions
\begin{align}
 w_1=w_2=w_\pm\equiv-\frac 34 s\pm \sqrt {\frac 9{16} s^2+\frac 12 st}, \label{glizeri}
\end{align}
so there are two MIs. Since we have to compute bivariate intersection numbers, we consider $w_1$ as internal integration variable. The number of corresponding MIs is given by the solutions of $\omega_1=0$ for $w_1$, where 
$\omega=\omega_1 dw_1+\omega_2 dw_2$. Therefore, we need just one MI for the internal integration.\\
As MIs we choose $I_{[1,1,1,1,1,1,1,0,0]}$ and $I_{[1,1,1,1,1,1,1,-1,0]}$, which correspond to the forms 
\begin{align}
 \bra{e_1}\equiv \bra {1} , \qquad \bra{e_2}\equiv \bra {w_1},
\end{align}
respectively. Considering them as functions of the external parameter $s$, they have to satisfy the differential equation system
\begin{align}
 \bra{\partial_s e_j+(\partial_s \log u) e_j}=\sum_{k=1}^2O_{jk} \bra {e_k}
\end{align}
where the matrix $\pmb O$ is obtained as follows (see \cite{Frellesvig:2019kgj}, Section 3.2). Let $\phi_{sj}:=\partial_s e_j+(\partial_s \log u) e_j$, $F_{jk}=\bra {\phi_{sj}} d_k\rangle$ and $C_{jk}=\bra {e_j} d_k\rangle$ define the matrices $\pmb F$ and 
$\pmb C$. Then,
\begin{align}
 \pmb O=\pmb F \pmb C^{-1}.
\end{align}
We will then show how to compute $\pmb F$ and $\pmb C$, being very explicit for $C_{11}$ and quoting the results for all the other components. Notice that in our specific case $\partial_s e_j=0$ and 
\begin{align}
 \pmb \phi_s=
\frac {\gamma}s \begin{pmatrix}
  \frac {2 s (t - w_1 -w_2) - w_1 w_2}{ s (t - w_1 -w_2) - w_1 w_2} \\
  \frac {w_1 (2 s (t - w_1 - w_2) - w_1w_2)}{s (t - w_1 -w_2) - w_1w_2)}
\end{pmatrix}.
\end{align}
We choose $d_j=e_j$. 

\subsection{Computation of $\pmb C$}
Let us start computing $\bra 1 1\rangle$ very explicitly. According to the general construction, we need to start with the internal integration, therefore we choose an internal basis (for $w_1$ only). Recalling that the dimension is 1, we choose
\begin{align}
 \bra {\varepsilon} =\bra {1/w_1} 
\end{align}
and $\delta=\varepsilon$ for the dual. The critical point of $d_{w_1} \log u$ is
\begin{align}
 w_1=\bar w\equiv\frac s2 \frac {t-w_2}{s+w_2}.
\end{align}
Since $\varepsilon$ is a $d\log$-form, the reduced intersection matrix $\pmb C_{red}\equiv\bra \varepsilon \delta \rangle$ can be computed with the simple method of univariate intersection numbers for $d\log$-forms.
From (\ref{usoquestaformula}) we get
\begin{align}
 \pmb C_{red}=-{\rm Res}_{w_1=\bar w} \left(\frac 1{w_1^2} \frac 1{\omega_1}\right)=-\frac 1{\bar w^2} \frac 1{\partial_{w_1} \omega_1 |_{w_1=\bar w}}=\frac 1{2\gamma}.
\end{align}
To proceed iteratively, as we have seen for the recursive calculation of the multivariate intersection numbers, we have now to determine the new connection (\ref{matriciozzo}), which is now\footnote{Notice that we are not using an orthonormal frame, so
the inverse matrix $\pmb C_{red}^{-1}$ appears.}
\begin{align}
 \Omega^{(2)}= \bra {\partial_{w_2} \varepsilon+\omega_2 \varepsilon}  \delta\rangle \pmb C_{red}^{-1}.
\end{align}
Using again (\ref{usoquestaformula}), we get
\begin{align}
 \Omega^{(2)}=-2\gamma {\rm Res}_{w_1=\bar w} \left(\frac {\omega_2}{w_1^2} \frac 1{\omega_1}\right)=\frac {\gamma}{w_2}-\frac {\gamma}{w_2+s}+\frac {2\gamma}{w_2-t}.
\end{align}
Notice that $\Omega^{(2)}$ has only simple poles and its zeros are given by (\ref{glizeri}).\\
Next, we need to project $\bra {\phi_L}\equiv \bra {e_1}=\bra 1$ to $\bra \varepsilon$:
\begin{align}
 \bra{\phi_L^{(2)}}:= \bra {\phi_L} \delta \rangle \pmb C_{red}^{-1}.
\end{align}
This computation is not so direct as the previous ones, since $\phi_L$ is not of $d\log$-type, it has a double pole at infinity: for $w_1 \to 1/w_1$ we have $\bra 1 \to \bra{-1/w_1^2}$. However, we can use the cohomological property of the intersection
numbers and shift $\phi_L$ by a $\nabla_{\omega_1}$-exact form to get a representative in the same class with only simple poles. For example
\begin{align}
 \tilde \phi_L=\phi_L-\nabla_{\omega_1} \frac {w_1}{1+2\gamma}=\frac {s\gamma}{1+2\gamma} \frac {w_2-t}{-st+sw_1+sw_2+w_1w_2}
\end{align}
has two simple poles, one at $w_1=\infty$ and the other one at 
\begin{align*}
 w_1=s\frac {t-w_2}{s+w_1}.
\end{align*}
Since cohomologous formes give equal intersection numbers, we can write
\begin{align}
 \bra{\phi_L^{(2)}}=\bra {\tilde \phi_L} \delta \rangle \pmb C_{red}^{-1}
\end{align}
and use once again (\ref{usoquestaformula}) to get
\begin{align}
  \bra{\phi_L^{(2)}}=-2\gamma {\rm Res}_{w_1=\bar w} \left(\frac {\tilde \phi_L}{w_1} \frac 1{\omega_1}\right)=\frac {s\gamma}{1+2\gamma}\frac {t - w_2}{s + w_2}.
\end{align}
In a similar way, we have to project $\ket {\phi_R}\equiv \ket {e_1}=\ket 1$ on $\ket \delta$:
\begin{align}
 \ket {\phi_R^{(2)}}:= \pmb C_{red}^{-1} \langle \varepsilon \ket {\phi_R}.
\end{align}
As before, $\phi_R$ has a double pole at infinity so we replace it with the representative\footnote{notice that we have changed the connection to $-\omega_1$}
\begin{align}
 \tilde \phi_R=\phi_L-\nabla_{-\omega_1} \frac {w_1}{1-2\gamma}=\frac {s\gamma}{(2\gamma-1)} \frac {w_2-t}{-st+sw_1+sw_2+w_1w_2},
\end{align}
to get 
\begin{align}
 \ket{\phi_R^{(2)}}=-2\gamma{\rm Res}_{w_1=\bar w} \left(\frac {\varepsilon \tilde \phi_R}{\omega_1}\right)=-\frac {s\gamma}{1-2\gamma}\frac {t - w_2}{s + w_2}.
\end{align}
Before considering the final step of the computation of $C_{11}$, which involves the intersection between $\phi_L^{(2)}$ and $\phi_R^{(2)}$ twisted with $\Omega^{(2)}$, we notice that the lasts two have again double pole at infinity. Once again, we 
change the cohomology representatives in order to work with forms having only simple poles. Since
\begin{align}
 \phi_L^{(2)}=-\frac {s\gamma}{1+2\gamma} +\frac {\gamma s(s+t)}{(1+2\gamma)(s+2w_2)},
\end{align}
we conveniently define
\begin{align}
 \tilde \phi_L^{(2)}=\phi_L^{(2)}-\nabla_{\Omega^{(2)}} \left( -\frac  {s\gamma}{1+2\gamma} \frac {w_2}{1+2\gamma} \right)=\frac {2st\gamma^2}{(1+2\gamma)^2 (w_2-t)} +\frac {\gamma s(s+t)+3s^2\gamma^2+2\gamma^2 st}{(1+2\gamma)^2 (s+w_2)}.
\end{align}
Similarly we set
\begin{align}
 \tilde \phi_R^{(2)}=\phi_R^{(2)}-\nabla_{-\Omega^{(2)}} \left( \frac  {s\gamma}{1-2\gamma} \frac {w_2}{1-2\gamma} \right)=\frac {2st\gamma^2}{(1-2\gamma)^2 (w_2-t)} +\frac {-\gamma s(s+t)+3s^2\gamma^2+2\gamma^2 st}{(1-2\gamma)^2 (s+w_2)}.
\end{align}
We can finally use (\ref{prewow}) to compute the final step\footnote{once again taking into account of the non othonormality through $\pmb C_{red}$}
\begin{align}
\pmb C_{11}=&\bra 1 1\rangle=-{\rm Res}_{w_2=w_+} \left( \frac {\tilde \phi_L^{(2)} \tilde \phi_R^{(2)}}{\Omega^{(2)}}  \pmb C_{red}\right)-{\rm Res}_{w_2=w_-} \left( \frac {\tilde \phi_L^{(2)} \tilde \phi_R^{(2)}}{\Omega^{(2)}}\pmb C_{red}\right)\cr
=& s^2 \frac {3 (1-8\gamma^2)(s+t)^2-3\gamma^2s^2+4\gamma^2 t^2}{4(1-4\gamma^2)^2}.
\end{align}
In the same way we can compute the remaining matrix elements of which we just quote the final results
\begin{align}
\pmb C_{12}=&\bra 1 w_1\rangle= \frac {s^2}{8(1-\gamma)(1-4\gamma^2)^2} \left[ 2t^3\gamma (1-4\gamma^2)-2st^2 (3 - 7\gamma - 22 \gamma^2 + 44 \gamma^3)\right. \cr
&\left. -3s^2t (4 - 7\gamma - 33 \gamma^2 + 54 \gamma^3)
-s^3(6-9\gamma-54\gamma^2+81\gamma^3)\right], 
\end{align}
\begin{align}
\pmb C_{21}=&\bra {w_1} 1\rangle= -\frac {s^2}{8(1+\gamma)(1-4\gamma^2)^2} \left[ 2t^3\gamma (1-4\gamma^2)+2st^2 (3 +7\gamma - 22 \gamma^2 - 44 \gamma^3)\right. \cr
&\left. +3s^2t (4 + 7\gamma - 33 \gamma^2 - 54 \gamma^3) 
+s^3(6+9\gamma-54\gamma^2-81\gamma^3)\right]=\pmb C_{12}(\gamma\to -\gamma), \\
\pmb C_{22}=&\bra {w_1} w_1\rangle= 
\frac {s^3}{16 (1-4\gamma^2)^2 (1-\gamma^2)} \left[ -16 t^3 \gamma^2 (1-4 \gamma^2) + 12 st^2 (1 - 13 \gamma^2 + 30 \gamma^4) \right. \cr &\left.+ 12 s^2 t (2 - 23 \gamma^2 + 45 \gamma^4) + 3 s^3 (4 - 45 \gamma^2 + 81\gamma^4) \right].
\end{align}

\subsection{Computation of $\pmb F$}
This is computed exactly in the same way, now choosing $\phi_L=\phi_{sj}$ and $\phi_R=d_k$. Again, we will not repeat all passages here, but limit ourselves to quote the final results. Of course, the interested reader is invited to reproduce all 
the details carefully. We get
{\small
\begin{align}
 \pmb F_{11}=&\bra {\phi_{s1}} 1 \rangle=\frac {s}{4(1-4\gamma^2)^2} \left[ 3 s^2(2+5\gamma-18 \gamma^2-45 \gamma^3)+t^2(3+11\gamma-20\gamma^2-68\gamma^3) \right. \cr
 &\left.+3st(3+9\gamma-24\gamma^2-68\gamma^3)\right],\\
 \pmb F_{12}=&\bra {\phi_{s1}} w_1\rangle= \frac {s}{8(1-\gamma)(1-4\gamma^2)^2} \left[ 2t^3 \gamma(1+3\gamma-4\gamma^2-12\gamma^3)- 3s^3(4 + 4\gamma - 51\gamma^2 - 36\gamma^3 + 135\gamma^4) \right. \cr
&\left. -2st^2(3 + 3\gamma - 50\gamma^2 - 26\gamma^3 + 160\gamma^4)-3s^2t(6 + 7\gamma- 82\gamma^2 - 57\gamma^3 + 234\gamma^4) \right],\\
 \pmb F_{21}=&\bra {\phi_{s2}} 1\rangle= \frac {s}{8(1+\gamma)(1-4\gamma^2)^2} \left[ 2 t^3\gamma(-1 - 3\gamma+ 4\gamma^2 + 12\gamma^3)+ 4st^2(-3 - 12\gamma + 8\gamma^2 + 79\gamma^3 + 80\gamma^4) \right. \cr
&\left.+3 s^3(-6 - 19\gamma+ 39\gamma^2 + 171\gamma^3 + 135\gamma^4)+ 3s^2t(-10 - 35\gamma+ 50\gamma^2 + 273\gamma^3 + 234\gamma^4) \right],\\
\pmb F_{22}=&\bra {\phi_{s2}} w-1\rangle= \frac {s^2}{16(1-\gamma)(1-4\gamma^2)^2} \left[ 4t^3\gamma(-1 - 6\gamma- 9\gamma^2 + 24\gamma^3 + 52\gamma^4)\right. \cr
&+ 12 st^2(2 + 3\gamma- 26\gamma^2 - 45\gamma^3 + 60\gamma^4 + 114\gamma^5)+ 6s^2t(10 + 17\gamma- 115\gamma^2 - 197\gamma^3 + 225\gamma^4 + 396\gamma^5)\cr 
&\left.+ 3s^3(12 + 20\gamma- 135\gamma^2 - 225\gamma^3 + 243\gamma^4 + 405\gamma^5)\right].
\end{align}
}

\subsection{Computation of $\pmb O$}
With these elements we can finally compute the matrix $\pmb O=\pmb F \pmb C^{-1}$ defining the differential equation for the MIs:
\begin{align}
 \pmb O=\frac 1{s(s+t)}
\begin{pmatrix}
 -(2s + (d-6)t) & 4-d \\ (4-d) \frac t2 & (-16 + 3 d)\frac s2 + (-6 + d)t
\end{pmatrix}.
\end{align}

\

Of course we could proceed working out the present example, but our scope, illustrating how to apply intersection theory to Feynman calculations, is already reached, so we leave further computation to the interested readers. 
Of course, exploiting all these calculations by hand is lengthy and requires a lot of time, but the huge advantage of these methods is that they can be implemented on a computer making it very fast. 

\section{Final comments}

The strategy we followed here is rising growing interest at the time we are writing the present review. The ability of recognizing the correct cohomology underlying the Feynman Integrals and the corresponding intersection product would allow both to gain deeper understanding of such integrals and to systematize their computation. In particular, after determining the cohomology space, one is left with the choice of a convenient basis, i.e. the Master Integrals, with respect to which one can project the given amplitude, through the intersection product.

\

 In Section~\ref{Perversion} we argued the necessity of introducing the vector space $IH_M^{(p)}$ of twisted cohomology with perversity, in order to correctly deal with singular forms; in Section \ref{secnumberMI} we showed different ways in which the dimension $\nu$ of such space can be computed and interpreted. Such interpretations mostly rely on the Poincar\'{e} duality between cycles and cocycles, which in the case of singular varieties is restored by the introduction of perverse sheaves.  While in principle any set of $\nu$ independent forms in $IH_M^{(p)}$ could serve as a basis, the determination of a preferred one suitable for practical computation is still an open problem. We remark that a natural basis of cycles is represented by the Lefschetz thimbles, as we described at the end of Section~\ref{FeynmanToInters}.  This allows in principle to define a basis also for the cohomology, thanks to Poincar\'e duality. Here our excursus ends: the practical realization of such duality has yet to be performed in a form apt to get a systematic approach in the problem of the calculation of Feynman Integrals.


\section*{Acknowledgements}
First of all, we are indebted with professor Pierpaolo Mastrolia and the group of Padua (Manoj Mandal, Federico Gasparotto, Luca Mattiazzi, Vsevolod Chestnov, Hjalte Frellesvig \& Co.) for introducing us to the topic of Intersection theory methods in the Feynman integrals landscape. We also thank Pierpaolo for helping us in improving considerably our manuscript and adding content to it. We also thank Maxim Kontsevich for his lesson related to Section \ref{FeynmanToInters}, and Thibault Damour for helpful discussions.  In particular M.C. is grateful to Roberta Merlo for sharing her deep knowledge about spheres with handles. 
We are particular indebted with Manoj Mandal and Federico Gasparotto for providing us with the example of Sec. 8, including all the detailed calculations shown there. 

\newpage
\appendix
\section{Baikov Representation}\label{AppBaikov}
\numberwithin{equation}{section}
For the sake of completeness, in this Appendix we provide a short derivation of the Baikov formula \eqref{baikov}. The idea is to rewrite the integral using the independent scalar products between momenta as integration variables. They are
\begin{equation}
M=LE+L(L+1)/2\,,
\end{equation}  
where the first term is the number of scalar products $q_i\cdot p_j$ between one loop and one external momentum, while the second represents the scalar products $q_i\cdot q_j$ between
loop momenta.\\
In total, one has $m=E+L$ total independent momenta. It is useful to introduce the complete vector $k$ of all $m$ momenta:
\begin{equation}
k=(\underbrace{q_1,\cdots q_L}_{K_i=q_i\,\,\,i\leq L}\,\,\,\, , \,\,\,\underbrace{p_1,\cdots p_E}_{K_i=p_{i-L}\,\,\,i> L}).
\end{equation}
 To perform this change of variables, we start by decomposing $q_1$ as
 \begin{align}
     q_1 = q_{1\parallel} + q_{1\perp}\,,
 \end{align}
 where $q_{1\parallel}$ represents the projection of $q_1$ onto the space generated by all the other $m-1=E+L-1$ momenta $\left\lbrace q_2,\cdots q_L,p_1,\cdots,p_E\right\rbrace$, while $q_{1\perp}$ represents the orthogonal component with respect to said space. This is done for every variable: for each $i$, the corresponding $q_i$ is projected onto the space generated by the momenta which come next in the vector $k$:  $\left\lbrace q_{i+1},\cdots q_L,p_1,\cdots,p_E\right\rbrace$. In the last step, one projects the last $q_L$ along the space of the $E$ external momenta $\left\lbrace p_1,\cdots,p_E\right\rbrace$.
This decomposition leads to
\begin{equation}
d^Dq_1\cdots d^Dq_L=(d^{E+L-1}q_{1\parallel}\,\,d^{D-E-L+1}q_{1\perp})\,\,\cdots\,\,(d^Eq_{L\parallel}\,\,d^{D-E}q_{L\perp})\,.\label{decomp}
\end{equation}

We introduce the Gram matrix
\begin{equation}
G(k)=
\begin{pmatrix} s_{11}\,\,\,\cdots\,\,\, s_{1m}\\ \vdots\,\,\,\,\,\,\,\,\,\,\,\,\,\,\,\,\,\,\,\,\,\,\,\,\,\,\,\,\,\,\,\vdots \\ s_{m1} \,\,\,\cdots\,\,\, s_{mm} \end{pmatrix}\label{gmatrix}\,,
\end{equation}
in which the entries are $s_{ij}=k_ik_j$. The square root of the determinant (which we will adress as $G$ instead of $\det (G)$ for brevity) of the matrix \eqref{gmatrix} represents the volume of the parallelotope generated by the elements of $k$. With this interpretation, one can write
\begin{itemize}
\item[1:] Parallel component $q_{i\parallel}$.
\begin{align}
d^{E+L-i}q_{i\parallel}=\frac{ds_{i,i+1}\cdot ds_{i,i+2}\cdots ds_{i,E+L}}{G^{1/2}(q_{i+1},\cdots,q_L,p_1,\cdots,p_E)}=\prod_{j=i+1}^{E+L}\frac{ds_{ij}}{G^{1/2}(q_{i+1},\cdots,q_L,p_1,\cdots,p_E)}\,.\label{par}
\end{align}
In the numerator of Eq.~\eqref{par} we perform the scalar product of $q_i$ along the space generated by the vectors which come next (starting from $q_{i+1}$): this allows to find the projections of $q_i$ along such vectors. The denominator is the necessary normalization which allows to get the correct dimension.
\item[2:] Perpendicular component  $q_{i\perp}$.\\
Introducing polar coordinates and separating the angular part from the radial part, we get
\begin{align}
d^{D-E-L+i}q_{i\perp}&=\Omega_{D-E-L+i-1}|q_{i\perp}|^{D-E-L+i-1}d|q_{i\perp}|\cr
&=\frac{\Omega_{D-E-L+i-1}}{2}|q_{i\perp}|^{D-E-L+i-2}d|q_{i\perp}|^2\,,
\end{align}
where $d|q_{i\perp}|^2$ is $ds_{ii}$. Notice that $|q_{i\perp}|$ can be seen as the height of a parallelotope with base $q_{i+1},\cdots,p_E$, so it can be computed as the whole volume divided by the volume of its base: hence, we write
\begin{equation}
d^{D-E-L+i}q_{i\perp}=\frac{\Omega_{D-E-L+i-1}}{2}\left(\frac{G(q_i,\cdots q_L,p_1,\cdots,p_E)}{G(q_{i+1},\cdots q_L,p_1,\cdots,p_E)}\right)^\frac{D-E-L-2+i}{2}ds_{ii}\,.\label{perp}
\end{equation}
\end{itemize}
Using Eq.~\eqref{par} and Eq.~\eqref{perp} and recalling
\begin{equation}
\frac{\Omega_{D-E-L+i-1}}{2}=\frac{\pi^\frac{D-E-L+i}{2}}{\Gamma(\frac{D-E-L+i}{2})}\,,\label{sphere}
\end{equation}
it is possible to obtain
\begin{align}
\frac{d^Dq_i}{\pi^D/2}&=\frac{\pi^\frac{-E-L+i}{2}}{\Gamma(\frac{D-E-L+i}{2})}\int_{\Gamma_i} G^{-1/2}(q_{i+1},\cdots,p_E)\prod_{j=i}^{E+L}d(q_i k_j)\left(\frac{G(q_i, \cdots,p_E)}{G(q_{i+1},\cdots p_E)}\right)^\frac{D-E-L-2+i}{2}\,,
\end{align}
where $\Gamma_i$ is the contour determined by $|q_{i\perp}|^2>0$. The whole Feynman integral \eqref{feynman} becomes
\begin{align}
I_{\nu_1,\cdots,\nu_N}&=\frac{\pi^{\frac{L-M}{2}}}{\prod_{l=1}^L\Gamma(\frac{D-E-L+l}{2})}\int_{\Gamma}\prod_{i=1}^L G^{-1/2}(q_{i+1},\cdots,p_E)\left(\frac{G(q_i, \cdots,p_E)}{G(q_{i+1},\cdots p_E)}\right)^\frac{D-E-L-2+i}{2}\times\cr
&\times\frac{\prod_{j=i}^{E+L}d(q_ik_j)}{\prod_{a=1}^{N}D_a^{\nu_a}}\,.\label{delirio}
\end{align}
By explicit computation of the product over $i$ in Eq.~\eqref{delirio}, only a few terms survive and Eq.~\eqref{delirio} becomes
\begin{align}
I_{\nu_1,\cdots,\nu_N}&=\frac{\pi^{\frac{L-M}{2}}G(p_1,\cdots,p_E)^{-\frac{L}{2}}}{\prod_{i=1}^L\Gamma(\frac{D-E-L+i}{2})}\int_\Gamma \left(\frac{G(q_1, \cdots,p_E)}{G(p_1,\cdots p_E)}\right)^\frac{D-E-L-1}{2}\frac{\prod_{i=1}^L\prod_{j=1}^{E+L}d(q_ik_j)}{\prod_{a=1}^{N}D_a^{\nu_a}}\nonumber\\
&\equiv C \int_\Gamma \textsl{B}^\gamma\frac{\prod_{i=1}^L\prod_{j=1}^{E+L}d(q_ik_j)}{\prod_{a=1}^{N}D_a^{\nu_a}}\,.\label{notyet}
\end{align}
In the second line of Eq.~\eqref{notyet}, $\textsl{B}$ is the \textsl{Baikov polynomial} (similar to the Jacobian associated to the change of variables) and is raised to the power $\gamma=(D-E-L-1)/2$, while $C$ is an overall constant. Eq.~\eqref{notyet} can be further simplified by noticing that in general the propagators $D_a$ in the denominator usually depend linearly in the scalar products between momenta, hence one can perform a change of variables $z_a=D_a$. Notice however that the number of denominators $N$ and the the number of independent momenta scalar products $M$ are different in general: for this reason it is sufficient to add $N-M$ fake denominators $D_a$, each one raised to a certain power $\nu_a$. The original integral can be recovered by putting $\nu_a=0$ for $a=N+1,\cdots,M$. With this procedure one readily obtains the final form \eqref{baikov}, where the constant $K=C/\det A$, and $A$ is the Jacobian matrix between the scalar products and the propagators $D_a=A_a^{ij}q_ik_j+m_a^2$.


\section{An introduction on Gr\"obner bases}\label{appendixGrobner}

In this Appendix we introduce some basic tools, along with some notations and definitions, necessary to understand how to work with Gr\"obner bases. We start by calling $R=\mathbb{K}(z_1,\cdots,z_n)$ the ring over a field $\mathbb{K}$ with $n$ variables $z_1,\cdots,z_n$, with $I\subset R$ being an ideal in the ring $R$.
\begin{Definition} \textbf{Combination of polynomials}\\
Given a set of polynomials, an expression consisting of the sum of polynomial multiples of the elements in the set is called a combination.
\end{Definition}
For example, $(3z_1+{z_2}^2)(z_1+z_3)+z_1({z_1}^2+5z_2^2z_3^2)$ is a combination of the polynomials $\{ z_1+z_3,z_1^2+5z_2^2z_3^2\} $.
\begin{Definition} \textbf{Power product}\\
A power product is a polynomial which can be obtained only by multiplication of the variables.
\end{Definition}
For example, $z_1z_2^2$ is a power product.
\begin{Definition} \textbf{Term order}\\
A term order on the monomials of a ring $R$ is an order $\prec$ with the following properties:
\begin{itemize}
\item[1:] $M\prec N\Leftrightarrow MP\preccurlyeq NP$ 
\item[2:] $M\preccurlyeq MP$
\end{itemize}
\end{Definition}
For example, the following order is called a \textsl{Lexicographic order} on the variables $z_1,z_2$: 
\begin{equation}
1\prec z_1 \prec z_1^2 \prec z_1^3 \prec \cdots \prec z_2 \prec z_1z_2 \prec z_1^2z_2\prec\cdots z_2^2\prec z_1z_2^2\prec z_1^2z_2^2\prec\cdots
\end{equation}
\begin{Definition} \textbf{Initial term (or leading power product)}\\
Given a ring $R$ with a term order $\prec$, the initial term $in_\succ(f)$ of a polynomial $f\in R$ (or leading power product $LLP_\succ(f))$ is the biggest monomial in $f$ with respect to the given order, together with its coefficient.
\end{Definition}
For example, if $z_1\prec z_2$ and $f=3z_1z_2+2z_2^2$, then $in_\succ(f)=2z_2^2$.
\begin{Definition} \textbf{Initial Ideal}\\
Given an ideal $I\subset R$, $in_{\succ}(I)$ is the monomial ideal generated by the initial terms $\left\lbrace in_{\succ}(f): f\in I \right\rbrace$.
\end{Definition}
\begin{Definition} \textbf{Gr\"obner basis}\label{defgrob}\\
Given an ideal $I\subset R$ with a term order $\prec$ and a set of polynomials $G=\left\lbrace g_1,\cdots,g_s\right\rbrace \subset I$, if $in_{\succ}(I)$ is generated by $\left\lbrace in_{\succ}(g_1),\cdots,in_{\succ}(g_s)\right\rbrace$ then $G$ is called a Gr\"obner basis with respect to the order.
\end{Definition}
Definition \ref{defgrob} implies that, given a $f\in I$ with $f\neq 0$, $in_{\succ}(f)$ must be divisible by at least one of the $in_{\succ}(g_i)$ in the basis.

\

We also introduce some concepts which will be useful in our context. First, we introduce the algorithm of \textsl{division of a polynomial by a given set of polynomials}. This concept arises from the following question: given an ideal $I\subset R$, when is a $h\in R$ also $h\in I$? Suppose we have a Gr\"obner basis for $I$: if $h\in I$, then its leading power product must be divisible by one of the leading power products appearing in the basis (the one belonging to a certain polynomial $g_i$, to fix ideas). The subtraction of an appropriate multiple of $g_i$ from $h$ leads to a new $\tilde{h}$, where the old leading term has been cancelled. If $h\in I$, then also $\tilde{h}\in I$: the process can be reiterated until $0$ is obtained (hence $h\in I$ is verified). If the iterated process leads to a polynomial $\neq 0$ which cannot be further reduced, then $h\notin I$. The result of this procedure of division of a polynomial $f$ by a set of polynomials $G$ is called the \textsl{remainder} $R_G(f)$. Notice that this procedure must be finite, as the definition of a term order $\prec$ implies a descending path in the variables which has to end (this is not true for a generic set of polynomials which is not a Gr\"obner basis for $I$, as one should check infinite types of combinations of polynomials in principle).\\
Notice that the concept of remainder can be used to determine if two polynomials $f,g$ in $R$ are equal modulo $I$ (i.e. $f\sim g$ if $f+$ combination of generators of $I=g)$: in general one should check infinite combinations of the generators, but with a Gr\"obner basis for $I$:
\begin{itemize}
\item $G$ is a Gr\"obner basis for $I\Leftrightarrow R_G(f)=0\,\,\forall\,\,f\in I$;
\item If $G$ is a Gr\"obner basis for $I$, then $f,g\in R$ are equal modulo $I\Leftrightarrow R_G(f)=R_G(g)$.
\end{itemize}
With these tools, it is possible to construct a practical way which tells us how to build Gr\"obner bases.
\begin{Definition}\textbf{S-polynomial}\\
Given two polynomials $f,g$, the S-polynomial between $f$ and $g$ is defined as:
\begin{equation}
S(f,g)=\text{lcm}(\tilde{in}_{\succ}(f),\tilde{in}_{\succ}(g))\left(\frac{f}{in_\succ(f)}-\frac{g}{in_\succ(g)}\right)\,,
\end{equation}
where the notation $\tilde{in}_{\succ}(f)$ means $in_{\succ}(f)$ deprived of its coefficient.
\end{Definition}
Notice that the S-polynomial of two polynomials $f,g$ is a combination of $f$ and $g$: because of the previous observations, if a set $G=\left\lbrace g_1,\cdots,g_s\right\rbrace$ is Gr\"obner basis, then $\forall\,i\neq j$ $R_G(S(g_i,g_j))=0$. This requirement is 
the fundament of \textsl{Buchberger's Algorithm} \cite{Buchberger1965}, which, starting from a certain set of polynomials $\left\lbrace f_1,\cdots,f_t\right\rbrace$ which generate an ideal $I$, returns a Gr\"obner basis for the same ideal, fixed a certain order. 
The algorithm works as follows: given $F=\left\lbrace f_1,\cdots,f_t\right\rbrace$, it computes $S(f_1,f_2)$; if $R_F(S(f_1,f_2))=0$, it proceeds to the next couple of elements ($f_1$ and $f_3$, for instance); if $R_F(S(f_1,f_2))\neq 0$, $F$ is not a Gr\"obner 
basis: it then adds $S(f_1,f_2)$ - or better yet, its remainder $R_F(S(f_1,f_2))$ - as a new element of $F$ and starts again. Notice that the addition of the new term to $F$ ensures that in the next iteration $R_F(S(f_1,f_2)= 0$. At the end of the process, $F$ 
forms a Gr\"obner basis.

\bibliographystyle{unsrt}



\end{document}